\theoremstyle{plain}
\newtheorem{theorem}{Theorem}
\newtheorem*{theorem*}{Theorem}
\newtheorem{lemma}{Lemma}
\newtheorem*{lemma*}{Lemma}
\newtheorem{corollary}{Corollary}
\theoremstyle{definition}
\newtheorem{definition}{Definition}
\newcommand{\Union}{\bigcup}
\newcommand{\EAlph}{\ensuremath{\mathbb{A}}} 
\newcommand{\N}{\mathbb{N}} 
\newcommand{\restr}[2]{{#1}\!\!\restriction_{#2}} 
\newcommand{\supp}{\mathtt{supp}} 
\newcommand{\orb}{\mathtt{orb}} 
\newcommand{\lang}{\mathcal{L}}
\newcommand{\pref}{\mathtt{prefixes}}
\newcommand{\suff}{\mathtt{suffixes}}
\newcommand\Lext{{\cdot}}
\newcommand{\algstep}[1]{\textbf{Step} $#1$}
\newcommand{\autom}{\mathcal{A}}
\newcommand{\lder}[1]{#1^{-1}}
\newcommand{\rowincl}{\sqsubseteq}
\newcommand{\rowincls}{\sqsubset}
\newcommand{\rowunion}{\sqcup}
\newcommand{\Rowunion}{\bigsqcup}
\newcommand{\LStar}{\ensuremath{\mathtt{L}^\star}}
\newcommand{\NLStar}{\ensuremath{\mathtt{NL}^\star}}
\newcommand{\nLStar}{\ensuremath{\nu\LStar}}
\newcommand{\nNLStar}{\ensuremath{\nu\NLStar}}
\newcolumntype{C}[1]{>{\centering\arraybackslash$}m{#1}<{$}}
\newcolumntype{L}[1]{>{\raggedleft\arraybackslash$}m{#1}<{$}}
\newcolumntype{R}[1]{>{\raggedright\arraybackslash$}m{#1}<{$}}
\tikzset{trlab/.style={font=\scriptsize,inner sep=1pt,outer sep=1pt}}
\newenvironment{automaton}[1][]{
\begin{tikzpicture}[
state/.append style={inner sep=0pt,outer sep=0pt,minimum size=3.5ex},%
initial text={},->,>=stealth',shorten >=1pt,node distance=9ex,semithick,#1]
}
{\end{tikzpicture}
}
\renewcommand{\gets}{\leftarrow}
\newcommand{\lmark}[1]{%
\tikz[overlay,remember picture] 
 \node (marker-#1-a) at (0,6pt) {};%
}
\newcommand{\rmark}[1]{%
\tikz[overlay,remember picture] 
	\node (marker-#1-b) at (0,-1pt) {};%
\tikz[overlay,remember picture,inner sep=1pt]
	\node[draw,rectangle,densely dotted,fit=(marker-#1-a.center) (marker-#1-b.center)] (#1) {};%
}
\title{Learning Nominal Automata\thanks{Work partially supported by the Polish National Science Centre (NCN) grant 2012/07/E/ST6/03026, the ERC starting grant Profoundnet (679127) and a Leverhulme Prize (PLP- 2016-129).}}
\begin{document}
\toappear{}
\maketitle


\begin{abstract}
We present an Angluin-style algorithm to learn nominal automata, which are acceptors of languages over infinite (structured) alphabets.
The abstract approach we take allows us to seamlessly extend known variations of the algorithm to this new setting. In particular we can learn a subclass of nominal non-deterministic automata. An implementation using a recently developed Haskell library for nominal computation is provided for preliminary experiments. 
\end{abstract}

\category{D.1.1}{Software}{Programming Techniques}
\category{F.4.3}{Mathematical Logic and Formal Languages}{Formal Languages}
\category{I.3.2}{Artificial Intelligence}{Learning}

\keywords
Active Learning, (Non)Deterministic Finite Automata, Nominal Automata, Functional Programming

\section{Introduction}
Automata are a well established computational abstraction with a wide range of applications, including modelling and verification of (security) protocols, hardware, and software systems. In an ideal world, a model would be available before a system or protocol is deployed in order to provide ample opportunity for checking important properties that must hold and only then the actual system would be synthesized from the verified model. Unfortunately, this is not at all the reality: systems and protocols are developed and coded in short spans of time and if mistakes occur they are most likely found after deployment. In this context, it has become popular to infer or learn a model from a given system just by observing its behaviour or response to certain queries. The learned model can then be used to ensure the system is complying to desired properties or to detect bugs and design possible fixes. 

Automata learning, or regular inference \cite{Angluin87}, is a widely used technique for creating an automaton model from observations. The original algorithm \cite{Angluin87}, by Dana Angluin, works for deterministic finite automata, but since then has been extended to other types of automata \cite{AngluinC97,Aarts10,Niese03}, including Mealy machines and I/O automata, and even a special class of context-free grammars. Angluin's algorithm is sometimes referred to as {\em active learning}, because it is based on direct interaction of the learner with an oracle (``the Teacher'') that can answer different types of queries. This is in contrast with \emph{passive} learning, where a fixed set of positive and negative examples is given and no interaction with the system is possible.

In this paper, staying in the realm of active learning, we will extend Angluin's algorithm to a richer class of automata. We are motivated by situations in which a program model, besides control flow, needs to represent basic data flow, where data items are compared for equality (or for other theories such as total ordering). In these situations, values for individual symbols are typically drawn from an infinite domain and automata over~\emph{infinite alphabets} become natural models, as witnessed by a recent trend~\cite{Tomte15,DAntoni14,BojanczykKL14,BolligHLM13,Cassel16}.

One of the foundational approaches to formal language theory for infinite alphabets uses the notion of nominal sets~\cite{BojanczykKL14}. The theory of nominal sets originates from the work of Fraenkel in 1922, and they were originally used to prove the independence of the axiom of choice and other axioms. They have been rediscovered in Computer Science by Gabbay and Pitts~\cite{Pitts13}, as an elegant formalism for modeling name binding, and since then they form the basis of many research projects in the semantics and concurrency community.
In a nutshell, nominal sets are infinite sets equipped with symmetries which make them finitely representable and tractable for algorithms. We make crucial use of this feature in the development of a learning algorithm. 

Our main contributions are the following.
\begin{itemize}
	\item A generalization of Angluin's original algorithm to nominal automata. The generalization follows a generic pattern for transporting computation models from finite sets to nominal sets, which leads to simple correctness proofs and opens the door to further generalizations. The use of nominal sets with different symmetries also creates potential for generalization, e.g.~to languages with time features~\cite{BL12} or data dependencies represented as graphs~\cite{MontanariS14}. 
	\item An extension of the algorithm to nominal non-deterministic automata (nominal NFAs). To the best of our knowledge, this is the first learning algorithm for non-deterministic automata over infinite alphabets.
	It is important to note that, in the nominal setting, NFAs are strictly more expressive than DFAs.
	We learn a subclass of the languages accepted by nominal NFAs, which includes all the languages accepted by nominal DFAs.
	The main advantage of learning NFAs directly is that they can provide exponentially smaller automata when compared to their deterministic counterpart.
	This can be seen both as a generalization and as an optimization of the algorithm.
	\item An implementation using our recently developed Haskell library tailored to nominal computation -- NLambda~\cite{nlambda-paper}. Our implementation is the first non-trivial application of a novel programming paradigm of functional programming over infinite structures, which allows the programmer to rely on convenient intuitions of searching through infinite sets in finite time. 
\end{itemize}
The paper is organized as follows. In Section~\ref{sec:overview}, we present an overview of our contributions (and the original algorithm) highlighting the challenges we faced in the various steps. In Section~\ref{sec:prelim}, we revise some basic concepts of nominal sets and automata. Section~\ref{sec:nangluin} contains the core technical contributions of our paper: the new algorithm and proof of correctness.  In Section~\ref{sec:nondet}, we describe an algorithm to learn nominal non-deterministic automata. Section~\ref{sec:impl} contains a description of NLambda, details of the implementation, and results of preliminary experiments.
Section~\ref{sec:related} contains a discussion of related work. We conclude the paper with a discussion section where also future directions are presented. 

\begin{figure}[t]
\begin{codebox}
\Procname{$\proc{\LStar\ learner}$}
\li $S,E \gets \{\epsilon\}$ 
\li \Repeat
\li \While $(S, E)$ is not closed or not consistent\label{line:checks}
\li \If $(S, E)$ is not closed
\li \Then\label{line:begin-closed}
find $s_1\in S$, $a \in A$ such that
\zi \qquad $row(s_1a) \neq row(s)$, for all $s\in S$
\label{line:clos-witness}
\li $S\gets S\cup \{s_1a\}$ \label{line:addrow-clos}
\End\label{line:end-closed}
\li \If $(S, E)$ is not consistent\label{line:begin-const}
\li \Then find $s_1, s_2 \in S$, $a \in A$, and $e\in E$ such that
\zi \qquad $row(s_1) = row(s_2)$ and $\lang(s_1 a e) \neq \lang(s_2 a e)$
\label{line:cons-witness}
\li $E\gets E\cup \{ae\}$
\End
\label{line:end-const}
\li Make the conjecture $M(S,E)$
\label{line:conj}
\li \If the Teacher replies  {\bf no}, with a counter-example $t$\label{line:counter-ex}
\li \Then $S\gets S\cup \pref(t)$ \label{line:addrow-ex}
\End
\li \Until the Teacher replies {\bf yes} to the conjecture $M(S,E)$.
\li \Return $M(S,E)$
\end{codebox}
\caption{Angluin's algorithm for deterministic finite automata~\cite{Angluin87}}\label{fig:alg}
\end{figure}

\section{Overview of the approach}\label{sec:overview}

In this section, we give an overview of the work developed in the paper through examples. We will start by explaining the original algorithm for regular languages over finite alphabets, and then explain the challenges in extending it to nominal languages. 

Angluin's algorithm \LStar\ provides a procedure to learn the minimal DFA accepting a certain (unknown) language $\mathcal L$.
The algorithm has access to a \emph{teacher} which answers two types of queries:
\begin{itemize}
	\item \emph{membership queries}, consisting of a single word $w \in A^\star$, to which the teacher will reply whether $w \in \lang$ or not;
	\item \emph{equivalence queries}, consisting of a hypothesis DFA $H$, to which the teacher replies \textbf{yes} if $\lang(H) = \lang$, and \textbf{no} otherwise, providing a counterexample $w \in \lang(H) \triangle \lang$ ($\triangle$ denotes the symmetric difference of two languages).
\end{itemize}
The learning algorithm works by incrementally building an {\em observation table}, which at each stage contains partial information about the language $\lang$.
The algorithm is able to fill the table with membership queries.
As an example, and to set notation, consider the following table (over $A=\{a,b\}$). 
\begin{center}
\includegraphics{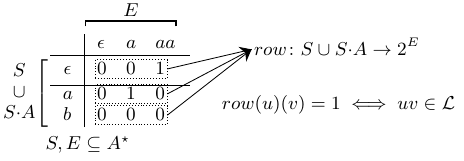}
\end{center}
This table indicates that $\mathcal L$ contains at least $aa$ and definitely does not contain the words $\epsilon, a, b, ba, baa, aaa$. Since $row$ is fully determined
by the language $\lang$, we will from now on refer to an
observation table as a pair $(S,E)$, leaving the language
$\lang$ implicit.

Given an observation table $(S,E)$ one can construct a deterministic automaton $M(S,E) = (Q,q_0,\delta,F)$ where
\begin{itemize}
\item $Q 
 = 
\{row(s)\mid s\in S\}$ is a finite set of states;
\item $F 
 =  
\{ row (s) \mid s\in S, row(s)(\epsilon)=1\} \subseteq Q$ is the set of final states;
\item  $q_0 = row(\epsilon)$ is the initial state;
\item $\delta\colon Q\times A \to Q$ is the transition function given by $\delta(row(s),a)
 = 
row(sa)$.
\end{itemize}
For this to be well-defined, we need to have $\epsilon \in S$ (for the initial state) and $\epsilon \in E$ (for final states), and for the transition function there are two crucial properties of the table that need to hold: 
closedness and consistency. An observation table \((S,E)\) is {\em closed} if for all $t\in S \Lext A$ there exists an $s \in S$ such that $row(t) = row(s)$.  An
observation table \((S,E)\) is {\em consistent} if, whenever $s_1$ and
$s_2$ are elements of $S$ such that $row(s_1) = row(s_2)$, for all $a
\in A$, $row(s_1 a) = row(s_2 a)$. 
Each time the algorithm constructs an automaton, it poses an equivalence query to the teacher.
It terminates when the answer is \textbf{yes}, otherwise it extends the table with the counterexample provided.

\subsection{Simple example of execution}
\label{sec:execution_example_original}
Angluin's algorithm is displayed in Figure~\ref{fig:alg}. Throughout this section, we will consider the language(s)
\[
	\lang_n = \{ ww \mid w \in A^\star, |w| = n \}
\]
If the alphabet $A$ is finite then $\lang_n$ is regular for any $n \in \N$, and there is a finite DFA accepting it. 

The language $\lang_1 = \{ aa , bb \}$ looks trivial, but the minimal DFA recognizing it has as many as 5 states. Angluin's algorithm will terminate in (at most) 5 steps. We illustrate some relevant ones.

\paragraph{\algstep{1}.}
We start from $S,E=\{\epsilon\}$, and we fill the entries of the table below by asking membership queries for $\epsilon$, $a$ and $b$. The table is closed and consistent, so we construct the hypothesis $\autom_1$.
\begin{center}
\begin{tabular}{m{.4\linewidth}m{.4\linewidth}}
\begin{tabular}{L{2em} L{1ex} | C{1ex} @{}m{0pt}@{}}
& & \epsilon &\\
\cline{2-3}
& \epsilon & 0 &\\[1ex]
\cline{2-3}
& a & 0 &\\[.5ex]
& b & 0 &
\end{tabular}
&
\begin{gather*}
\autom_1 = 	
\begin{automaton}[baseline=-.5ex]
\node[initial,state] (q0) {$q_0$};
\path (q0) edge[loop right] node[trlab,right] {$a,b$} (q0); 	
\end{automaton}	
\\
q_0 = row(\epsilon) = \{ \epsilon \mapsto 0 \}
\end{gather*}
\end{tabular}
\end{center}
The Teacher replies \textbf{no} and gives the counterexample $aa$, which is in $\lang_1$ but it is not accepted by $\autom_1$. Therefore, line 12 of the algorithm is triggered and we set $S \gets S \cup \{a,aa\}$.
\paragraph{\algstep{2}.}
\label{alg:step2}
The table becomes the one on the left below. It is closed, but not consistent: rows $\epsilon$ and $a$ are identical, but appending $a$ leads to different rows, as depicted. Therefore, line 9 is triggered and an extra column $a$, highlighted in red, is added. The new table is closed and consistent and a new hypothesis $\autom_2$ is constructed.
\begin{center}
\begin{tabular}{m{.2\linewidth}m{.2\linewidth}m{.4\linewidth}}
\centering	
\begin{tabular}{R{3ex}|C{1ex}}
& \epsilon \\
\hline
\lmark{r1} \epsilon & 0 \rmark{r1} \\
\lmark{r2} a & 0 \rmark{r2} \\
\lmark{r3} aa & 1 \rmark{r3} \\
\hline
b & 0 \\
ab & 0 \\
aaa & 0 \\
aab & 0 
\end{tabular}
&
\centering
\begin{tabular}{R{3ex}|C{.5ex}>{\color{red}}C{.5ex}}
& \epsilon & a \\
\hline
\epsilon & 0 & 0\\
a & 0 & 1 \\
aa & 1 & 0 \\
\hline
b & 0  & 0\\
ab & 0 & 0 \\
aaa & 0 & 0 \\
aab & 0 & 0
\end{tabular}
&
\begin{gather*}
\autom_2 = 
\hspace{-10pt}
\begin{gathered}
\begin{automaton}
\node[initial,state] (q0) {$q_0$};	
\node[state,right of=q0] (q1) {$q_1$};	
\node[state,accepting,below of=q1] (q2) {$q_2$};	
\path 
(q0) edge[loop below] node[trlab,below] {$b$} (q0)
(q0) edge[bend right] node[trlab,above] {$a$} (q1)
(q1) edge[bend right] node[trlab,above] {$b$} (q0)
(q1) edge node[trlab,right] {$a$} (q2)
(q2) edge node[trlab,midway,fill=white] {$a, b$} (q0);
\end{automaton}
\end{gathered}	
\end{gather*}
\begin{tikzpicture}[overlay,remember picture,->,>=stealth',shorten >=1pt]
	\path (r1.west) edge[in=160,out=190,looseness=2] node[trlab,left] {$a$} (r2.west);		
	\path (r2.east) edge[in=20,out=-10,looseness=2] node[trlab,right] {$a$} (r3.east);		
\end{tikzpicture}
\end{tabular}	
\end{center}
The Teacher again replies \textbf{no} and gives the counterexample $bb$, which should be accepted by $\autom_2$ but it is not. Therefore we put $S \gets S \cup \{b,bb\}$.

\paragraph{\algstep{3}.}
\label{alg:step3}
The new table is the one on the left. It is closed, but $\epsilon$ and $b$ violate consistency, when $b$ is appended. Therefore we add the column $b$ and we get the table on the right, which is closed and consistent. The new hypothesis is $\autom_3$.
\begin{flushleft}
\begin{tabular}{m{.25\linewidth}m{.25\linewidth}m{.2\linewidth}}
\raggedright
\begin{tabular}{R{3ex}|C{.5ex}C{.5ex}}
& \epsilon & a \\
\hline
\lmark{r4} \epsilon & 0 & 0 \rmark{r4} \\
a & 0 & 1 \\
aa & 1 & 0 \\
\lmark{r5} b & 0 & 0 \rmark{r5} \\
\lmark{r6} bb & 1 & 0 \rmark{r6} \\
\hline
ab & 0 & 0 \\
aaa & 0 & 0 \\
aab & 0 & 0 \\
ba & 0 & 0 \\
bba & 0 & 0 \\
bbb & 0 & 0 \\
\end{tabular}	
&
\centering
\begin{tabular}{R{3ex}|C{.5ex}C{.5ex}>{\color{red}}C{.5ex}}
& \epsilon & a & b \\
\hline
\epsilon & 0 & 0 & 0 \\
a & 0 & 1 & 0 \\
aa & 1 & 0 & 0 \\
b & 0 & 0 & 1\\
bb & 1 & 0 & 0 \\
\hline
ab & 0 & 0 & 0 \\
aaa & 0 & 0 & 0 \\
aab & 0 & 0 & 0 \\
ba & 0 & 0 & 0 \\
bba & 0 & 0 & 0 \\
bbb & 0 & 0 & 0 \\
\end{tabular}	
&
\raggedright
\begin{gather*}
\autom_3 = 
\hspace{-10pt}
\begin{gathered}
\begin{automaton}
\node[initial,state] (q0) {$q_0$};	
\node[state,right of=q0] (q1) {$q_1$};	
\node[state,below of=q0] (q2) {$q_2$};	
\node[state,accepting,right of=q2] (q3) {$q_3$};	
\path 
(q0) edge[bend left] node[trlab,above] {$a$} (q1)
(q1) edge[bend left] node[trlab,above] {$b$} (q0)
(q1) edge node[trlab,right] {$a$} (q3)
(q0) edge[bend right] node[trlab,left] {$b$} (q2)
(q2) edge[bend right] node[trlab,left] {$a$} (q0)
(q2) edge node[trlab,above] {$b$} (q3)
(q3) edge node[trlab,fill=white,midway] {$a,b$} (q0);
\end{automaton}
\end{gathered}
\end{gather*}
\begin{tikzpicture}[overlay,remember picture,->,>=stealth',shorten >=1pt]
	\path (r4.west) edge[in=160,out=190,looseness=1] node[trlab,left] {$b$} (r5.west);		
	\path (r5.east) edge[in=20,out=-10,looseness=2] node[trlab,right] {$b$} (r6.east);		
\end{tikzpicture}
\end{tabular}	
\end{flushleft}
The Teacher replies \textbf{no} and provides the counterexample $babb$, so $S \gets S \cup \{ba,bab\}$.

\paragraph{\algstep{4}.}
One more step brings us to the correct hypothesis $\autom_4$ (details are omitted).
\[
\autom_4 = 
\begin{gathered}
\begin{automaton}
\node[initial,state] (q0) {$q_0$};	
\node[state,above right= 15pt of q0] (q1) {$q_1$};	
\node[state,below right= 15pt of q0] (q2) {$q_2$};	
\node[state,accepting,right= 30pt of q0] (q3) {$q_3$};	
\node[state,right of=q3] (q4) {$q_4$};	
\path 
(q0) edge node[trlab,left] {$a$} (q1)
(q1) edge node[trlab,above] {$a$} (q3)
(q0) edge node[trlab,left] {$b$} (q2)
(q2) edge node[trlab,below] {$b$} (q3)
(q1) edge[bend left] node[trlab,above] {$b$} (q4)
(q3) edge node[trlab,above] {$a,b$} (q4)
(q2) edge[bend right] node[trlab,below] {$a$} (q4)
(q4) edge[loop right] node[trlab,right] {$a,b$} (q4);
\end{automaton}
\end{gathered}
\]
\subsection{Learning nominal languages}
\label{ssec:nom-learning}
Consider now an infinite alphabet $A = \{a,b,c,d,\dots\}$. The language $\lang_1$ becomes $\{aa,bb,cc,dd,\dots\}$. Classical theory of finite automata does not apply to this kind of languages, but one may draw an infinite deterministic automaton that recognizes $\lang_1$ in the standard sense:
\[
\autom_5 = 
\begin{gathered}
\begin{automaton}
\node[initial,state] (q0) {$q_0$};	
\node[state,above right= 25pt of q0] (qa) {$q_a$};	
\node[state, right= 15pt of q0] (qb) {$q_b$};	
\node[state,accepting,right= 15pt of qb] (q3) {$q_3$};	
\node[state,right of=q3] (q4) {$q_4$};	
\node[below right = 25pt of q0] (qdots) {$\vdots$};
\path 
(q0) edge node[trlab,left] {$a$} (qa)
(qa) edge node[trlab,above] {$a$} (q3)
(q0) edge node[trlab,below] {$b$} (qb)
(qb) edge node[trlab,above] {$b$} (q3)
(q0) edge node[trlab] {} (qdots)
(qdots) edge node[trlab] {} (q3)
(qa) edge[bend left] node[trlab,above] {$\neq a$} (q4)
(q3) edge node[trlab,above] {$A$} (q4)
(qb) edge[bend right] node[trlab,below] {$\neq b$} (q4)
(qdots) edge[bend right] node[trlab] {} (q4)
(q4) edge[loop right] node[trlab,right] {$A$} (q4);
\end{automaton}
\end{gathered}
\]
where $\xrightarrow{A}$ and $\xrightarrow{\neq a}$ stand for the infinitely-many transitions labelled by elements of $A$ and $A \setminus \{a\}$, respectively.
This automaton is infinite, but it can be finitely presented in a variety of ways, for example:
\begin{align}\label{eq:aut}
\begin{gathered}
\begin{automaton}
\node[initial,state] (q0) {$q_0$};	
\node[state, right= 15pt of q0] (qx) {$q_x$};	
\node[state,accepting,right= 15pt of qx] (q3) {$q_3$};	
\node[state,right of=q3] (q4) {$q_4$};	
\node[above = 0pt of qx] (qa) {\scriptsize{$\forall x\!\in\! A$}};
\path 
(q0) edge node[trlab,above] {$x$} (qx)
(qx) edge node[trlab,above] {$x$} (q3)
(q3) edge node[trlab,above] {$A$} (q4)
(qb) edge[bend right] node[trlab,below] {$\neq x$} (q4)
(q4) edge[loop right] node[trlab,right] {$A$} (q4);
\end{automaton}
\end{gathered}
\end{align}
One can formalize the quantifier notation above (or indeed the ``dots'' notation above that) in  several ways. A popular solution is to consider finite {\em register automata}~\cite{kamFranc,DemriL09}, i.e., finite automata equipped with a finite number of registers where alphabet letters can be stored and later compared for equality. Our language $\lang_1$ is recognized by a simple automaton with four states and one register. The problem of learning registered automata has been successfully attacked before~\cite{HowarSJC12}.

In this paper, however, we will consider nominal automata~\cite{BojanczykKL14} instead. These automata ostensibly have infinitely many states, but the set of states can be finitely presented in a way open to effective manipulation. More specifically, in a nominal automaton the set of states is subject to an action of permutations of a set $\EAlph$ of {\em atoms}, and it is finite up to that action. For example, the set of states of $\autom_5$ is:
\[
	\{q_0,q_3,q_4\} \cup \{q_a\mid a\in A\}
\]
and it is equipped with a canonical action of permutations $\pi \colon \EAlph \to \EAlph$ that maps every $q_a$ to $q_{\pi_a}$, and leaves $q_0$, $q_3$ and $q_4$ fixed. Technically speaking, the set of states has four {\em orbits} (one infinite orbit and three fixed points) of the action of the group of permutations of $\EAlph$.
Moreover, it is required that in a nominal automaton the transition relation is {\em equivariant}, i.e., closed under the action of permutations. The automaton $\autom_5$ has this property: for example, it has a transition $q_a\stackrel{a}{\longrightarrow} q_3$, and for any $\pi \colon \EAlph \to \EAlph$ there is also a transition $\pi(q_a)=q_{\pi(a)}\stackrel{\pi(a)}{\longrightarrow} q_3=\pi(q_3)$.

Nominal automata with finitely many orbits of states are equi-expressive with finite register automata~\cite{BojanczykKL14}, but they have an important theoretical advantage: they are a direct reformulation of the classical notion of finite automaton, where one replaces finite sets with orbit-finite sets and functions (or relations) with equivariant ones. A research programme advocated in~\cite{BojanczykKL14,BBKL12} is to transport various computation models, algorithms and theorems along this correspondence. This can often be done with remarkable accuracy, and our paper is a witness to this. Indeed, as we shall see, nominal automata can be learned with an algorithm that is almost a verbatim copy of the classical Angluin's one.

Indeed, consider applying Angluin's algorithm to our new language $\lang_1$. The key idea is to change the basic data structure: our observation table $(S,E)$ will be such that \emph{$S$ and $E$ are equivariant subsets of $A^\star$}, i.e., they are closed under the canonical action of atom permutations. In general, such a table has {\em infinitely many rows and columns}, so the following aspects of the algorithm seem problematic:
\begin{description}
	\item[line \ref{line:checks}:] closedness and consistency tests range over infinite sets;
	\item[line \ref{line:clos-witness} and \ref{line:cons-witness}:] finding witnesses for closedness or consistency violations potentially require checking all infinitely many rows;
	\item[line \ref{line:addrow-ex}:] every counterexample $t$ has only finitely many prefixes, so it is not clear how one would construct an infinite set $S$ in finite time. However, an infinite $S$ is necessary for the algorithm to ever succeed, because no finite automaton recognizes $\lang_1$. 
\end{description}
At this stage, we need to observe that due to equivariance of $S$, $E$ and $\lang_1$, the following crucial properties hold:
\begin{description}
	\item[(P1)] the sets $S$, $S \Lext A$ and $E$ admit a \emph{finite} representation up to permutations;
	\item[(P2)] the function $row$ is such that $row(\pi(s))(\pi(e)) = row(s)(e)$, for all $s \in S$ and $e \in E$, so the observation table admits a finite symbolic representation.
\end{description}
Intuitively, checking closedness and consistency, and finding a witness for their violations, can be done effectively on the representations up to permutations \textbf{(P1)}. This is sound, as $row$ is invariant w.r.t.\ permutations \textbf{(P2)}.

We now illustrate these points through a few steps of the algorithm for $\lang_1$.
\paragraph{\algstep{1'}:} We start from $S,E = \{ \epsilon \}$. We have $S \Lext A = A$, which is infinite but admits a finite representation. In fact, for any $a \in A$, we have $A = \{ \pi(a) \mid \text{$\pi$ is a permutation}\}$. Then, by \textbf{(P2)}, $row(\pi(a))(\epsilon) = row(a)(\epsilon) = 0$, for all $\pi$, so the first table can be written as:
\begin{center}
\begin{tabular}{@{\qquad\qquad}m{.3\linewidth}m{.3\linewidth}}
\begin{tabular}{L{1ex} | C{1ex}}
& \epsilon \\
\hline
\epsilon & 0 \\
\hline
a
 & 0
\end{tabular}
&
\begin{gather*}
\autom_1' = 
\begin{gathered}
\begin{automaton}
\node[initial,state] (q0) {$q_0$};
\path (q0) edge[loop right] node[trlab,right] {$A$} (q0); 	
\end{automaton}	
\end{gathered}
\end{gather*}
\end{tabular}
\end{center}
It is closed and consistent. Our hypothesis is $\autom_1'$, where $\delta_{\autom_1'}(row(\epsilon),x) = row(x) = q_0$, for all $x \in A$. As in \algstep{1}, the Teacher replies with the counterexample $aa$.
\paragraph{\algstep{2'}.} 
By equivariance of $\lang_1$, the counterexample tells us that \emph{all} words of length 2 with two repeated letters are accepted. Therefore we extend $S$ with the (infinite!) set of such words. The new symbolic table is:
\begin{center}
\begin{tabular}{R{3ex}|C{1ex}}
& \epsilon \\
\hline
\epsilon & 0 \\
a & 0 \\
aa & 1 \\
\hline
ab & 0 \\
aaa & 0 \\
aab & 0 
\end{tabular}
\end{center}
The lower part stands for elements of $S \Lext A$. For instance, $ab$ stands for words obtained by appending a fresh letter to words of length 1 (row $a$). It can be easily verified that all cases are covered. Notice that the table is different from that of \algstep{2}: a single $b$ is not in the lower part, because it can be obtained from $a$ via a permutation. The table is closed. 

Now, for consistency we need to check $row(\epsilon x) = row(ax)$, for all $a,x \in A$. Again, by \textbf{(P2)}, it is enough to consider rows of the table above. Consistency is violated, because $row(a) \neq row(aa)$. We found a ``symbolic'' witness $a$ for such violation. In order to fix consistency, while keeping $E$ equivariant, we need to add columns for all $\pi(a)$. The resulting table is 
\begin{center}
\begin{tabular}{R{3ex}|C{1ex}C{1ex}C{1ex}C{1ex}C{1ex}}
& \epsilon & a & b & c & \dots \\
\hline
\epsilon & 0 & 0 & 0 & 0 & \dots \\
a & 0 & 1 & 0 & 0 & \dots \\
aa & 1 & 0 & 0 & 0 & \dots \\
\hline
ab & 0 & 0 & 0 & 0 & \dots \\
aaa & 0 & 0 & 0 & 0 & \dots \\
aab & 0 & 0 & 0 & 0 & \dots
\end{tabular}
\end{center}
where non-specified entries are 0. Only finitely many entries of the table are relevant: $row(s)$ is fully determined by its values on letters in $s$ and on just one letter not in $s$. For instance, we have $row(a)(a) = 1$ and $row(a)(a') = 0$, for all $a' \in A \setminus \{ a \}$. The table is trivially consistent.

Notice that this step encompasses both \algstep{2} and $3$, because the rows $b$ and $bb$ added by \algstep{2} are already represented by $a$ and $aa$. The hypothesis automaton is
\[
\autom_2' = 
\begin{automaton}[baseline=-.5ex]
\node[initial,state] (q0) {$q_0$};	
\node[state,right of=q0] (qx) {$q_x$};	
\node[state,accepting,right of=qx] (q2) {$q_2$};	
\node[right of=q2] (dummy) {$\forall x \in A$};
\path 
(q0) edge[bend left] node[trlab,above] {$x$} (qx)
(qx) edge[bend left] node[trlab,above] {$\neq x$} (q0)
(qx) edge node[trlab,above] {$x$} (q2)
(q2) edge[bend right=50] node[trlab,below] {$A$} (q0);
\end{automaton}	
\]
This is again incorrect, but one additional step will give the correct hypothesis automaton, shown earlier in (\ref{eq:aut}).

\subsection{Generalization to non-deterministic automata}
Since our extension of Angluin's \LStar\ algorithm stays close to her original development, exploring extensions of other variations of \LStar\ to the nominal setting can be done in a systematic way. We will show  how to extend the algorithm \NLStar\ for learning NFAs by Bollig et al. \cite{BolligHKL09}. This has practical implications: it is well-known that NFAs are exponentially more succinct than DFAs. This is true also in the nominal setting. However, there are challenges in the extension that require particular care.
\begin{itemize}
	\item Nominal NFAs are strictly more expressive than nominal DFAs. We will show that the nominal version of \NLStar\ terminates for all nominal NFAs that have a corresponding nominal DFA and, more surprisingly, that it is capable of learning some languages that are not accepted by nominal DFAs.  
	\item Language equivalence of nominal NFAs is undecidable. This does not affect the correctness proof, as it assumes a teacher which is able to answer equivalence queries accurately. For our implementation, we will describe heuristics that produce correct results in many cases.
\end{itemize}
For the learning algorithm the power of non-determinism means that we can make some shortcuts during learning: if we want to make the table closed, we were previously required to find an equivalent row in the upper part; now we may find a sum of rows which, together, are equivalent to an existing row. This means that in some cases fewer rows will be added for closedness.
\section{Preliminaries}
\label{sec:prelim}
We recall the notions of nominal sets, nominal automata and nominal regular languages (see \cite{BojanczykKL14} for a detailed account).

Let $\EAlph$ be a countable set and let $Perm(\EAlph)$ be the set of \emph{permutations on $\EAlph$}, i.e., the bijective functions $\pi \colon \EAlph \to \EAlph$. Permutations form a group where the identity permutation $id$ is the unit element, inverse is functional inverse and multiplication is function composition.

A \emph{nominal set}~\cite{Pitts13} is a set $X$ plus a function $\cdot \colon Perm(\EAlph) \times X \to X$, interpreting permutations over $X$. Such function must be a \emph{group action} of $Perm(\EAlph)$, i.e., it must satisfy $id \cdot x = x$ and $\pi \cdot ( \pi' \cdot x) = ( \pi \circ \pi') \cdot x$. We say that a finite $A \subseteq \EAlph$ \emph{supports} $x \in X$ whenever, for all $\pi$ acting as the identity on $A$, we have $\pi \cdot x = x$. In other words, permutations that only move elements outside $A$ do not affect $x$. The support of $x \in X$, denoted $\supp(x)$, is the smallest finite set supporting $x$.
We require nominal sets to have \emph{finite support}, meaning that $\supp(x)$ exists for all $x \in X$.

The \emph{orbit} $\orb(x)$ of $x \in X$ is the set of elements in $X$ reachable from $x$ via permutations, explicitly 
\[
\orb(x) = \{ \pi \cdot x \mid \pi \in Perm(\EAlph) \}
\]
Then $X$ is \emph{orbit-finite} whenever it is a union of finitely many orbits.

Given a nominal set $X$, a subset $Y \subseteq X$ is \emph{equivariant} if it is preserved by permutations, i.e., $\pi \cdot y \in Y$, for all $y \in Y$. In other words, $Y$ is the union of orbits of $X$. This definition extends to the notion of an equivariant relation $R \subseteq X \times Y$, by setting $\pi \cdot (x,y) = (\pi \cdot x , \pi \cdot y)$, for $(x,y) \in R$; similarly for relations of greater arity. The \emph{dimension} of nominal set $X$ is the maximal size of $\supp(x)$, for any $x \in X$. Every orbit-finite set has finite dimension. 

We define $\EAlph^{(k)} = \{ (a_1,\dots,a_k) \mid \text{$a_i \neq a_j$ for $i \neq j$}\}$. For every single-orbit nominal set $X$ with dimension $k$, there is a surjective equivariant map
\[
	f_X \colon \EAlph^{(k)} \to X \enspace .
\]
This map can be used to get an upper bound for the number of orbits of $X_1 \times \dots \times X_n$, for $X_i$ a nominal set with $l_i$ orbits and dimension $k_i$. Suppose $O_i$ is an orbit of $X_i$. Then we have a surjection
\[
	\EAlph^{(k_i)} \times \dots \times \EAlph^{(k_n)}
	\xrightarrow{f_{O_1} \times \dots \times f_{O_n}} O_1 \times \dots \times O_n 
\]
stipulating that the codomain cannot have more orbits than the domain. Let $f_\EAlph(\{k_i\})$ denote the number of orbits of $\EAlph^{(k_1)} \times \dots \times \EAlph^{(k_n)}$, for any finite sequence of natural numbers $\{k_i\}$. We can form at most $l = l_1l_2 \dots l_n$ tuples of the form $O_1 \times \dots \times O_n$, so $X_1 \times \dots \times X_n$ has at most $lf_\EAlph(k_1,\dots,k_n)$ orbits.

For $X$ single-orbit, the \emph{local symmetries} are defined by the group 
$
	\{ g \in S_k \mid f(x_1, \ldots, x_k) = f(x_{g(1)}, \ldots, x_{g(k)}) \text{ for all } x_i \in X\},
$ 
where $k$ is the dimension of $X$ and $S_k$ is the \emph{symmetric group} of permutations over $k$ distinct elements.

NFAs on sets have a finite state space. We can define \emph{nominal} NFAs, with the requirement that the state space is orbit-finite and the transition relation is equivariant. A nominal NFA is a tuple $(Q,A,Q_0,F,\delta)$, where:
\begin{itemize}
	\item $Q$ is an orbit-finite nominal set of \emph{states};
	\item $A$ is an orbit-finite nominal alphabet;
	\item $Q_0,F \subseteq Q$ are equivariant subsets of \emph{initial} and \emph{final states};
	\item $\delta \subseteq Q \times A \times Q$ is an equivariant \emph{transition relation}.
\end{itemize}
A nominal DFA is a special case of nominal NFA where $Q_0 = \{q_0\}$ and the transition relation is an equivariant function $\delta \colon Q \times A \to Q$. Equivariance here can be rephrased as requiring $\delta(\pi \cdot q, \pi \cdot a) = \pi \cdot \delta(q,a)$.
In most examples we take the alphabet to be $A = \EAlph$, but it can be any orbit-finite nominal set. For instance, $A = Act \times \EAlph$, where $Act$ is a finite set of actions, represents actions $act(x)$ with one parameter $x \in \EAlph$ (actions with arity $n$ can be represented via $n$-fold products of $\EAlph$).

A language $\lang$ is \emph{nominal regular} if it is recognized by a nominal DFA. The theory of nominal regular languages recasts the classical one using nominal concepts. A nominal Myhill-Nerode-style \emph{syntactic congruence} is defined: $w,w' \in A^\star$ are equivalent w.r.t.\ $\lang$, written $w \equiv_\lang w'$, whenever
\[
	wv \in \lang \iff w'v \in \lang
\]
for all $v \in A^\star$. This relation is equivariant and the set of equivalence classes $[w]_\lang$ is a nominal set.

\begin{theorem}[Myhill-Nerode theorem for nominal sets~\cite{BojanczykKL14}]
Let $\lang$ be a regular nominal language. The following conditions are equivalent:
\begin{enumerate}
	\item the set of equivalence classes of $\equiv_\lang$ is orbit-finite;
	\item $\lang$ is recognized by a nominal DFA.
\end{enumerate}
\end{theorem}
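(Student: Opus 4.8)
The statement is the nominal reformulation of the classical Myhill--Nerode theorem, and the plan is to transport the classical proof almost verbatim, replacing \emph{finite} by \emph{orbit-finite} and ordinary functions by equivariant ones; the only genuinely nominal work lies in checking that the constructions respect the group action and preserve orbit-finiteness. I will prove the two implications separately.

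For $(2) \Rightarrow (1)$, suppose $\lang$ is recognised by a nominal DFA $M = (Q,A,q_0,F,\delta)$ with $Q$ orbit-finite. First I extend $\delta$ to words, obtaining the reachability map $r \colon A^\star \to Q$, $r(w) = \hat\delta(q_0,w)$, and observe that $r$ is equivariant because $q_0$ and $\delta$ are and $A^\star$ carries the pointwise action. The key computation is that $r(w) = r(w')$ implies $w \equiv_\lang w'$: by determinism $\hat\delta(q_0,wv) = \hat\delta(r(w),v) = \hat\delta(r(w'),v) = \hat\delta(q_0,w'v)$ for every $v$, and acceptance in $M$ coincides with membership in $\lang$. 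Hence the assignment $r(w) \mapsto [w]_\lang$ is a well-defined equivariant surjection from the reachable states $r(A^\star)$ onto the set of $\equiv_\lang$-classes. Since $r(A^\star)$ is an equivariant subset of the orbit-finite set $Q$, it is itself orbit-finite, and as equivariant images of orbit-finite sets are again orbit-finite, so is the set of $\equiv_\lang$-classes.

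For $(1) \Rightarrow (2)$, I build the nominal Nerode automaton directly on the equivalence classes. Take $Q = \{[w]_\lang \mid w \in A^\star\}$, which is a nominal set (as already noted before the theorem) and orbit-finite by hypothesis, with initial state $[\epsilon]_\lang$, final states $F = \{[w]_\lang \mid w \in \lang\}$, and transition $\delta([w]_\lang, a) = [wa]_\lang$. The work here is purely checking well-behavedness: $F$ is well defined and equivariant because $\equiv_\lang$ and $\lang$ are equivariant and $\equiv_\lang$-equivalent words agree on membership (take $v = \epsilon$); $\delta$ is well defined because $w \equiv_\lang w'$ forces $wa \equiv_\lang w'a$ (apply the definition to words of the form $av$), is a genuine function, and is equivariant since $\delta(\pi \cdot [w]_\lang, \pi \cdot a) = [\pi \cdot (wa)]_\lang = \pi \cdot \delta([w]_\lang, a)$; finally each state $[w]_\lang$ is finitely supported, with $\supp([w]_\lang) \subseteq \supp(w)$. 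A trivial induction gives $\hat\delta([\epsilon]_\lang, w) = [w]_\lang$, so $w$ is accepted iff $[w]_\lang \in F$ iff $w \in \lang$, and this nominal DFA recognises $\lang$.

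I expect the only non-routine ingredient to be the nominal closure property used in $(2) \Rightarrow (1)$: that an equivariant surjective image of an orbit-finite nominal set is again orbit-finite (equivalently, that an equivariant quotient of an orbit-finite set is orbit-finite). Everything else --- equivariance and well-definedness of $r$, $\delta$ and $F$, together with the finite-support bounds --- is a mechanical check, which is precisely the sense in which the nominal argument mirrors the classical one.
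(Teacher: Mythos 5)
Your proof is correct, but note that the paper does not actually prove this theorem --- it is imported by citation from the nominal-automata literature (Boja\'nczyk, Klin, Lasota), so there is no internal proof to compare against. What you wrote is the standard Myhill--Nerode argument transported to nominal sets (equivariant reachability map and quotient construction, plus the fact that equivariant images and equivariant subsets of orbit-finite sets are orbit-finite), which is precisely the approach of the cited source and matches the paper's overall philosophy of replacing ``finite'' by ``orbit-finite'' and functions by equivariant ones.
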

Unlike what happens for ordinary regular languages, nominal NFAs and nominal DFAs \emph{are not equi-expressive}. Here is an example of a language accepted by a nominal NFA, but not by a nominal DFA:
\[
	\lang_{eq} = \{ a_1 \dots a_n \mid a_i = a_j, \text{for some $i < j \in \{1,\dots,n\}$} \}
\]
In the theory of nominal regular languages, several problems are decidable: language inclusion and minimality test for nominal DFAs. Moreover, orbit-finite nominal sets can be finitely-represented, and so can be manipulated by algorithms. This is the key idea underpinning our implementation.

\subsection{Different atom symmetries}
\label{sec:other-symms}

An important advantage of nominal set theory as considered in~\cite{BojanczykKL14} is that it retains most of its properties when the structure of atoms $\EAlph$ is replaced with an arbitrary infinite relational structure subject to a few model-theoretic assumptions. An example alternative structure of atoms is the total order of rational numbers $(\mathbb{Q},<)$, with the group of monotone bijections of $\mathbb{Q}$ taking the role of the group of all permutations. The theory of nominal automata remains similar, and an example nominal language over the atoms $(\mathbb{Q},<)$ is:
\[
	 \{ a_1 \dots a_n \mid a_i \leq a_j, \text{for some $i < j \in \{1,\dots,n\}$} \}
\]
which is recognized by a nominal DFA over those atoms.

To simplify the presentation, in this paper we concentrate on the ``equality atoms'' only. Also our implementation of nominal learning algorithms is restricted to equality atoms. However, both the theory and the implementation can be generalized to other atom structures, with the ``ordered atoms'' $(\mathbb{Q},<)$ as the simplest other example. We leave the details of this for a future extended version of this paper.
\section{Angluin's algorithm for nominal DFAs}
\label{sec:nangluin}

In our algorithm, we will assume a teacher as described at the start of Section~\ref{sec:overview}. In particular, the teacher is able to answer membership queries and equivalence queries, now in the setting of  nominal languages.
We fix a target language $\lang$, which is assumed to be a nominal regular language.

The learning algorithm for nominal automata, $\nLStar$, will be very similar to $\LStar$ in Figure~\ref{fig:alg}. In fact, we only change the following lines:
\begin{gather}\label{eq:changes}
\begin{array}{ll}
	6' &S \gets S \cup \orb(sa) \\
	9' &E \gets E \cup \orb(ae) \\
	12' &S \gets S \cup \pref(\orb(t))
\end{array}
\end{gather}
The basic data structure is an \emph{observation table} $(S,E,T)$ where $S$ and $E$ are orbit-finite subsets of $A^\star$ and $T : S \cup S \Lext A \times E \to 2$ is an equivariant function defined by $T(se) = \lang(se)$ for each $s \in S \cup S \Lext A$ and $e \in E$.
Since $T$ is determined by $\lang$ we omit it from the notation.
Let $row : S \cup S \Lext A \to 2^E$ denote the curried counterpart of $T$. Let $u \sim v$ denote the relation $row(u) = row(v)$.

\begin{definition}
	The table is called \emph{closed} if for each $t \in S \Lext A$ there is a $s \in S$ with $t \sim s$.
	The table is called \emph{consistent} if for each pair $s_1, s_2 \in S$ with $s_1 \sim s_2$ we have $s_1 a \sim s_2 a$ for all $a \in A$.
\end{definition}

The above definitions agree with the abstract definitions given in \cite{JacobsSilva14} and we may use some of their results implicitly.
The intuition behind the definitions is as follows. Closedness assures us that for each state we have a successor state for each input. Consistency assures us that each state has at most one successor for each input.
Together it allows us to construct a well-defined minimal automaton from the observations in the table.

The algorithm starts with a trivial observation table and tries to make it closed and consistent by adding orbits of rows and columns, filling the table via membership queries.
When the table is closed and consistent it constructs a hypothesis automaton and poses an equivalence query.

The pseudocode for the nominal version is the same as listed in Figure~\ref{fig:alg}, modulo the changes displayed in (\ref{eq:changes}).
However, we have to take care to ensure that all manipulations and tests on the (possibly) infinite sets $S, E$ and $A$ terminate in finite time.
We refer to \cite{BojanczykKL14} and \cite{Pitts13} for the full details on how to represent these structures and provide a brief sketch here.
The sets $S, E, A$ and $S \Lext A$ can be represented by choosing a representative for each orbit.
The function $T$ in turn can be represented by cells $T_{i,j} : \orb(s_i) \times \orb(e_j) \to 2$ for each representative $s_i$ and $e_j$.
Note, however, that the product of two orbits may consist of several orbits, so that $T_{i,j}$ is not a single boolean value.
Each cell is still orbit-finite and can be filled with only finitely many membership queries.
Similarly the curried function $row$ can be represented by a finite structure.

To check whether the table is closed, we observe that if we have a corresponding row $s \in S$ for some $t \in S \Lext A$, this holds for any permutation of $t$.
Hence it is enough to check the following: for all representatives $t \in S \Lext A$ there is a representative $s \in S$ with $row(t) = \pi \cdot row(s)$ for some permutation $\pi$. Note that we only have to consider finitely many permutations, since the support is finite and so we can decide this property. Furthermore if the property does not hold, we immediately find a witness represented by $t$.

Consistency is a bit more complicated, but it is enough to consider the set of inconsistencies, $\{(s_1, s_2, a, e) \mid row(s_1) = row(s_2) \wedge row(s_1 a)(e) \neq row(s_2 a)(e)\}$.
It is an equivariant subset of $S \times S \times A \times E$ and so it is orbit-finite.
Hence we can decide emptiness and obtain representatives if it is non-empty.

Constructing the hypothesis happens in the same way as before (Section~\ref{sec:overview}), where we note the state space is orbit-finite since it is a quotient of $S$.
Moreover the function $row$ is equivariant, so all structure ($Q_0$, $F$ and $\delta$) is equivariant as well.

The representation given above is not the only way to represent nominal sets.
For example, first-order definable sets can be used as well \cite{nlambda-paper}.
From now on we assume to have set theoretic primitives so that each line in Figure~\ref{fig:alg} is well defined.

\subsection{Correctness}
To prove correctness we only have to prove that the algorithm terminates, that is, only finitely many hypotheses will be produced. Correctness follows trivially from termination since the last step of the algorithm is an equivalence query to the teacher inquiring whether an hypothesis automaton accepts the target language.  
We start out by listing some facts about observation tables.
\begin{lemma}
	\label{lem:row_equiv}
	\label{lem:nerode_approx}
	The relation $\sim$ is an equivariant equivalence relation.
	Furthermore, for all $u, v \in S$ we have that $u \equiv_\lang v$ implies $u \sim v$.
\end{lemma}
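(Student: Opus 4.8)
The statement splits into three claims, of which only two carry content, and the plan is to dispatch the trivial one first and then treat equivariance and the Myhill--Nerode comparison separately. By definition $u \sim v$ holds iff $row(u) = row(v)$, so $\sim$ is the kernel of the map $row$; being an equality relation pulled back along a function it is automatically reflexive, symmetric and transitive, and the equivalence-relation part requires no further argument.

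For equivariance I would first record the defining identity $row(s)(e) = \lang(se)$ and use it to derive property \textbf{(P2)}, namely $row(\pi \cdot s)(\pi \cdot e) = row(s)(e)$ for every permutation $\pi$: the action distributes over concatenation, so $\pi \cdot (se) = (\pi \cdot s)(\pi \cdot e)$, and since $\lang$ is equivariant, $\lang((\pi \cdot s)(\pi \cdot e)) = \lang(\pi \cdot (se)) = \lang(se)$. With \textbf{(P2)} available, assume $u \sim v$, fix $\pi$, and show $\pi \cdot u \sim \pi \cdot v$ by checking $row(\pi \cdot u)(e') = row(\pi \cdot v)(e')$ for each column $e' \in E$. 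Using equivariance of $E$ I write $e' = \pi \cdot e$ with $e = \pi^{-1} \cdot e' \in E$; then \textbf{(P2)} rewrites both sides as $row(u)(e)$ and $row(v)(e)$, which coincide by the hypothesis $u \sim v$.

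For the second assertion the plan is just to unfold both definitions and notice that the syntactic congruence quantifies over a strictly larger index set than $\sim$ does. Assuming $u \equiv_\lang v$, so that $uw \in \lang \iff vw \in \lang$ for all $w \in A^\star$, and recalling $E \subseteq A^\star$, I specialise $w$ to range only over $e \in E$; this gives $\lang(ue) = \lang(ve)$, i.e.\ $row(u)(e) = row(v)(e)$, for every $e \in E$, hence $row(u) = row(v)$ and $u \sim v$.

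I do not expect a genuine obstacle here; the only points needing care are the bookkeeping around the group action on words — that it distributes over concatenation and that $E$ is equivariant, so that every column can be presented as $\pi \cdot e$ — together with applying \textbf{(P2)} with the matching permutation on both the word and the column. Once these are isolated, each of the two substantive claims closes in a single line.
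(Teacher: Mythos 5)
Your proposal is correct and follows essentially the same route as the paper: the paper also notes that $\sim$ is the kernel of $row$, that equivariance of $\lang$ makes $row$ (hence its kernel) equivariant, and proves the second claim by specialising the quantifier in $u \equiv_\lang v$ from all of $A^\star$ to $E$. Your explicit unfolding of what equivariance of $row$ means on columns (via \textbf{(P2)} and writing $e' = \pi \cdot e$) is just a concrete rendering of the paper's one-line appeal to the fact that the kernel of an equivariant map is an equivariant equivalence relation.
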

Lemma \ref{lem:row_equiv} implies that at any stage of the algorithm the number of orbits of $S / {\sim}$ does not exceed the number of orbits of the minimal acceptor with state space $A^\star / {\equiv_\lang}$ (recall that $\equiv_\lang$ is the nominal Myhill-Nerode equivalence relation). Moreover, the following lemma shows that the dimension of the state space never exceeds the dimension of the minimal acceptor. Recall that the dimension is the maximal size of the support of any state, which is different than the number of orbits.  
\begin{lemma}
	\label{lem:nerode_support}
	We have $\supp([u]_{\sim}) \subseteq \supp([u]_{\equiv_\lang}) \subseteq \supp(u)$ for all $u \in S$.
\end{lemma}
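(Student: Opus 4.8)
The plan is to derive both inclusions from a single elementary fact about nominal sets: \emph{an equivariant map never increases support}. Concretely, if $f \colon X \to Y$ is an equivariant map between nominal sets and $A$ supports $x \in X$, then for every permutation $\pi$ acting as the identity on $A$ we have $\pi \cdot f(x) = f(\pi \cdot x) = f(x)$, so $A$ supports $f(x)$ as well. Taking $A = \supp(x)$ and using minimality of the support yields $\supp(f(x)) \subseteq \supp(x)$. The whole lemma is then just two applications of this fact to suitable quotient maps, so the first thing I would do is record this observation (or cite it from \cite{Pitts13,BojanczykKL14}).

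For the right-hand inclusion $\supp([u]_{\equiv_\lang}) \subseteq \supp(u)$, I would use that $\equiv_\lang$ is an equivariant equivalence relation on $A^\star$ (stated in the preliminaries), so $A^\star / {\equiv_\lang}$ is a nominal set under the action $\pi \cdot [u]_{\equiv_\lang} = [\pi \cdot u]_{\equiv_\lang}$, and the canonical projection $u \mapsto [u]_{\equiv_\lang}$ is equivariant by construction. The support fact above then gives the inclusion immediately.

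For the left-hand inclusion $\supp([u]_{\sim}) \subseteq \supp([u]_{\equiv_\lang})$, the key point is that, by Lemma~\ref{lem:row_equiv}, $\equiv_\lang$ refines $\sim$ on $S$: $u \equiv_\lang v$ implies $u \sim v$. This is exactly what is needed to make the assignment $[u]_{\equiv_\lang} \mapsto [u]_{\sim}$ well defined as a map $A^\star / {\equiv_\lang} \to (S \cup S \Lext A)/{\sim}$. This factoring map is equivariant, since both projections are (indeed $g(\pi \cdot [u]_{\equiv_\lang}) = [\pi \cdot u]_{\sim} = \pi \cdot g([u]_{\equiv_\lang})$), so applying the support fact once more gives $\supp([u]_{\sim}) \subseteq \supp([u]_{\equiv_\lang})$, closing the chain.

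I do not anticipate a genuine obstacle here; the argument is essentially bookkeeping about which relation refines which and getting the direction of the factoring map right. The only points deserving a line of care are (i) verifying that the factoring map $[u]_{\equiv_\lang} \mapsto [u]_{\sim}$ is well defined, which is precisely the second assertion of Lemma~\ref{lem:row_equiv}, and (ii) noting that both quotients are bona fide nominal sets under the pointwise permutation action (guaranteed by equivariance of $\equiv_\lang$ and of $\sim$, the latter from Lemma~\ref{lem:row_equiv}), so that their supports are meaningful and the non-increase fact applies.
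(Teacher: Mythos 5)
Your proof is correct and follows essentially the same route as the paper: the paper also obtains the left-hand inclusion from the equivariant factoring map $S/{\equiv_\lang} \to S/{\sim}$ (well defined by Lemma~\ref{lem:row_equiv}) together with the fact that equivariant functions preserve supports (cited there as Lemma~4.8 of \cite{BojanczykKL14}), which is exactly your support non-increase observation. The only difference is cosmetic: you prove that fact inline and spell out the right-hand inclusion via the projection map, which the paper leaves implicit.
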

\begin{lemma}
	The automaton constructed from a closed and consistent table is minimal.
\end{lemma}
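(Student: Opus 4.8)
The plan is to show that $M(S,E)$, viewed as a nominal DFA, is \emph{reachable} (every state is reached from the initial state by some word) and \emph{distinguishable} (no two distinct states accept the same residual language), and then to invoke the fact that a reachable, distinguishable nominal DFA is the minimal acceptor of its own language. This last implication is the nominal counterpart of the classical Myhill--Nerode characterisation (cf.\ the theorem above and the abstract development of \cite{JacobsSilva14}): distinguishability yields an injective equivariant map from states to residual languages, reachability makes this map surjective onto the reachable classes, and uniqueness up to isomorphism of the minimal nominal DFA does the rest. Since the construction $Q = \{row(s)\mid s\in S\}$, $q_0 = row(\epsilon)$, $F$, and $\delta(row(s),a)=row(sa)$ is equivariant (because $row$ is, by \textbf{(P2)}), it suffices to establish reachability and distinguishability; every map involved is then automatically equivariant.

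Two structural remarks come first. Consistency is exactly what makes $\delta$ well defined: if $row(s_1)=row(s_2)$ then $s_1\sim s_2$, hence $s_1a\sim s_2a$, i.e.\ $row(s_1a)=row(s_2a)$. Closedness makes $\delta$ total: for $s\in S$ and $a\in A$ the row $row(sa)$, with $sa\in S\Lext A$, equals $row(s')$ for some $s'\in S$ and therefore lies in $Q$. I also use the invariants---maintained by lines $6'$, $9'$, $12'$ together with the initialisation $S=E=\{\epsilon\}$---that $S$ is prefix-closed and $E$ suffix-closed, both equivariant.

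Reachability is an induction on the length of $s\in S$. Writing $\hat\delta$ for the extension of $\delta$ to words, I claim $\hat\delta(q_0,s)=row(s)$. For $s=\epsilon$ this is the definition of $q_0$; for $s=s'a$, prefix-closedness gives $s'\in S$, the induction hypothesis gives $\hat\delta(q_0,s')=row(s')$, and then $\hat\delta(q_0,s'a)=\delta(row(s'),a)=row(s'a)$. Thus each state $row(s)$ is reached by reading $s$.

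The heart of the argument, and the step I expect to require the most care, is the following access lemma, proved by induction on $|e|$ using closedness and suffix-closedness of $E$: for all $s\in S$ and $e\in E$,
\[
	\hat\delta(row(s),e)\in F \iff row(s)(e)=1 .
\]
For $e=\epsilon$ this is the definition of $F$ (note $\epsilon\in E$). For $e=ae'$ with $e'\in E$, we have $\hat\delta(row(s),ae')=\hat\delta(row(sa),e')$; by closedness $row(sa)=row(s')$ for some $s'\in S$, the induction hypothesis applied to $s'$ and $e'$ gives $\hat\delta(row(s'),e')\in F \iff row(s')(e')=1$, and $row(s')(e')=row(sa)(e')=\lang(sae')=row(s)(ae')$. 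Granting the lemma, distinguishability is immediate: if $row(s_1)\neq row(s_2)$, pick $e\in E$ with $row(s_1)(e)\neq row(s_2)(e)$; by the lemma exactly one of $\hat\delta(row(s_1),e)$, $\hat\delta(row(s_2),e)$ is final, so $e$ separates the two states. Together with reachability this gives minimality. The only genuinely nominal obligation is to check that these equational arguments respect the group action---which they do, since $row$ is equivariant and permutations commute with $\hat\delta$---and that ``reachable $+$ distinguishable'' really characterises the minimal \emph{nominal} DFA, which is precisely where the nominal Myhill--Nerode theorem enters.
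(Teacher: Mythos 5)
Your proof is correct, but it takes a genuinely different route from the paper. The paper dispatches this lemma in one line, appealing to the abstract categorical treatment of observation tables in \cite{JacobsSilva14}, where minimality of the constructed automaton falls out of a general framework. You instead give the classical, self-contained Angluin/Nerode-style argument: well-definedness of $\delta$ from consistency and closedness, reachability by induction on prefix-closed $S$, and your ``access lemma'' $\hat\delta(row(s),e)\in F \iff row(s)(e)=1$ by induction on suffix-closed $E$, from which distinguishability is immediate; minimality then follows from the nominal Myhill--Nerode correspondence between states of a reachable, distinguishable DFA and residuals of its language. The inductive steps are sound (in particular, replacing $row(sa)$ by $row(s')$ via closedness before applying the induction hypothesis is exactly the right move, and $row(s)(ae')=\lang(sae')=row(sa)(e')$ holds by definition of $T$), and equivariance is indeed automatic since $row$ is equivariant. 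What each approach buys: yours is elementary and makes explicit precisely which table invariants (prefix-closedness of $S$, suffix-closedness of $E$) carry the proof --- in fact your access lemma is essentially the ``automaton is consistent with the table'' remark that the paper states separately right after the lemma; the paper's citation-based proof is shorter and emphasizes that the whole construction instantiates a generic pattern, which is what lets the authors later transport the argument to other variations (such as the non-deterministic case) with minimal extra work. The only point worth flagging is that the final step --- ``reachable $+$ distinguishable implies minimal'' for \emph{nominal} DFAs --- is used as known; it is indeed standard (it follows from the nominal Myhill--Nerode theorem quoted in the paper, with all maps equivariant), but a fully self-contained write-up would spell out that equivariant bijection once.
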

\begin{proof}
Follows from the categorical perspective given in \cite{JacobsSilva14}.
\end{proof}
We note that the constructed automaton is consistent with the table (we use that the set $S$ is prefix-closed and $E$ is suffix-closed \cite{Angluin87}).
The following lemma shows that there are no strictly ``smaller'' automata consistent with the table.
So the automaton is not just minimal, it is minimal w.r.t. the table.

\begin{lemma}
	\label{lem:minimal_wrt_table}
	Let $H$ be the automaton associated with a closed and consistent table $(S, E)$.
	If $M'$ is an automaton consistent with $(S, E)$ (meaning that $se \in \lang(M') \iff se \in \lang(H)$ for all $s \in S \cup S \Lext A$ and $e \in E$) and $M'$ has at most as many orbits as $H$,
	then there is a surjective map $f: Q_{M'} \to Q_H$. If moreover
	\begin{itemize}
		\item $M'$s dimension is bounded by the dimension of $H$, i.e. $\supp(m) \subseteq \supp(f(m))$ for all $Q_M'$, and
		\item $M'$ has no fewer local symmetries than $H$, i.e. $\pi \cdot f(m) = f(m)$ implies $\pi \cdot m = m$ for all $m \in Q_M'$,
	\end{itemize}
	then $f$ defines an isomorphism $M' \cong H$ of nominal DFAs.
\end{lemma}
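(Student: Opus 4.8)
The plan is to construct the map $f \colon Q_{M'} \to Q_H$ explicitly by tracking how words in $S \cup S \Lext A$ are processed by both automata, and then to upgrade it from a mere surjection to an isomorphism under the two extra hypotheses. First I would observe that since $S$ is prefix-closed and contains $\epsilon$, every $s \in S$ determines a state $\delta_{M'}^\star(q_0^{M'}, s)$ reached by reading $s$ from the initial state of $M'$, and likewise a state $row(s) = \delta_H^\star(q_0^H, s)$ of $H$. Because $H$ is built as the quotient $S/{\sim}$, its states are exactly the rows $row(s)$, so I can attempt to define $f$ by sending the $M'$-state reached along $s$ to the $H$-state $row(s)$. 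The first obligation is \emph{well-definedness}: if two words $s, s' \in S$ lead to the same state in $M'$, they must have the same row. This is where consistency with the table enters — since $M'$ agrees with $H$ (equivalently, with $\lang$) on all $se$ with $e \in E$, reaching the same $M'$-state forces $row(s) = row(s')$, giving a partial function that is automatically surjective onto $Q_H$ because every row is hit by its own $s$. The orbit-count hypothesis then forces this correspondence to be defined on \emph{all} of $Q_{M'}$, not just the states reachable by $S$: a surjection from a set with no more orbits than the target, onto the target, must have matching orbit counts, so no orbit of $Q_{M'}$ can be ``wasted.''

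Next I would check that $f$ is a \textbf{morphism of nominal DFAs}, i.e.\ that it is equivariant and commutes with initial states, final states, and transitions. Equivariance follows because both state-assignments are built from equivariant data ($row$ is equivariant by Lemma~\ref{lem:row_equiv}, and $\delta_{M'}$ is equivariant by the definition of a nominal DFA). Preservation of the initial state is immediate from $\epsilon \in S$. Preservation of finality and of transitions reduces to the table equations: finality is read off the $\epsilon$-column (which lies in $E$), and the transition $\delta_H(row(s), a) = row(sa)$ matches $\delta_{M'}$ precisely because $sa \in S \Lext A$ and the two automata agree on the relevant entries — here I use closedness to ensure $row(sa)$ is genuinely a state of $H$.

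Finally, to turn the surjection into an isomorphism I would argue that under the \textbf{dimension} and \textbf{local-symmetry} hypotheses the map $f$ is also injective. Since $f$ is a surjection of orbit-finite nominal sets with the same number of orbits, it restricts to a surjection on each orbit, so it suffices to show injectivity orbit-by-orbit. For a single-orbit piece, the hypothesis $\supp(m) \subseteq \supp(f(m))$ combined with Lemma~\ref{lem:nerode_support} forces the supports to coincide, so domain and codomain orbits have equal dimension $k$; a surjective equivariant map between single orbits of equal dimension can only collapse points according to the extra local symmetries of the target, and the condition ``$\pi \cdot f(m) = f(m)$ implies $\pi \cdot m = m$'' says $M'$ has at least as many local symmetries as $H$, which rules out any such collapse. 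Hence $f$ is bijective, and being an equivariant bijective morphism of nominal DFAs it is an isomorphism.

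The step I expect to be the \textbf{main obstacle} is the passage from surjection to injection in the last paragraph: making precise how the local-symmetry groups control the fibers of an equivariant surjection between equal-dimension single orbits, and verifying that the two stated inequalities (on dimension and on local symmetries) are exactly what is needed to force each fiber to be a singleton. The well-definedness and morphism checks are essentially bookkeeping driven by the table equations, but the local-symmetry argument genuinely uses the representation of single-orbit nominal sets as quotients $\EAlph^{(k)}/G$ by a subgroup $G \le S_k$, and the heart of the matter is comparing these subgroups for $M'$ and $H$.
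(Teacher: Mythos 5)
Your proposal is correct and takes essentially the same approach as the paper's proof: the paper's map $row'$, which sends a state $q$ of $M'$ to its language restricted to $E$, coincides on states reached by $S$ with your reachability-defined $f$ (this is exactly the paper's first observation), and both proofs then run the same orbit-counting argument to obtain a surjection $Q_{M'} \twoheadrightarrow Q_H$ and use the dimension and local-symmetry hypotheses to upgrade it to an isomorphism of nominal sets before checking preservation of the DFA structure. The only differences are bookkeeping: the paper's semantic definition of the map makes your well-definedness and totality steps automatic, and the injectivity step you flag as the main obstacle admits a direct argument (if $f(x_1)=f(x_2)$ then $x_2 = \pi \cdot x_1$ for some $\pi$, whence $\pi \cdot f(x_1) = f(x_1)$ and the local-symmetry hypothesis gives $\pi \cdot x_1 = x_1$) without invoking the $\EAlph^{(k)}/G$ representation.
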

\begin{proof}
	(All maps in this proof are equivariant.)
	Define a map $row' : Q_M' \to 2^E$ by restricting the language map $Q_M' \to 2^{A^\star}$ to $E$.
	First, observe that $row'(\delta'(q_0', s)) = row(s)$ for all $s \in S \cup S \Lext A$, since $\epsilon \in E$ and $M'$ is consistent with the table.
	Second, we have $\{ row'(\delta'(q_0', s)) | s \in S \} \subseteq \{ row'(q) | q \in M' \}$.

	Let $n$ be the number of orbits of $H$.
	The former set has $n$ orbits by the first observation, the latter set has at most $n$ orbits by assumption.
	We conclude that the two sets (both being equivariant) must be equal.
	That means that for each $q \in M'$ there is a $s \in S$ such that $row'(q) = row(s)$.
	We see that $row' : M' \twoheadrightarrow \{ row'(\delta'(q_0', s)) | s \} = H$ is a surjective map.
	Since a surjective map cannot increase the dimensions of orbits and the dimensions of $M'$ are bounded, we note that the dimensions of the orbits in $H$ and $M'$ have to agree.
	Similarly, surjective maps preserve local symmetries.
	This map must hence be an isomorphism of nominal sets.
	Note that $row'(q) = row'(\delta'(q_0', s))$ implies $q = \delta'(q_0', s)$.

	It remains to prove that it respects the automaton structures.
	It preserve the initial state: $row'(q_0') = row(\delta'(q_0', \epsilon)) = row(\epsilon)$.
	Now let $q \in M'$ be a state and $s \in S$ such that $row'(q) = row(s)$.
	It preserves final states: $q \in F' \iff row'(q)(\epsilon) = 1 \iff row(s)(\epsilon) = 1$.
	Finally, it preserves the transition structure:
	\begin{align*}
		row'(\delta'(q, a)) & = row'(\delta'(\delta'(q_0', s), a)) 
		                     = row'(\delta'(q_0', sa)) \\
		                    & = row(sa) = \delta(row(s), a)
		                    \qedhere
	\end{align*}
\end{proof}
The above proof is an adaptation of Angluin's proof for automata over sets. We will now prove termination of the algorithm by proving that all steps are productive.

\begin{theorem}
	\label{thm:termination}
	The algorithm terminates and is hence correct.
\end{theorem}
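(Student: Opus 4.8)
**

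The plan is to prove termination by showing that each of the three ways the table can grow --- adding rows for closedness (line $6'$), adding columns for consistency (line $9'$), and adding rows from a counterexample (line $12'$) --- can only happen finitely often before an equivalence query must be answered \textbf{yes}. The central quantity to track is the number of orbits of $S/{\sim}$, together with its dimension. By Lemma~\ref{lem:row_equiv}, $\sim$ is a coarsening of $\equiv_\lang$ on $S$, so the number of orbits of $S/{\sim}$ is bounded above by the number of orbits of the nominal Myhill--Nerode quotient $A^\star/{\equiv_\lang}$, which is finite since $\lang$ is nominal regular (by the Myhill--Nerode theorem for nominal sets). Likewise, by Lemma~\ref{lem:nerode_support} the dimension of $S/{\sim}$ is bounded by that of the minimal acceptor. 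These two bounds are the scaffolding of the whole argument.

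\medskip

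\noindent First I would argue that closedness and consistency fixes strictly increase a progress measure that is itself bounded. Each time we resolve a closedness defect (line $6'$), we add an orbit $\orb(sa)$ whose row is different from every row currently in $S$, so the number of orbits of $S/{\sim}$ strictly increases; since this number is capped by the (finite) orbit count of $A^\star/{\equiv_\lang}$, only finitely many closedness fixes can occur. Each time we resolve a consistency defect (line $9'$), we add a column that separates two rows that were previously identified, so again the number of orbits of $S/{\sim}$ strictly increases --- this too is bounded by the same cap. The subtlety is that a consistency fix may increase the number of orbits of the \emph{row equivalence classes} either by splitting an orbit into several orbits or by increasing the dimension within a fixed number of orbits; here Lemma~\ref{lem:nerode_support} guarantees the dimension never exceeds that of the minimal acceptor, so the combined measure (orbit count refined by dimension, and within that by local symmetries) is bounded and strictly monotone, hence can change only finitely many times.

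\medskip

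\noindent The main obstacle, and the heart of the proof, is handling counterexamples (line $12'$): one must show that between two successive equivalence queries only finitely much work is done, and that the loop cannot query the same hypothesis forever. The key observation is that a counterexample $t$ is, by assumption on the teacher, \emph{not} consistent with the current hypothesis $H$, yet $H$ is minimal with respect to the table by Lemma~\ref{lem:minimal_wrt_table}. Adding $\pref(\orb(t))$ to $S$ therefore must eventually trigger either a closedness or a consistency defect: if it triggered neither, the table would remain closed and consistent with the \emph{same} automaton $H$ (since $H$ is already the minimal automaton consistent with the enlarged table, as the bounds on orbits, dimension, and local symmetries from Lemma~\ref{lem:minimal_wrt_table} are preserved), contradicting that $t$ witnesses a genuine discrepancy between $\lang(H)$ and $\lang$. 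Thus every counterexample forces the bounded progress measure of the previous paragraph to strictly increase. I expect the delicate point to be verifying that the counterexample genuinely induces a new defect rather than being silently absorbed --- this is exactly where minimality-with-respect-to-the-table (Lemma~\ref{lem:minimal_wrt_table}) does the work, ruling out the degenerate possibility that a wrong hypothesis survives the addition of its own counterexample.

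\medskip

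\noindent Putting these together: every iteration of the main loop that does not terminate strictly increases a quantity (orbit count of $S/{\sim}$, refined by dimension and local symmetries) that is bounded above by the corresponding invariant of the minimal nominal DFA for $\lang$. Since that minimal automaton is orbit-finite and finite-dimensional, the measure can increase only finitely often, so the algorithm performs only finitely many equivalence queries and must eventually receive the answer \textbf{yes}. Correctness is then immediate, as the final equivalence query certifies that the output automaton recognizes $\lang$.
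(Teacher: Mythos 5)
Your overall scaffolding---a progress measure built from the number of orbits of $S/{\sim}$, the dimensions, and the local symmetries, all bounded by the minimal acceptor via Lemmas~\ref{lem:row_equiv} and~\ref{lem:nerode_support}---is exactly the paper's, but your handling of counterexamples contains a genuine error. You claim that adding $\pref(\orb(t))$ must trigger a closedness or consistency defect, justified by the parenthetical that $H$ ``is already the minimal automaton consistent with the enlarged table.'' That justification is false: the enlarged table contains the cell for $t\cdot\epsilon$, which equals $\lang(t)$, and $H$ misclassifies $t$ by assumption, so $H$ is \emph{not} consistent with the enlarged table. The claim itself also fails: the prefixes of $t$ can produce rows distinct from all existing rows without violating closedness or consistency (closedness constrains $S \Lext A$ against $S$, not new elements of $S$ against old ones), in which case no defect arises at all, yet the run still makes progress. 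The paper's argument sidesteps this entirely: it observes that the \emph{next} hypothesis $H'$, whenever it is produced, is consistent with the \emph{old} table $(S,E)$ and is nonequivalent to $H$ (it classifies $t$ correctly while $H$ does not); Lemma~\ref{lem:minimal_wrt_table} then forces $H'$ to have strictly more orbits, a strictly larger dimension, or strictly fewer local symmetries, each of which can change only finitely often. You should replace your defect-triggering argument with this comparison of successive hypotheses.

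Second, in the consistency case you assert that the combined measure is ``bounded and strictly monotone'' without proving the one point where the nominal setting genuinely differs from Angluin's: when adding a column neither increases the number of orbits of rows nor grows any support, why is progress still made? The paper supplies the missing step: in that case $row(s_2) = \pi \cdot row(s_1)$ for some $\pi$ with $row(s_1) = \pi \cdot row(s_1)$ but $row'(s_1) \neq \pi \cdot row'(s_1)$, so the finite group $R = \{\rho|_{\supp([s_1]_{\sim})} \mid row(s_1) = \rho \cdot row(s_1)\}$ shrinks to a \emph{proper} subgroup---a local symmetry is removed---and a finite group admits only finitely many steps of strict descent. (The companion observation, that a permutation $\pi$ moving some $x \in \supp(row(s_1))$ outside the support forces the support of the row to grow, handles the other subcase and is bounded by Lemma~\ref{lem:nerode_support}.) Without this group-theoretic argument your claim that consistency fixes terminate is unsupported, since ``orbit count refined by dimension'' alone is not strictly increased by such a fix.
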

\begin{proof}
	Provided that the if-statements and set operations terminate, we are left proving that
	the algorithm adds (orbits of) rows and columns only finitely often.
	We start by proving that a table can be made closed and consistent in finite time.

	If the table is not closed, we find a row $s_1 \in S \Lext A$ such that $row(s_1) \neq row(s)$ for all $s \in S$.
	The algorithm then adds the orbit containing $s_1$ to $S$.
	Since $s_1$ was nonequivalent to all rows, we find that $S \cup \orb(t) / {\sim}$ has strictly more orbits than $S / {\sim}$.
	Since orbits of $S/\sim$ cannot be more than those of $A^\star / {\equiv_\lang}$, this happens finitely often.

	Columns are added in case of an inconsistency.
	Here the algorithm finds two elements $s_1, s_2 \in S$ with $row(s_1) = row(s_2)$ but $row(s_1 a e) \neq row(s_2 a e)$ for some $a \in A$ and $e \in E$.
	Adding $a e$ to $E$ will ensure that $row'(s_1) \neq row'(s_2)$ ($row'$ is the function belonging to the updated observation table).
	If the two elements $row'(s_1), row'(s_2)$ are in different orbits, the number of orbits is increased.
	If they are not in the same orbit, we have $row'(s_2) = \pi \cdot row'(s_1)$ for some permutation $\pi$. Using $row(s_1) = row(s_2)$ and $row'(s_1) \neq row'(s_2)$ we have:
	\begin{align*}
	row(s_1)   =    \pi \cdot row(s_1) \qquad
	row'(s_1)  \neq \pi \cdot row'(s_1)
	\end{align*}
	Consider all such $\pi$ and
	suppose there is a $\pi$ and $x \in \supp(row(s_1))$ such that $\pi \cdot x \notin \supp(row(s_1))$.
	Then we find that $\pi \cdot x \in \supp(row'(s_1))$, and so the support of the row has grown.
	By Lemma~\ref{lem:nerode_support} this happens finitely often.
	Suppose such $\pi$ and $x$ do not exist, then we consider the finite group $R = \{\rho|_{\supp([s_1]_{\sim})} \,|\, row(s_1) = \rho \cdot row(s_1) \}$.
	We see that $\{\rho|_{\supp([s_1]_{\sim})} \,|\, row'(s_1) = \rho \cdot row'(s_1) \}$ is a proper subgroup of $R$.
	So, adding a column in this case decreases the size of the group $R$, which can happen only finitely often. In this case a local symmetry is removed.

	In short, the algorithm will succeed in producing a hypothesis in each round.
	It remains to prove that it needs only finitely many equivalence queries.

	Let $(S, E)$ be the closed and consistent table and $H$ its corresponding hypothesis.
	If it is incorrect a second hypothesis $H'$ will be constructed which is consistent with the old table $(S, E)$.
	The two hypotheses are nonequivalent, as $H'$ will handle the counter example correctly and $H$ does not.
	Therefore, $H'$ will have at least one orbit more, one local symmetry less, or one orbit will have strictly bigger dimension (Lemma~\ref{lem:minimal_wrt_table}),
	all of which can only happen finitely often.
\end{proof}

\newcommand{\NomCell}[1]{
\begin{cases}
	1 & {#1} \\[-0.7ex]
    0 & \text{else}
\end{cases}
}

\begin{figure*}[t]
\begin{center}
	\begin{tabular}{m{.15\linewidth}m{.1\linewidth}m{.22\linewidth}m{.62\linewidth}}
		\raggedright
		\begin{automaton}
		\node[initial,state] (q0) {$q_0$};
		\node[state,right of=q0] (q1) {$q_{1,x}$};
		\node[state,accepting,below of=q1] (q2) {$q_{2, x, y}$};
		\path
		(q0) edge[bend left] node[trlab,above] {$x$} (q1)
		(q1) edge[bend left] node[trlab,above] {$x$} (q0)
		(q1) edge[bend left] node[trlab,right] {$y$} (q2)
		(q2) edge[bend left] node[trlab,right] {$y$} (q1)
		(q2) edge node[trlab,below left] {$x$} (q0)
		(q2) edge[loop left] node[trlab,left] {$z$} (q2);
		\end{automaton}
		&
		\raggedright
		\begin{tabular}{R{3ex}|C{1ex}}
			T_1 & \epsilon   \\
			\hline
			\epsilon & 0 \\
			a        & 0 \\
			ab       & 1 \\
			\hline
			aa       & 0 \\
			aba      & 0 \\
			abb      & 0 \\
			abc      & 1
		\end{tabular}
	&
		\raggedright
		\begin{tabular}{R{3ex}|L{1ex}R{12ex}}
			T_2 & \epsilon   & a'  \\
			\hline
			\epsilon & 0 & 0   \\
			a        & 0 & \NomCell{a' \neq a} \\[2ex]
			ab       & 1 & \NomCell{a' \neq a{,} b} \\[2ex]
			\hline
			aa       & 0 & 0   \\
			aba      & 0 & 0   \\
			abb      & 0 & \NomCell{a' \neq a} \\[2ex]
			abc      & 1 & \NomCell{a' \neq a{,} b}
		\end{tabular}
		&
		\raggedright
		\begin{tabular}{R{3ex}|R{15ex}}
			T_3  & b'a'  \\
			\hline
			\epsilon  & 1    \\
			a         & \NomCell{a \neq a'{,}b'} \\[2ex]
			ab        & \NomCell{(b' \neq a{,}b \wedge a' \neq a{,}b) \vee (b' = b \wedge a' \neq a)} \\
			\hline
			aa        & 1 \\
			aba       & 1 \\
			abb       & \NomCell{a \neq a'{,}b'} \\[2ex]
			abc       & \NomCell{(b' \neq a{,}b \wedge a' \neq a{,}b) \vee (b' = b \wedge a' \neq a)}
		\end{tabular}
	\end{tabular}
\end{center}
\caption{Example automaton to be learnt and three subsequent tables computed by \nLStar. In the automaton, $x,y,z$ denote distinct atoms. In $T_3$ we only show a relevant column.}
\label{fig:ex}
\end{figure*}

We remark that all the lemmas and proofs as above are close to the original ones of Angluin. 
However, two things are crucially different.
First, adding a column does not always increase the number of (orbits of) states.
It can happen that by adding a column a bigger support is found or that a local symmetry is broken.
Second, the new hypothesis does not necessarily have more states, again it might have bigger dimensions or less local symmetries.

From the proof Theorem~\ref{thm:termination} we observe moreover that the way we handle counterexamples is not crucial.
Any other method which ensures a non-equivalent hypothesis will work.
In particular our algorithm is easily adapted to include optimizations such as the ones in \cite{RivestSchapire93} and \cite{MalerPnueli95}, where counterexamples are added as columns.

\subsection{Example}
Consider the target automaton in Figure~\ref{fig:ex} and an observation table $T_1$ at some stage during the algorithm.
We remind the reader that the table is represented in a symbolic way: the sequences in the rows and columns stand for whole orbits and the cells denote functions from the product of the orbits to $2$.
Since the cells can consist of multiple orbits, where each orbit is allowed to have a different value, we use a formula to specify which orbits have a $1$.

The table $T_1$ at some stage of the algorithm has to be checked for closedness and consistency.
We note that it is definitely closed.
For consistency we check the rows $row(\epsilon)$ and $row(a)$ which are equal.
Observe, however, that $row(\epsilon b)(\epsilon)=0$ and $row(ab)(\epsilon)=1$, so we have an inconsistency.
The algorithm adds the orbit $\orb(b)$ as column and extends the table, obtaining $T_2$.
We note that, in this process, the number of orbits did grow, as the two rows are split.
Furthermore we see that both $row(a)$ and $row(ab)$ have empty support in $T_1$, but not in $T_2$,
because $row(a)(a')$ depends on $a'$ being equal or different from $a$, similarly for $row(ab)(a')$.

The table $T_2$ is still not consistent as we see that $row(ab) = row(ba)$ but $row(abb)(c) = 1$ and $row(bab)(c) = 0$.
Hence the algorithm adds the columns $\orb(bc)$, obtaining table $T_3$.
We note that in this case, no new orbits are obtained and no support has grown.
In fact, the only change here is that the local symmetry between $row(ab)$ and $row(ba)$ is removed.
This last table, $T_3$, is closed and consistent and will produce the correct hypothesis.

\subsection{Query Complexity}
\label{ssec:lstar-compl}
In this section, we will analyse the number of queries made by the algorithm in the worst case.
Let $M$ be the minimal target automaton with $n$ orbits and of dimension $k$.
We will use $\log$ in base two.

\begin{lemma}
	\label{lem:bound_eq}
	The number of equivalence queries $E_{n,k}$ is $O(n k \log k)$.
\end{lemma}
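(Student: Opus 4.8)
The plan is to bound the number of equivalence queries by tracking the three monotone progress measures identified in the proof of Theorem~\ref{thm:termination}: the number of orbits of the hypothesis, the dimension of each orbit, and the local symmetry group of each orbit. Each failed equivalence query yields a strictly ``bigger'' hypothesis in at least one of these three senses (one more orbit, or one orbit of strictly larger dimension, or one orbit with strictly fewer local symmetries), and by Lemmas~\ref{lem:nerode_approx} and~\ref{lem:nerode_support} all three quantities are bounded in terms of $n$ and $k$. The task is therefore to count how many times each measure can increase before it saturates, and sum the contributions.

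First I would fix the ceilings. By Lemma~\ref{lem:nerode_approx} the number of orbits of the state space never exceeds $n$, so the orbit count can strictly increase at most $n$ times. By Lemma~\ref{lem:nerode_support} the dimension of any orbit is bounded by $k$, so the dimension of a single orbit can increase at most $k$ times; across all (at most $n$) orbits this contributes at most $nk$ increases. The subtle measure is local symmetry: for an orbit of dimension $d$ the local symmetries form a subgroup of the symmetric group $S_d$, and each symmetry-breaking step passes to a \emph{proper} subgroup. The length of any strictly descending chain of subgroups of $S_d$ is bounded by $\log|S_d| = \log(d!) = O(d \log d)$, and since $d \le k$ this is $O(k \log k)$ per orbit.

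Summing across the at most $n$ orbits gives the dominant term: the symmetry-breaking steps contribute $O(n \cdot k \log k)$, which absorbs both the $O(n)$ orbit-creation steps and the $O(nk)$ dimension-growth steps. Since every equivalence query except the final (successful) one strictly advances one of these measures, the total number of equivalence queries is bounded by the sum of the three, namely $O(nk\log k)$, as claimed.

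The main obstacle I anticipate is the local-symmetry count, for two reasons. The measure is not a single number but a collection of subgroups, one per orbit, evolving as orbits split and as dimensions grow, so I must argue that the subgroup-chain-length bound applies cleanly orbit-by-orbit even though the orbit structure itself changes across rounds; the cleanest route is to observe that the three measures are ordered lexicographically so that a step which creates a new orbit or enlarges a dimension simply resets the symmetry budget of the affected orbit to its fresh maximum $O(k \log k)$, and the total budget consumed is still bounded by the product $n \cdot O(k\log k)$. The second delicate point is justifying the $\log(d!) = O(d\log d)$ bound on descending subgroup chains via Lagrange's theorem (each proper descent at least halves the order, so a chain has length at most $\log_2|S_d|$), and confirming $\log(k!) = O(k\log k)$ by Stirling; both are routine but must be invoked explicitly to land the stated asymptotics.
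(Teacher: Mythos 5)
Your proof follows the paper's argument essentially verbatim: both track the same three progress measures (orbit count bounded by $n$ via Lemma~\ref{lem:nerode_approx}, orbit dimension bounded by $k$ via Lemma~\ref{lem:nerode_support}, and strictly descending chains of local-symmetry subgroups of $S_k$ of length $O(k \log k)$), observe that each failed equivalence query strictly advances one of them, and sum to $n + n(k + k\log k)$. Your Lagrange-theorem halving argument for the subgroup-chain length is in fact a cleaner justification than the paper's divisor-counting phrasing (which, read literally as ``the number of divisors of $m$ is at most $\log m$,'' is imprecise), but the decomposition and the resulting bound are identical.
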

\begin{proof}
	By Lemma~\ref{lem:minimal_wrt_table} each hypothesis will be either
	1) bigger in the number of orbits, which is bounded by $n$, or
	2) bigger in the dimension of an orbit, which is bounded by $k$ or
	3) smaller in local symmetries of an orbit.
	For the last part we want to know how long a subgroup series of the permutation group $S_k$ can be.
	This is bounded by the number of divisors of $k!$, as each subgroup divides the order of the group.
	We can easily bound the number of divisors of any $m$ by $\log m$ and so
	one can at take a subgroup at most $k \log k$ times when starting with $S_k$.

	Since the hypothesis will grow monotonically in the number of orbits and for each orbit will grow monotonically w.r.t.\ the remaining two dimensions, the number of equivalence queries is bound by $n + n (k + k \log k)$.
\end{proof}

Next we will give a bound for the size of the table.

\begin{lemma}
	The table has at most $n + m E_{n,k}$ orbits in $S$ with sequences of at most length $n + m$,
	where $m$ is the length of the longest counter example given by the teacher.
	The table has at most $n (k + k \log k + 1)$ orbits in $E$ of at most length $n (k + k \log k + 1)$
\end{lemma}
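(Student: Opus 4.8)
The plan is to track, separately, the two places where $S$ can grow (lines $6'$ and $12'$) and the single place where $E$ can grow (line $9'$), and to bound each using the progress measures already established in the proof of Theorem~\ref{thm:termination}. The key observation is that rows are controlled by the single monotone quantity ``number of orbits of $S/{\sim}$'', whereas columns require the full three-part measure (orbits, dimension, local symmetries).

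For the rows: the set $S$ starts as $\{\epsilon\}$ and is enlarged only at line $6'$ (closedness) and line $12'$ (counterexamples). A closedness step adds one orbit $\orb(s_1 a)$ that is $\sim$-inequivalent to every existing row, hence strictly increases the number of orbits of $S/{\sim}$; by Lemma~\ref{lem:row_equiv} this count never exceeds $n$, so there are at most $n-1$ closedness steps. Each of the at most $E_{n,k}$ counterexamples $t$ (Lemma~\ref{lem:bound_eq}) contributes $\pref(\orb(t))$, and since $|t|\le m$ this adds at most $m$ new orbits (the empty prefix is already present). Together with the initial orbit this gives at most $1+(n-1)+mE_{n,k}=n+mE_{n,k}$ orbits. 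For the length bound, I would note that every row is either a prefix of a counterexample, of length $\le m$, or is built from an existing row by appending a single letter in a closedness step; since any chain of such appends has length at most $n-1$, no row exceeds length $m+n$.

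For the columns: $E$ grows only at line $9'$, each consistency step adding one orbit $\orb(ae)$, so it suffices to bound the number of consistency steps. The plan is to reuse the three-way argument from Theorem~\ref{thm:termination} and Lemma~\ref{lem:bound_eq}: a consistency step either (i) splits a $\sim$-class into two distinct orbits, which by Lemma~\ref{lem:row_equiv} happens at most $n-1$ times; or (ii) enlarges the support of some row, which by Lemma~\ref{lem:nerode_support} is bounded by the dimension $k$ per orbit and so by $nk$ in total; or (iii) removes a local symmetry, where for each of the at most $n$ orbits a strictly descending chain of subgroups of $S_k$ has length at most the number of divisors of $k!$, i.e.\ at most $\log(k!)\le k\log k$. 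Adding the initial column $\{\epsilon\}$ and summing gives at most $1+(n-1)+nk+nk\log k=n(k+k\log k+1)$ orbits, and the length bound follows by the same chain argument as for $S$, since each consistency step prepends one letter to an existing column.

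The main obstacle is the column count, not the row count: bounding consistency steps forces me to juggle all three progress measures simultaneously and, in particular, to re-invoke the subgroup-chain/divisor-of-$k!$ estimate of Lemma~\ref{lem:bound_eq} for the local-symmetry case. A secondary subtlety is avoiding double counting, since the orbit-splitting budget of type~(i) is shared between closedness and consistency steps and must be charged against the global bound $n$ only once; the stated constants come out exactly when one carefully attributes the initial orbit $\{\epsilon\}$ to each of $S$ and $E$.
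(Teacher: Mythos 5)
Your proposal is correct and follows essentially the same route as the paper: rows are bounded by charging closedness steps to the orbit count of $S/{\sim}$ (at most $n$) plus $mE_{n,k}$ orbits from prefixes of counterexamples, columns are bounded by re-running the three-part progress measure (orbit count, dimension, local symmetries via the subgroup-chain estimate) from the termination proof and Lemma~\ref{lem:bound_eq}, and the length bounds follow from the one-letter-per-step extension argument. The paper states this very tersely; your expansion, including the bookkeeping of the initial orbit and the remark that double-charging the orbit-splitting budget is harmless for an upper bound, is a faithful elaboration rather than a different proof.
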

\begin{proof}
	In the termination proof we noted that rows are added at most $n$ times.
	In addition (all prefixes of) counter examples are added as rows which add another $m E_{n,k}$ rows.
	Obviously counter examples are of length at most $m$ and are extended at most $n$ times, making the length at most $m + n$ in the worst case.

	For columns we note that one of three dimensions approaches a bound similarly to the proof of Lemma~\ref{lem:bound_eq}.
	So at most $n (k + k \log k + 1)$ columns are added.
	Since they are suffix closed, the length is at most $n (k + k \log k + 1)$.
\end{proof}

Let $p$ and $l$ denote respectively the dimension and the number of orbits of $A$.

\begin{lemma}
The number of orbits in the lower part of the table, $S \Lext A$, is bounded by $(n + m E_{n,k}) l f_\EAlph( p(n+m), p)$.
\end{lemma}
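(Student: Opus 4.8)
The plan is to bound the number of orbits of $S \Lext A$ by applying the orbit-counting machinery for products established earlier in the Preliminaries. Recall that $S \Lext A = \{ sa \mid s \in S, a \in A\}$, and the earlier analysis shows that if we multiply nominal sets $X_1 \times \dots \times X_n$ where $X_i$ has $l_i$ orbits and dimension $k_i$, then the total number of orbits is at most $l_1 \cdots l_n \cdot f_\EAlph(k_1, \dots, k_n)$. So the first step is to observe that concatenation induces an equivariant surjection from $S \times A$ onto $S \Lext A$, hence the number of orbits of $S \Lext A$ is bounded by the number of orbits of the product $S \times A$.

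Next I would assemble the parameters. From the previous lemma we know $S$ has at most $n + m E_{n,k}$ orbits. I would also need a bound on the dimension of $S$: every element of $S$ is a prefix of a counterexample (of length at most $m$) extended at most $n$ times, so words in $S$ have length at most $n + m$, and since each letter contributes support of dimension at most $p$ (the dimension of $A$), the dimension of any element of $S$ is at most $p(n+m)$. For the alphabet side, $A$ has $l$ orbits and dimension $p$ by assumption. Plugging these into the product bound: the number of orbits of $S \times A$ is at most $(\text{orbits of } S) \cdot (\text{orbits of } A) \cdot f_\EAlph(\text{dim } S, \text{dim } A)$, which gives exactly $(n + m E_{n,k}) \cdot l \cdot f_\EAlph(p(n+m), p)$.

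The argument is essentially a bookkeeping exercise that feeds the two preceding lemmas into the generic product-orbit bound, so most steps are routine. The one point requiring genuine care is the dimension estimate for elements of $S$: I must justify that a word $s = b_1 \cdots b_r$ with $r \le n+m$ has $\supp(s)$ of size at most $pr \le p(n+m)$, which follows because the support of a sequence is contained in the union of the supports of its letters, each of size at most $p$. A subtle concern is whether one should use the dimension of $S$ directly or combine the orbit counts of $S$ as a sum over its orbits (each orbit possibly having smaller dimension); but since $f_\EAlph$ is monotone in its arguments and we only seek an upper bound, using the worst-case dimension $p(n+m)$ uniformly across all orbits of $S$ is safe and yields the stated bound. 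The main obstacle, if any, is simply making sure the surjection $S \times A \twoheadrightarrow S \Lext A$ is recognized as equivariant so that the codomain genuinely has no more orbits than the domain.
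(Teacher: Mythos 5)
Your proof is correct and follows essentially the same route as the paper's: both bound the orbits of $S \Lext A$ via pairs of orbits of $S$ and $A$, use the length bound $n+m$ on words in $S$ to get the dimension bound $p(n+m)$, and invoke the product-orbit bound $l_1 l_2 f_\EAlph(k_1,k_2)$ from the preliminaries. Your explicit mention of the equivariant surjection $S \times A \twoheadrightarrow S \Lext A$ and the monotonicity of $f_\EAlph$ only makes precise what the paper leaves implicit.
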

\begin{proof}
Any sequence in $S$ is of length at most $n+m$, so it contains at most $p(n+m)$ distinct atoms. When we consider $S \Lext A$, the extension can either reuse atoms from those $p(n+m)$, or none at all. Since the extra letter has at most $p$ distinct atoms, the set $\EAlph^{(p(n+m)) } \times \EAlph^{(p)}$ gives a bound $f_\EAlph(p(n+m),p)$ for the number of orbits of $O_S \times O_A$, with $O_X$ an orbit of $X$. Multiplying by the number of such ordered pairs, namely $(n + m E_{n,k})l$, gives a bound for $S \Lext A$.
\end{proof}

Let $C_{n,k,m} = (n + m E_{n,k}) (l f_\EAlph(p(n+m), p) + 1) n (k + k \log k + 1)$ be the maximal number of cells in the table. We note that this number is polynomial in $k, l, m$ and $n$ but not in $p$.

\begin{corollary}
	The number of membership queries is bounded by $C_{n,k,m} f_\EAlph(p (n + m), p n (k + k \log k + 1))$.
\end{corollary}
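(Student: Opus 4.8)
The plan is to count membership queries cell by cell in the symbolic representation of the table, being careful that one symbolic cell does \emph{not} cost a single query. Recall that the table is stored as cells $T_{i,j} \colon \orb(s_i) \times \orb(e_j) \to 2$, one for each pair of an orbit representative $s_i$ of a row and $e_j$ of a column. The number of such pairs is exactly the product of the number of row orbits and the number of column orbits, which the two preceding lemmas bound by $C_{n,k,m}$. However, as already remarked, $\orb(s_i) \times \orb(e_j)$ generally splits into several orbits, and the target language may take different values on each of them; since a membership query reveals $\lang$ only on a single word (hence resolves at most one orbit at a time), filling one cell costs as many queries as the cell has orbits. So the bound I would establish is $C_{n,k,m}$ times the maximal number of orbits occurring in any single cell.

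First I would bound the dimensions of rows and columns. By the preceding size lemma every word in $S \cup S \Lext A$ has length at most $n + m$, and since each alphabet letter has dimension $p$, such a word uses at most $p(n+m)$ distinct atoms; hence $\supp(s_i)$, and therefore the dimension of the single orbit $\orb(s_i)$, is at most $p(n+m)$. Likewise every column in $E$ has length at most $n(k + k\log k + 1)$, so the dimension of $\orb(e_j)$ is at most $pn(k + k\log k + 1)$.

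Then I would invoke the product-orbit estimate from Section~\ref{sec:prelim}: for single-orbit sets the number of orbits of $\orb(s_i) \times \orb(e_j)$ is at most $f_\EAlph$ evaluated at their two dimensions. Using monotonicity of $f_\EAlph$ in its arguments to replace the actual dimensions by the upper bounds just computed, each cell has at most $f_\EAlph(p(n+m),\, pn(k + k\log k + 1))$ orbits. Multiplying the number of cells $C_{n,k,m}$ by this per-cell query count yields precisely the claimed bound.

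The main obstacle --- indeed essentially the only conceptual point --- is the multiplicity hidden inside a symbolic cell: one must resist equating ``one cell'' with ``one query'', and instead route the combinatorics through $f_\EAlph$ applied to the correct row and column dimensions. Everything else is routine substitution of the length bounds established in the previous lemmas, together with the elementary monotonicity of $f_\EAlph$ in the number of available atoms.
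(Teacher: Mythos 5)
Your argument is correct and coincides with the paper's own (implicit) justification: the paper states this corollary without proof, as the immediate consequence of multiplying the cell count $C_{n,k,m}$ by the maximal number of orbits inside a single symbolic cell, the latter bounded by $f_\EAlph$ applied to the row and column dimension bounds --- exactly your reconstruction, including the key observation that a membership query is needed per \emph{orbit} of a cell rather than per cell. The only looseness, which you share with the paper's own statement, is that words in $S \Lext A$ have length up to $n+m+1$, so their dimension is really bounded by $p(n+m)+p$ rather than $p(n+m)$; this off-by-one is glossed over in the stated bound as well.
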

\section{Learning non-deterministic nominal automata}
\label{sec:nondet}
In this section, we introduce a variant of \nLStar, which we call \nNLStar, where the learnt automaton is non-deterministic. It will be based on \NLStar\cite{BolligHKL09}, an Angluin-style algorithm for learning NFAs. The algorithm is shown in Figure~\ref{fig:nfa-alg}. We first illustrate \NLStar, then we discuss its extension to nominal automata.

\NLStar\ crucially relies on the use of \emph{residual finite-state automata} (RFSA) \cite{DenisLT02}, which are NFAs admitting \emph{unique minimal canonical representatives}. The states of this automaton correspond to Myhill-Nerode right-congruence classes, but can be exponentially smaller than the corresponding minimal DFA: \emph{composed} states, language-equivalent to sets of other states, can be dropped.
\begin{figure}[t]
\begin{codebox}
\Procname{$\proc{\NLStar\ learner}$}
\li $S,E \gets \{\epsilon\}$
\li \Repeat
\li \While $(S, E)$ is not RFSA-closed or not RFSA-consistent\label{line:nfa-checks}
\li \If $(S, E)$ is not RFSA-closed
\li \Then\label{line:begin-nfa-closed}
find $s\in S,a \in A$ such that
\zi \qquad $row(sa) \in PR(S,E) \setminus PR^{\top}(S,E)$
\label{line:nfa-clos-witness}
\li $S\gets S\cup \{sa\}$ \label{line:nfa-addrow-clos}
\End\label{line:nfa-end-closed}
\li \If $(S, E)$ is not RFSA-consistent\label{line:nfa-begin-const}
\li \Then find $s_1,s_2 \in S$, $a \in A$, and $e\in E$ such that
\zi \qquad $row(s_1) \rowincl row(s_2)$ and
\zi \qquad $\lang(s_1 a e) = 1$, $\lang(s_2 a e) = 0$
\label{line:nfa-cons-witness}
\li $E\gets E\cup \{ae\}$ \label{line:nfa-addcol-cons}
\End
\label{line:nfa-end-const}
\li Make the conjecture $N(S,E)$
\label{line:nfa-conj}
\li \If the Teacher replies  {\bf no}, with a counter-example $t$\label{line:nfa-counter-ex}
\li \Then $E\gets E \cup \suff(t)$ \label{line:nfa-addcol-ex}
\End
\li \Until the Teacher replies {\bf yes} to the conjecture $N(S,E)$.
\li \Return $N(S,E)$
\end{codebox}
\caption{Bollig et al.'s algorithm for learning NFAs~\cite{BolligHKL09}}\label{fig:nfa-alg}
\end{figure}
The algorithm \NLStar\ equips the observation table $(S,E)$ with a union operation, allowing for the detection of \emph{composed} and \emph{prime} rows.
\begin{definition}
Let $(row(s_1) \rowunion row(s_2))(e) = row(s_1)(e) \lor row(s_1)(e)$ (regarding cells as booleans). This operation induces an ordering between rows: $row(s_1) \rowincl row(s_2)$ whenever $row(s_1)(e) = 1$ implies $row(s_2)(e) = 1$, for all $e \in E$.
\label{def:rows-jsl}
\end{definition}
A row $row(s)$ is composed if $row(s) = row(s_1) \rowunion \dots \rowunion row(s_n)$, for $row(s_i) \neq row(s)$. Otherwise it is prime. We denote by $PR^\top(S,E)$ the rows in the top part of the table (ranging over $S$) which are prime w.r.t.\ the whole table (not only w.r.t.\ the top part). We write $PR(S,E)$ for all the prime rows of $(S,E)$.

As in \LStar, states of hypothesis automata will be rows of $(S,E)$ but, as the aim is to construct a minimal RFSA, only prime rows are picked. New notions of closedness and consistency are introduced, to reflect features of RFSAs.
\begin{definition}
A table $(S,E)$ is:
\begin{itemize}
	\item \emph{RFSA-closed} if, for all $t \in S \Lext A$, $row(t) = \Rowunion \{ row(s) \in PR^\top(S,E) \mid row(s) \rowincl row(t)\}$;	\item \emph{RFSA-consistent} if, for all $s_1,s_2 \in S$ and $a \in A$, $row(s_1) \rowincl row(s_2)$ implies $row(s_1a) \rowincl row(s_2a)$.
\end{itemize}
\end{definition}
If $(S,E)$ is not RFSA-closed, then there is a row in the bottom part of the table which is prime, but not contained in the top part. This row is then added to $S$ (line \ref{line:nfa-clos-witness}). If $(S,E)$ is not RFSA-consistent, then there is a suffix which does not preserve the containment of two existing rows, so those rows are actually incomparable. A new column is added to distinguish those rows (line \ref{line:nfa-cons-witness}). Notice that counterexamples supplied by the teacher are added \emph{to columns} (line \ref{line:nfa-addcol-ex}). Indeed, in \cite{BolligHKL09} it is shown that treating the counterexamples as in the original \LStar, namely adding them to rows, does not lead to a terminating algorithm.

\begin{definition}
Given a RFSA-closed and RFSA-consistent table $(S,E)$, the conjecture automaton is $N(S,E) = (Q,Q_0,F,\delta)$, where:
\begin{itemize}
	\item $Q = PR^\top(S,E)$;
	\item $Q_0 = \{ r \in Q \mid r \rowincl row(\epsilon)\}$;
	\item $F = \{r \in Q \mid r(\epsilon) = 1\}$;
	\item the transition relation $\delta \subseteq Q \times A \times Q$ is given by $\delta(row(s),a) = \{ r \in Q \mid r \rowincl row(sa) \}$.
\end{itemize}
\label{def:nfa-conj}
\end{definition}
As observed in \cite{BolligHKL09}, $N(S,E)$ is not necessarily a RFSA, but it is a canonical RFSA if it is consistent with $(S,E)$. If the algorithm terminates, then $N(S,E)$ must be consistent with $(S,E)$, which ensures correctness. The termination argument is more involved than \LStar, but still it relies on the minimal DFA.

Developing an algorithm to learn nominal NFAs is not an obvious extension of \NLStar: non-deterministic nominal languages strictly contain nominal regular languages, so it is not clear what the developed algorithm should be able to learn. To deal with this, we introduce a nominal notion of RFSAs. They are a \emph{proper subclass} of nominal NFAs, because they recognize nominal regular languages. Nonetheless, they are more succinct than nominal DFAs. 

\subsection{Nominal residual finite-state automata}
\label{subsec:nrfsa}
Let $\lang$ be a nominal regular language and let $u$ be a finite string. The \emph{derivative} of $\lang$ w.r.t.\ $u$ is
$
	\lder{u}\lang = \{ v \in A^\star \mid uv \in \lang \}.
$
A set $\lang' \subseteq \EAlph^*$ is a \emph{residual} of $\lang$ if there is $u$ with $\lang' = \lder{u}\lang$.
Note that a residual might not be equivariant, but it does have a finite support. We write $R(\lang)$ for the set of residuals of $\lang$. Residuals form an orbit-finite nominal set: they are in bijection with the state-space of the minimal nominal DFA for $\lang$.

A \emph{nominal residual finite-state automaton} for $\lang$ is a nominal NFA whose states are subsets of such minimal automaton. Given a state $q$ of an automaton, we write $\lang(q)$ for the set of words leading from $q$ to a set of states containing a final one.
\begin{definition}
A \emph{nominal residual finite-state automaton} (nominal RFSA) is a nominal NFA $\autom$ such that $\lang(q) \in R(\lang(\autom))$, for all $q \in Q_\autom$.
\end{definition}
Intuitively, all states of a nominal RSFA recognize residuals, but not all residuals are recognized by a single state: there may be a residual $\lang'$ and a set of states $Q'$ such that $\lang' = \bigcup_{q \in Q'} \lang(q)$, but no state $q'$ is such that $\lang(q') = \lang'$. A residual $\lang'$ is called \emph{composed} if it is equal to the union of the components it strictly contains, explicitly
\[
\lang' = \Union \{ \lang'' \in R(\lang) \mid \lang'' \subsetneqq \lang' \} \enspace ;
\]
otherwise it is called \emph{prime}.
In an ordinary RSFA, composed residuals have finitely-many components. This is not the case in a nominal RFSA. However, the set of components of $\lang'$ always has a finite support, namely $\supp(\lang')$.

The set of prime residuals $PR(\lang)$ is an orbit-finite nominal set, and can be used to define a \emph{canonical} nominal RFSA for $\lang$, which has the minimal number of states and the maximal number of transitions. This can be regarded as obtained from the minimal nominal DFA, by removing composed states and adding all initial states and transitions that do not change the recognized language. This automaton is necessarily unique.
\begin{lemma}
\label{lem:can-rfsa}
Let the \emph{canonical} nominal RSFA of $\lang$ be $(Q,Q_0,F,\delta)$ such that:
\begin{itemize}
	\item $Q = PR(\lang)$;
		\item $Q_0 = \{ \lang ' \in Q \mid \lang' \subseteq \lang\}$;
	\item $F = \{ \lang' \in Q \mid \epsilon \in \lang' \}$;
	\item $\delta(\lang_1,a) = \{ \lang_2 \in Q \mid \lang_2 \subseteq \lder{a}\lang_1 \}$.
\end{itemize}
It is a well-defined nominal NFA accepting $\lang$.
\end{lemma}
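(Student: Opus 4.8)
The plan is to verify that the tuple $(Q,Q_0,F,\delta)$ given in the statement is (i) a legitimate nominal NFA — i.e.\ that $Q$ is orbit-finite and that $Q_0$, $F$, $\delta$ are equivariant — and (ii) that the language it accepts is exactly $\lang$. The orbit-finiteness of $Q = PR(\lang)$ is already granted in the text preceding the lemma, so the first task reduces to an equivariance check. For each of $Q_0$, $F$ and $\delta$ I would unfold the defining condition and apply a permutation $\pi$, using that $\lang$ is equivariant (hence $\pi\cdot\lder{a}\lang' = \lder{\pi\cdot a}(\pi\cdot\lang')$ and that set-inclusion, membership of $\epsilon$, and containment in $\lang$ are all preserved by the group action). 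For instance $\lang'\subseteq\lang \iff \pi\cdot\lang'\subseteq\pi\cdot\lang=\lang$ gives equivariance of $Q_0$; $\epsilon\in\lang'\iff\epsilon=\pi\cdot\epsilon\in\pi\cdot\lang'$ gives it for $F$; and the derivative identity gives it for $\delta$. One should also note $\delta$ is well-defined as a subset of $Q\times A\times Q$: a prime residual contained in a derivative of a prime residual is still prime, so targets indeed lie in $Q$.

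\textbf{The substantive part is accepting the right language.} Writing $\lang(q)$ as in the text, I would first establish the key semantic identity that for every prime residual $\lang'\in Q$, the language recognized from state $\lang'$ in this NFA is $\lang'$ itself, i.e.\ $\lang(\lang')=\lang'$. The natural route is induction on word length. For $\epsilon$: $\epsilon\in\lang(\lang')$ iff $\lang'$ is final iff $\epsilon\in\lang'$, by definition of $F$. For the inductive step on $aw$: a word $aw$ is accepted from $\lang'$ iff there is a transition to some $\lang''\in\delta(\lang',a)$ with $w$ accepted from $\lang''$, i.e.\ (by induction) $w\in\lang''$ for some prime $\lang''\subseteq\lder{a}\lang'$; this should be shown equivalent to $w\in\lder{a}\lang'$, i.e.\ $aw\in\lang'$. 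The forward direction is immediate since $\lang''\subseteq\lder{a}\lang'$. The converse — that every element of the residual $\lder{a}\lang'$ is covered by \emph{some prime} residual below it — is where the structure of RFSAs does the work.

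\textbf{The main obstacle} will be exactly that converse covering step: showing that each residual (in particular $\lder{a}\lang'$) equals the union of the prime residuals it contains. In the finite setting this is a routine finite induction on the number of strictly-smaller residuals, but here residuals can be \emph{composed} into orbit-infinitely many components, so I must invoke the nominal well-foundedness observed in the text — the components of $\lang'$ form a finitely supported set, and the relation $\subsetneqq$ on the orbit-finite set $R(\lang)$ is well-founded — to run the induction and conclude every residual is the union of the primes below it. Granting this covering lemma, $\lder{a}\lang' = \bigcup\{\lang''\in PR(\lang)\mid \lang''\subseteq\lder{a}\lang'\}$, which closes the induction. Finally the accepted language of the whole automaton is $\Union_{\lang'\in Q_0}\lang(\lang') = \Union\{\lang'\in PR(\lang)\mid\lang'\subseteq\lang\}$, and applying the covering lemma once more to the residual $\lang=\lder{\epsilon}\lang$ itself yields exactly $\lang$, completing the proof.
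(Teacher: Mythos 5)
Your proposal is correct, but note that it proves strictly more than the paper's own proof does. The proof given in the paper's appendix addresses \emph{only} well-definedness: it shows $Q = PR(\lang)$ is equivariant and orbit-finite and that $Q_0$, $F$ and $\delta$ are equivariant (the appendix even restates the lemma as ``the canonical nominal RFSA is a proper nominal NFA,'' silently dropping the clause ``accepting $\lang$''; for facts of that kind the authors elsewhere defer to classical RFSA theory~\cite{DenisLT02,NLTR}, asserting the nominal proofs are ``exactly the same''). On the structural part your checks coincide with theirs --- your argument for equivariance of $Q_0$ via $\lang'\subseteq\lang \iff \pi\cdot\lang'\subseteq\pi\cdot\lang=\lang$ is in fact cleaner than the paper's support computation --- and on top of that you supply the acceptance proof: the induction establishing $\lang(\lang')=\lang'$ for every prime residual $\lang'$, reduced to the covering lemma that every residual is the union of the prime residuals it contains. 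That is exactly the content the paper leaves implicit, and your identification of the covering lemma as the place where the nominal setting genuinely differs from the finite one is right.

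One caveat: you claim the needed well-foundedness of $\subsetneqq$ on $R(\lang)$ is ``observed in the text.'' It is not --- the text only observes that the set of components of a residual is finitely supported. Well-foundedness is an additional fact you must prove, and it does not follow from the classical argument, since orbits are infinite sets. It does hold for equality atoms, by an argument you should include: an infinite strictly descending chain in the orbit-finite set $R(\lang)$ would, by pigeonhole, contain two comparable elements of the same orbit, giving $\sigma\cdot X\subsetneqq X$ for some finitely supported $X$ and permutation $\sigma$; now replace $\sigma$ by a finite-order permutation $\sigma'$ agreeing with $\sigma$ on $\supp(X)$ (close up the finite paths of $\sigma$ restricted to $\supp(X)$ into cycles), so that $\sigma\cdot X=\sigma'\cdot X$; if $n$ is the order of $\sigma'$, then $X=\sigma'^{\,n}\cdot X\subseteq\dots\subseteq\sigma'\cdot X\subsetneqq X$, a contradiction. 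With that lemma filled in, your acceptance argument is complete and the whole proof goes through. (Your parenthetical remark that targets of $\delta$ ``are still prime'' is redundant rather than wrong: the definition of $\delta$ already quantifies over $\lang_2\in Q$.)
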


\subsection{\nNLStar}

Our nominal version of \NLStar\ again makes use of an observation table $(S,E)$ where $S$ and $E$ are equivariant subsets of $A^*$ and $row$ is an equivariant function. As in the basic algorithm, we equip $(S,E)$ with a union operation $\rowunion$ and row containment relation $\rowincl$, defined as in Definition~\ref{def:rows-jsl}. It is immediate to verify that $\rowunion$ and $\rowincl$ are equivariant.

Our algorithm is a simple modification of the algorithm in Figure~\ref{fig:nfa-alg}, where a few lines are replaced:
\[
\begin{array}{ll}
	6' &S \gets S \cup \orb(sa) \\
	9' &E \gets E \cup \orb(ae) \\
	12' &E \gets E \cup \suff(\orb(t))
\end{array}
\]
Switching to nominal sets, several decidability issues arise. The most critical one is that rows may be the union of infinitely many component rows, as happens for residuals of nominal languages, so finding all such components can be challenging. We adapt the notion of composed to rows: $row(t)$ is composed whenever
\[
	row(t) = \Rowunion \{ row(s) \mid row(s) \rowincls row(t) \} \enspace .
\]
where $\rowincls$ is \emph{strict} row inclusion; otherwise $row(t)$ is prime.

We now check that relevant parts of our algorithm terminate.
\paragraph{Row containment check.}
The basic containment check $row(s) \rowincl row(t)$ is decidable, as $row(s)$ and $row(t)$ are supported by the finite supports of $s$ and $t$ respectively.
\paragraph{line \ref{line:nfa-checks}: RFSA-closedness and RFSA-consistency checks.}

We first show that prime rows form orbit-finite nominal sets.
\begin{lemma}
$PR(S,E)$, $PR^\top(S,E)$ and $PR(S,E) \setminus PR^\top(S,E)$ are orbit-finite nominal sets.
\label{lem:pr-ns}
\end{lemma}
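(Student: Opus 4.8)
The plan is to show that each of the three sets is an equivariant subset of an orbit-finite nominal set, and that equivariance is enough to conclude orbit-finiteness. Concretely, $PR(S,E)$, $PR^\top(S,E)$ and their difference are all subsets of $\{row(u) \mid u \in S \cup S \Lext A\}$ (respectively of $\{row(s) \mid s \in S\}$), which is orbit-finite because it is the image of the orbit-finite set $S \cup S \Lext A$ under the equivariant function $row$. Since an equivariant subset of an orbit-finite nominal set is a union of some of its orbits, it is itself orbit-finite. So the real content is to verify that the three sets are genuinely \emph{equivariant}, i.e.\ closed under the action of permutations.

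First I would establish equivariance of the predicate ``$row(t)$ is prime''. Recall that a row is composed iff $row(t) = \Rowunion \{ row(s) \mid row(s) \rowincls row(t) \}$. Because $row$ is equivariant and both $\rowunion$ and $\rowincls$ are equivariant (as already noted in the text, $\rowunion$ and $\rowincl$ are equivariant, and strict inclusion inherits this), applying a permutation $\pi$ to this equation yields $\pi \cdot row(t) = \Rowunion \{ \pi \cdot row(s) \mid row(s) \rowincls row(t) \}$. Reindexing the union over $s' = \pi \cdot s$ and using that $\pi \cdot row(s) = row(\pi \cdot s)$ and that $row(s) \rowincls row(t)$ iff $row(\pi \cdot s) \rowincls row(\pi \cdot t)$, this becomes exactly the compositeness condition for $\pi \cdot row(t) = row(\pi \cdot t)$. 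Hence $row(t)$ is composed iff $row(\pi \cdot t)$ is composed, so the set of prime rows $PR(S,E)$ is equivariant. The same reasoning restricted to rows in the top part shows that the condition ``prime w.r.t.\ the whole table and indexed by some $s \in S$'' is permutation-invariant, giving equivariance of $PR^\top(S,E)$; note here that $S$ is equivariant, so being indexable by a top element is itself a permutation-stable property. Finally, the difference of two equivariant sets is equivariant, which handles $PR(S,E) \setminus PR^\top(S,E)$.

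The one subtlety I would flag as the main obstacle is the definition of compositeness in the nominal setting: the union defining a composed row may range over \emph{infinitely many} component rows, unlike the finite unions in the classical \NLStar. I must therefore check that $\Rowunion$ over an orbit-finite (possibly infinite) index set is well-defined and equivariant as an operation on rows. This is where I would lean on the earlier observation that a composed residual, and hence a composed row, has its set of components supported by $\supp$ of the row itself, so the defining union is a legitimate nominal construction and commutes with the permutation action. Once this is in place, the equivariance argument above goes through verbatim, and orbit-finiteness follows immediately from the image-of-$row$ argument. I expect the rest to be routine bookkeeping with the group action.
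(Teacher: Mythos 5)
Your proof is correct and follows essentially the same route as the paper's: both establish equivariance of the prime-row sets from equivariance of $row$ and $\rowincls$ (the paper phrases this as a proof by contradiction, you as a direct biconditional), and then conclude orbit-finiteness because an equivariant subset of an orbit-finite nominal set is orbit-finite, handling $PR(S,E) \setminus PR^\top(S,E)$ by removal of orbits. Your extra check that $\Rowunion$ over a possibly infinite family of rows is well-defined and commutes with the permutation action is sound bookkeeping (it is a pointwise supremum in $2^E$) that the paper leaves implicit.
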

Consider now RFSA-closedness. It requires computing the set $C(row(t))$ of components of $row(t)$ contained in $PR^{\top}(S,E)$ (possibly including $row(t)$). This may not be equivariant under permutations $Perm(\EAlph)$, but it is if we pick a subgroup.
\begin{lemma}
The set $C(row(t))$ has the following properties:
\begin{enumerate}
	\item $\supp(C(row(t))) \subseteq \supp(row(t))$. \label{supp}
	\item it is equivariant and orbit-finite under the action of the group
	\[
		G_t = \{ \pi \in Perm(\EAlph) \mid \restr{\pi}{\supp(row(t))} = id \}
	\]
	of permutations fixing $\supp(row(t))$. \label{orb-fin}
\end{enumerate}
\label{lem:comp-prop}
\end{lemma}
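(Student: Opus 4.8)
The plan is to prove the two properties of $C(row(t))$ separately, exploiting that $C(row(t))$ is defined purely in terms of $row(t)$ and the equivariant set $PR^\top(S,E)$, whose orbit-finiteness is guaranteed by Lemma~\ref{lem:pr-ns}. Recall that $C(row(t)) = \{ row(s) \in PR^\top(S,E) \mid row(s) \rowincl row(t) \}$, so it is singled out from an equivariant orbit-finite set by the unary predicate ``$\rowincl row(t)$''.

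For property~\ref{supp}, the key observation is a standard fact about nominal sets: if a set $Y$ is defined from parameters $p_1,\dots,p_m$ by a formula that is equivariant in all its arguments, then $\supp(Y) \subseteq \bigcup_i \supp(p_i)$. Here the only parameter that is not itself equivariant is $row(t)$ (the set $PR^\top(S,E)$ and the relation $\rowincl$ are both equivariant, the latter being noted earlier in the paper). Concretely, I would take any $\pi \in Perm(\EAlph)$ fixing $\supp(row(t))$ pointwise and show $\pi \cdot C(row(t)) = C(row(t))$. Applying $\pi$ to a membership statement $row(s) \in PR^\top(S,E) \wedge row(s) \rowincl row(t)$ gives, by equivariance of $PR^\top(S,E)$ and of $\rowincl$, the statement $\pi \cdot row(s) \in PR^\top(S,E) \wedge \pi\cdot row(s) \rowincl \pi \cdot row(t)$; since $\pi$ fixes $\supp(row(t))$ we have $\pi \cdot row(t) = row(t)$, so this is exactly the condition for $\pi \cdot row(s) \in C(row(t))$. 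Thus $\pi \cdot C(row(t)) \subseteq C(row(t))$, and applying the same to $\pi^{-1}$ yields equality, so $\supp(row(t))$ supports $C(row(t))$ and property~\ref{supp} follows from minimality of the support.

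For property~\ref{orb-fin}, equivariance under $G_t$ is the statement just proved (every $\pi \in G_t$ fixes $C(row(t))$ setwise, which in particular makes $C(row(t))$ equivariant for the $G_t$-action, viewing each element's orbit under $G_t$). For orbit-finiteness under $G_t$, I would argue that $C(row(t))$ is a subset of the orbit-finite set $PR^\top(S,E)$, and that restricting the acting group from $Perm(\EAlph)$ to the subgroup $G_t$ fixing a finite set $\supp(row(t))$ keeps the set orbit-finite: a single $Perm(\EAlph)$-orbit, when cut down to $G_t$, splits into only finitely many $G_t$-orbits, because the relative position of an element's support with respect to the finite frozen set $\supp(row(t))$ can be classified in finitely many ways (this is the same counting principle used in the query-complexity section via $f_\EAlph$). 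Hence $PR^\top(S,E)$ is $G_t$-orbit-finite, and so is its $G_t$-equivariant subset $C(row(t))$, since an equivariant subset of an orbit-finite set is a union of some of its finitely many orbits.

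The main obstacle is property~\ref{orb-fin}, specifically making precise that passing from $Perm(\EAlph)$ to the stabilizer subgroup $G_t$ preserves orbit-finiteness. The support-restriction argument of property~\ref{supp} is routine once one invokes the equivariance of $PR^\top(S,E)$ and $\rowincl$, but the orbit-finiteness claim requires the nonobvious fact that orbit-finiteness is stable under restricting the group action to the pointwise stabilizer of a finite set of atoms. I would handle this by citing the general nominal-set machinery of \cite{BojanczykKL14}, where this closure property is established, rather than re-deriving the orbit count by hand; the finiteness then transfers to $C(row(t))$ because it is a $G_t$-equivariant subset, hence a finite union of $G_t$-orbits of $PR^\top(S,E)$.
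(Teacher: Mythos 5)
Your proposal is correct and takes essentially the same route as the paper: property~1 by showing every $\pi$ fixing $\supp(row(t))$ pointwise fixes $C(row(t))$ setwise (via equivariance of $\rowincl$ and $PR^\top(S,E)$, then minimality of support), and property~2 by reducing to $G_t$-orbit-finiteness of $PR^\top(S,E)$ and transferring it to $C(row(t))$ along $G_t$-equivariance. The only difference is that where you cite the general fact that orbit-finiteness survives restriction to the pointwise stabilizer of a finite set of atoms, the paper proves it inline: each $Perm(\EAlph)$-orbit of dimension $k$ in $PR^\top(S,E)$ is the image of $\EAlph^{(k)}$ under an equivariant surjection, and $\EAlph^{(k)}$ has only finitely many $G_t$-orbits since two tuples lie in the same $G_t$-orbit exactly when they agree on the positions holding atoms of $\supp(row(t))$.
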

We established that $C(row(t))$ can be effectively computed, and the same holds for $\Rowunion C(row(t))$. In fact, $\Rowunion$ is equivariant w.r.t\ the whole $Perm(\EAlph)$ and then, in particular, w.r.t.\ $G_t$, so it preserves orbit-finiteness. Now, to check $row(t) = \Rowunion C(row(t))$, we can just pick one representative of every orbit of $S \Lext A$, because we have $C(\pi \cdot row(t)) = \pi \cdot C(row(t))$ and permutations distribute over $\rowunion$, so permuting both sides of the equation gives again a valid equation.

For RFSA-consistency, consider the two sets:
\begin{align*}
	N &= \{ (s_1,s_2) \in S \times S \mid row(s_1) \rowincl row(s_2) \} \\
	M &= \{ (s_1,s_2) \in S \times S \mid \forall a \in A : row(s_1a) \rowincl row(s_2a) \}
\end{align*}
They are both orbit-finite nominal sets, by equivariance of $row$, $\rowincl$ and $A$. We can check RFSA-consistency in finite time by picking orbit representatives from $N$ and $M$. For each representative $n \in N$, we look for a representative $m \in M$ and a permutation $\pi$ such that $n = \pi \cdot m$. If no such $m$ and $\pi$ exist, then $n$ does not belong to any orbit of $M$, so it violates RFSA-consistency.
\paragraph{lines \ref{line:nfa-clos-witness} and \ref{line:nfa-cons-witness}: finding witnesses for violations.}
We can find witnesses by comparing orbit representatives of orbit-finite sets, as we did with RFSA-consistency. Specifically, we can pick representatives in $S \times A$ and $S \times S \times A \times E$ and check them against the following orbit-finite nominal sets:
\begin{itemize}
	\item $\{ (s,a) \in S \times A \mid row(sa) \in PR(S,E) \setminus PR^{\top}(S,E) \}$;
	\item $\{ (s_1,s_2,a,e) \in S \times S \times A \times E \mid row(s_1a)(e) = 1, row(s_2a)(e) = 0, row(s_1) \rowincl row(s_2) \}$;
\end{itemize}

\subsection{Correctness}
Now we prove correctness and termination of the algorithm. First, we prove that hypothesis automata are nominal NFAs.
\begin{lemma}
\label{lem:nlstarwelldef}
The hypothesis automaton $N(S,E)$ (see Definition~\ref{def:nfa-conj}) is a nominal NFA.
\end{lemma}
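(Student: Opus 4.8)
The plan is to verify that $N(S,E)$, as given in Definition~\ref{def:nfa-conj}, satisfies all four requirements of a nominal NFA from Section~\ref{sec:prelim}: the state space $Q$ must be orbit-finite, the initial and final sets $Q_0, F$ must be equivariant subsets of $Q$, and the transition relation $\delta$ must be equivariant. Throughout, the workhorse fact is that $row$ is an equivariant function and that $\rowunion$ and $\rowincl$ are equivariant, as noted just before the statement; every structural component of $N(S,E)$ is defined purely from $row$, $\rowincl$, and membership in $PR^\top(S,E)$, so equivariance should propagate through each definition.

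First I would handle $Q = PR^\top(S,E)$. Orbit-finiteness is exactly Lemma~\ref{lem:pr-ns}, which I may invoke directly. I should also note that $PR^\top(S,E)$ is equivariant: primeness is defined by an equation involving $\rowunion$ and $\rowincls$, all equivariant, and being in the top part ranges over the equivariant set $S$; hence applying a permutation $\pi$ to a prime top-row yields another prime top-row, so $Q$ is a genuine nominal set. Next, for $F = \{ r \in Q \mid r(\epsilon) = 1\}$, I would use that $\epsilon \in E$ is a fixed point of every permutation, so $(\pi \cdot r)(\epsilon) = (\pi \cdot r)(\pi \cdot \epsilon) = r(\epsilon)$ by property (P2)-style invariance of $row$; thus $r(\epsilon) = 1 \iff (\pi\cdot r)(\epsilon)=1$, giving equivariance of $F$. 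For $Q_0 = \{ r \in Q \mid r \rowincl row(\epsilon)\}$, since $row(\epsilon)$ is fixed by all permutations (as $\epsilon$ is) and $\rowincl$ is equivariant, $r \rowincl row(\epsilon)$ is preserved under $\pi$, so $Q_0$ is equivariant.

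The transition relation $\delta(row(s),a) = \{ r \in Q \mid r \rowincl row(sa)\}$ requires slightly more care, so I would do it last. I must check two things: that $\delta$ is well-defined (independent of the representative $s$ chosen for a given row) and that it is equivariant as a subset of $Q \times A \times Q$. Well-definedness follows because the right-hand side depends on $row(s)$ only through $row(sa)$, and RFSA-consistency of the table guarantees that equal (indeed $\rowincl$-comparable) rows yield comparable successor rows; this is precisely the role RFSA-consistency plays. Equivariance then follows by applying $\pi$: using equivariance of $row$ we have $\pi \cdot row(sa) = row(\pi\cdot(sa))$, and since $\rowincl$ and membership in $Q$ are preserved, $(\pi\cdot row(s), \pi\cdot a, \pi\cdot r) \in \delta$ whenever $(row(s),a,r)\in\delta$.

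The main obstacle I anticipate is the well-definedness of $\delta$ together with the subtlety, flagged by the authors in the discussion after Definition~\ref{def:nfa-conj}, that $N(S,E)$ is only guaranteed to be a \emph{canonical} RFSA when it is consistent with the table — but being a nominal \emph{NFA} is a weaker, purely structural property that does not require this. I would therefore be careful to claim only equivariance and orbit-finiteness here, deferring any RFSA-specific or language-correctness properties to the subsequent termination and correctness arguments. The remaining steps are routine equivariance checks that amount to pushing a permutation through each defining formula and invoking equivariance of $row$, $\rowincl$, $\rowunion$, and $A$.
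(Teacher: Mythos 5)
Your proposal is correct and follows essentially the same route as the paper's proof: a component-by-component verification that $Q$, $Q_0$, $F$, and $\delta$ are orbit-finite resp.\ equivariant, obtained by pushing permutations through $row$ and $\rowincl$ and invoking Lemma~\ref{lem:pr-ns} for the state space. The only differences are cosmetic: you handle $Q_0$ by the direct observation that $row(\epsilon)$ is fixed by every permutation (where the paper instead derives $\supp(Q_0) = \emptyset$ from Lemmas~\ref{lem:nerode_support} and~\ref{lem:comp-prop}), and you additionally check well-definedness of $\delta$ via RFSA-consistency, a point the paper leaves implicit.
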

$N(S,E)$, as in ordinary \NLStar, is not always a nominal RFSA. However, we have the following.

\begin{theorem}
If the table $(S,E)$ is RFSA-closed, RFSA-consistent and $N(S,E)$ is consistent with $(S,E)$, then $N(S,E)$ is a canonical nominal RFSA.
\label{th:nlstar-rfsa}
\end{theorem}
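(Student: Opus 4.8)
The plan is to show that, under the hypotheses, the hypothesis automaton $N(S,E)$ coincides with the canonical nominal RFSA of $\lang$ as described in Lemma~\ref{lem:can-rfsa}. The natural strategy is to construct an isomorphism between $N(S,E)$ and the canonical RFSA, matching up states, initial/final sets, and transitions. Recall from Lemma~\ref{lem:nlstarwelldef} that $N(S,E)$ is already a nominal NFA, so the remaining work is to identify it with the canonical object. I would first exploit the assumption that $N(S,E)$ is consistent with $(S,E)$: combined with RFSA-closedness and RFSA-consistency, this should guarantee that $\lang(N(S,E)) = \lang$, so that the residuals and prime residuals of $\lang$ are well-defined targets to match against.

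First I would set up the map sending each state $row(s) \in PR^\top(S,E)$ to the residual $\lder{s}\lang$. The key intermediate claim is that this $row$-to-residual correspondence is a bijection between prime rows $PR^\top(S,E)$ and prime residuals $PR(\lang)$, and that it is equivariant. The containment structure $\rowincl$ on rows must be shown to correspond exactly to set inclusion $\subseteq$ on residuals; this is where consistency with $(S,E)$ is crucial, as it ensures that the finite ``window'' $E$ already separates rows precisely as the full languages are separated. Once row-inclusion matches residual-inclusion, primeness of rows (defined via $\rowincls$) matches primeness of residuals (defined via $\subsetneqq$), so the state sets align.

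Next I would check that the bijection respects the automaton structure, verifying the four clauses of Definition~\ref{def:nfa-conj} against the four clauses of Lemma~\ref{lem:can-rfsa}. For initial states, $r \rowincl row(\epsilon)$ should translate to $\lang' \subseteq \lang$ since $row(\epsilon)$ corresponds to $\lder{\epsilon}\lang = \lang$. For final states, $r(\epsilon)=1$ corresponds to $\epsilon \in \lang'$, using that $\epsilon \in E$. For transitions, $\delta(row(s),a) = \{r \mid r \rowincl row(sa)\}$ should match $\delta(\lang_1,a) = \{\lang_2 \mid \lang_2 \subseteq \lder{a}\lang_1\}$, using that $row(sa)$ corresponds to $\lder{sa}\lang = \lder{a}(\lder{s}\lang)$. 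Since all these maps are equivariant, the resulting isomorphism is an isomorphism of nominal NFAs, and because the canonical RFSA is unique (as noted before Lemma~\ref{lem:can-rfsa}), $N(S,E)$ is a canonical nominal RFSA.

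I expect the main obstacle to be establishing the exact correspondence between the finite, table-based orderings ($\rowincl$, $\rowincls$, primeness w.r.t.\ the table) and their infinite, language-theoretic counterparts ($\subseteq$, $\subsetneqq$, primeness of residuals). In the ordinary (non-nominal) setting this is the heart of the \NLStar\ correctness argument from \cite{BolligHKL09}, and the nominal subtlety is that composed rows may have infinitely many components; I would need the consistency-with-$(S,E)$ hypothesis to bridge the gap, arguing that $E$ being suffix-closed and rich enough forces the table orderings to agree with the true language orderings on the relevant (orbit-finitely many) rows. The supporting facts from Lemma~\ref{lem:comp-prop}, ensuring components behave well under the subgroup $G_t$, should make the infinite-component issue tractable, so that the finite symbolic reasoning faithfully reflects the infinite canonical RFSA.
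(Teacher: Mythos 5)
Your proposal breaks at its foundational step. You claim that RFSA-closedness, RFSA-consistency, and consistency of $N(S,E)$ with $(S,E)$ ``should guarantee that $\lang(N(S,E)) = \lang$''. This is false, and the theorem does not assert it: it says $N(S,E)$ is \emph{a} canonical nominal RFSA, i.e., the canonical RFSA \emph{of the language it itself accepts}, which may well differ from the target $\lang$. Consistency with the table constrains $N(S,E)$ only on the words $se$ with $s \in S \cup S\Lext A$, $e \in E$; it cannot pin down $\lang$ elsewhere --- if it could, every table-consistent hypothesis would automatically be correct and the teacher could never return a counterexample to one. A concrete refutation is the first step of \nNLStar\ on the paper's own running example $\lang_1 = \{aa \mid a \in \EAlph\}$: with $S = E = \{\epsilon\}$ every table entry is $0$, the table is RFSA-closed and RFSA-consistent, and $N(S,E)$ has no accepting states (no row satisfies $r(\epsilon)=1$), so $\lang(N(S,E)) = \emptyset$. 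This hypothesis is trivially consistent with the table, and it is indeed a canonical RFSA --- of the empty language --- but certainly not of $\lang_1$. The same example kills your key intermediate claim that, under table-consistency, $\rowincl$ mirrors inclusion of residuals of $\lang$: here $row(\epsilon)$ and $row(a)$ coincide (both are the zero row), so each is $\rowincl$-below the other, yet neither $\lder{\epsilon}\lang_1 = \lang_1$ nor $\lder{a}\lang_1 = \{a\}$ contains the other.

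Consequently the bijection you set up, $row(s) \mapsto \lder{s}\lang$, onto the prime residuals of the \emph{target} is aimed at the wrong object, and the strategy cannot be repaired while it targets $\lang$. The correct argument --- which the paper obtains simply by citing \cite{NLTR} and observing that the classical proof transfers verbatim once derivatives and RFSAs are replaced by the nominal notions of Section~\ref{subsec:nrfsa} --- relates prime rows to prime residuals of $\lang(N(S,E))$ itself; consistency with the table is used to tie the table entries to the \emph{hypothesis's} language, not to transfer information about $\lang$ beyond the table. Only after the teacher confirms equivalence does canonicity for $\lang(N(S,E))$ become canonicity for $\lang$, which is exactly how Theorem~\ref{th:nom-nl-term} uses this theorem. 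Your general outline (state bijection, matching of orderings, verification of the four structural clauses, conclusion by uniqueness of the canonical RFSA) has the right shape for \emph{that} statement, but as written your proposal proves --- and can only prove --- a claim that is false.
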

This is proved in \cite{NLTR} for ordinary RFSAs, using the standard theory of regular languages. The nominal proof is exactly the same, using derivatives of nominal regular languages and nominal RFSAs as defined in Section~\ref{subsec:nrfsa}.
\begin{lemma}
The table $(S,E)$ cannot have more than $n$ orbits of distinct rows, where $n$ is the number of orbits of the minimal nominal DFA for the target language.
\label{lem:rows-min-dfa}
\end{lemma}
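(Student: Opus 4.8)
The plan is to mimic the orbit-counting argument used in the deterministic case right after Lemma~\ref{lem:row_equiv}, extending it so that it covers \emph{all} rows of the table rather than only the top part. The central fact is that the nominal Myhill--Nerode equivalence $\equiv_\lang$ refines the row equivalence $\sim$ on the whole of $A^\star$, and that this refinement is realised by an equivariant surjection, which cannot increase the number of orbits.

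First I would extend the row map to a total map $\phi \colon A^\star \to 2^E$ by setting $\phi(u)(e) = \lang(ue)$. This agrees with $row$ on $S \cup S \Lext A$, and it is equivariant since both $\lang$ and $E$ are. The key observation, which is exactly the content of Lemma~\ref{lem:nerode_approx}, is that $u \equiv_\lang v$ implies $\phi(u) = \phi(v)$: because $E \subseteq A^\star$, agreement of $u$ and $v$ on \emph{all} suffixes in particular forces agreement on the suffixes that appear in $E$. Hence $\phi$ factors through the quotient, yielding an equivariant map $\bar\phi \colon A^\star/{\equiv_\lang} \to 2^E$.

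Next I would identify the set of distinct rows of the table with the image $\phi(S \cup S \Lext A)$. Since $S$ is equivariant and $A$ is an equivariant orbit-finite alphabet, $S \cup S \Lext A$ is equivariant, so its image under the equivariant quotient map $A^\star \to A^\star/{\equiv_\lang}$ is an equivariant subset, i.e.\ a union of orbits. By the Myhill--Nerode theorem for nominal sets, $A^\star/{\equiv_\lang}$ has exactly $n$ orbits, namely the state space of the minimal nominal DFA for $\lang$, so this subset has at most $n$ orbits.

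Finally I would conclude using the orbit-counting principle recalled in Section~\ref{sec:prelim}: $\bar\phi$ restricts to an equivariant \emph{surjection} from this subset onto the set of distinct rows, and an equivariant surjection cannot increase the number of orbits. Therefore the distinct rows form a nominal set with at most $n$ orbits. The whole argument is essentially a restatement of Lemma~\ref{lem:row_equiv}, and I expect no genuine obstacle; the only point that needs a little care is to run the refinement argument over the full domain $S \cup S \Lext A$ (so that the bottom part of the table is counted as well), but this is immediate because the factoring map $\bar\phi$ is defined on all of $A^\star/{\equiv_\lang}$ and does not depend on which part of the table a string comes from.
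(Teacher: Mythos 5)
Your proposal is correct and takes essentially the same route as the paper: the paper's one-line proof---rows are (restrictions to $E$ of) residuals of $\lang$, which are in bijection with the states of the minimal nominal DFA, hence at most $n$ orbits---is precisely your factorization of the row map through $A^\star/{\equiv_\lang}$ followed by the observation that an equivariant surjection cannot increase the number of orbits. You merely spell out the details (equivariance of $\phi$, the quotient, and covering the bottom part $S \Lext A$) that the paper leaves implicit.
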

\begin{proof}
Rows are residuals of $\lang$, which are states of the minimal nominal DFA for $\lang$, so orbits cannot be more than $n$.
\end{proof}
\begin{theorem}
The algorithm \nNLStar\ terminates and returns the canonical nominal RFSA for $\lang$.
\label{th:nom-nl-term}
\end{theorem}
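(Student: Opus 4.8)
The plan is to mirror the termination proof for the deterministic algorithm (Theorem~\ref{thm:termination}), but accounting for the three ways progress can be made in the nominal setting. The overall strategy is to show that (i) every individual step---each closedness fix, each consistency fix, and each counterexample incorporation---can be performed in finite time, and (ii) the loop can only iterate finitely often, because each iteration strictly advances a quantity that is bounded. Correctness then follows immediately: when the algorithm halts, the teacher has just answered \textbf{yes} to an equivalence query, and by Theorem~\ref{th:nlstar-rfsa} the output $N(S,E)$ is a canonical nominal RFSA recognizing $\lang$; uniqueness of the canonical nominal RFSA (Lemma~\ref{lem:can-rfsa} and the surrounding discussion) pins it down.

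First I would dispatch the ``each step terminates'' obligations by citing the decidability results already established in the preceding subsection: the row-containment check is decidable, RFSA-closedness and RFSA-consistency are decidable via orbit representatives of the relevant orbit-finite sets (Lemmas~\ref{lem:pr-ns} and~\ref{lem:comp-prop}), and witnesses for violations are found by comparing representatives of the two displayed orbit-finite witness sets. This reduces the theorem to a progress-and-boundedness argument on the observation table. The key finiteness input is Lemma~\ref{lem:rows-min-dfa}: the table has at most $n$ orbits of distinct rows, where $n$ is the number of orbits of the minimal nominal DFA for $\lang$.

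Next I would argue that the three kinds of modifications cannot happen infinitely often. For RFSA-closedness (line~$6'$), adding $\orb(sa)$ introduces a new prime row $row(sa)$ into the top part that was not previously there, so the number of orbits of prime rows in $PR^\top(S,E)$ strictly increases; since all rows are residuals and hence bounded by $n$ orbits (Lemma~\ref{lem:rows-min-dfa}), this happens finitely often. For RFSA-consistency and for counterexample handling, columns are added, and here I expect the argument to parallel the deterministic case: adding a column either splits an orbit (increasing the orbit count, bounded by $n$), or enlarges the support of some row (bounded by the dimension $k$ via Lemma~\ref{lem:nerode_support}, which applies since rows are residuals), or strictly shrinks a local-symmetry subgroup of $S_k$ (which can happen only finitely often along a subgroup chain). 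Since each equivalence query that returns \textbf{no} forces $N(S,E)$ to change by one of these monotone measures, and each measure is bounded, only finitely many equivalence queries occur.

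The main obstacle will be the consistency/counterexample bookkeeping in the non-deterministic setting, where the hypothesis-state set is $PR^\top(S,E)$ rather than all rows. Two points require care. First, because counterexamples are added \emph{as columns} (line~$12'$, following~\cite{BolligHKL09}), I must justify that adding suffixes always makes genuine progress---specifically, that the new column refutes the containment or distinctness relation that made the previous hypothesis wrong, so that the refined table yields a strictly ``larger'' automaton in the combined order (more orbits, or bigger dimension, or fewer local symmetries). This is the nominal analogue of the delicate termination argument for \NLStar, and unlike the deterministic \LStar\ it is \emph{not} simply a matter of adding a distinguishing row. Second, I must confirm that the three progress measures are simultaneously well-founded: the natural approach is a lexicographic ranking---orbit count up to $n$, then per-orbit dimension up to $k$, then length of the local-symmetry subgroup chain---so that every step strictly decreases the rank, which bounds the total number of iterations and hence equivalence queries. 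Once both points are secured, termination follows, and correctness is immediate from Theorem~\ref{th:nlstar-rfsa}.
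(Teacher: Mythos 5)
Your overall skeleton matches the paper's proof: correctness is immediate from Theorem~\ref{th:nlstar-rfsa} once termination holds, the per-step decidability obligations are discharged by the preceding subsection (Lemmas~\ref{lem:pr-ns} and~\ref{lem:comp-prop}), and the closedness case is handled by noting that adding $\orb(sa)$ strictly increases the number of orbits of distinct rows, which is bounded by $n$ via Lemma~\ref{lem:rows-min-dfa}. One caveat on that case: prefer the paper's measure, the number of orbits of $S/{\sim}$, over your ``orbits of $PR^\top(S,E)$''; primality of a row is not stable as the table grows, so your measure is not obviously monotone across the other kinds of steps, whereas orbits of $S/{\sim}$ can only increase.

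The genuine gap is in the consistency/counterexample case, and it is exactly the obstacle you flagged but did not resolve. Your proposed lexicographic rank---orbit count, then per-orbit dimension, then local-symmetry subgroup chain---is the same triple used in the deterministic Theorem~\ref{thm:termination}, and it is insufficient here. An RFSA-consistency violation involves $row(s_1) \rowincl row(s_2)$ with $\lang(s_1ae)=1$ and $\lang(s_2ae)=0$, and this inclusion may be \emph{strict}: $row(s_1) \rowincls row(s_2)$. In that case the two rows are already distinct, so adding the column $\orb(ae)$ need not split any orbit, need not enlarge any support, and need not break any local symmetry; all three of your measures can stand still, and your ranking argument stalls. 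The paper's proof introduces a fourth, genuinely non-deterministic progress measure to cover exactly this case: the set $I = \{ (r,r') \mid r \rowincls r' \}$ of strict inclusion pairs of rows, which is orbit-finite because it is the image under $row \times row$ of an equivariant subset of $(S \cup S\Lext A) \times (S \cup S\Lext A)$. Refuting a strict inclusion removes the orbit of $(row(s_1),row(s_2))$ from $I$, and---since in this case no new rows appear---no new pairs enter $I$, so the number of orbits of $I$ strictly decreases. The same measure is what makes the counterexample argument work: since $N(S,E)$ need not be consistent with the table, Lemma~\ref{lem:minimal_wrt_table} is unavailable (as you correctly note), and the paper instead argues via \cite[Lemma~2]{NLTR} that adding $\suff(\orb(t))$ to $E$ must change the table, and that any such change either increases orbits of $(S \cup S\Lext A)/{\sim}$ or of $PR(S,E)$, decreases orbits of $I$, or changes dimension/local symmetries of an orbit of rows---each of which is bounded. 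Without $I$ (or an equivalent well-founded measure on the containment preorder of rows), your argument does not terminate on strict-inclusion violations, so this is the key idea you would still need to supply.
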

\begin{proof}
If the algorithm terminates, then it must return the canonical nominal RFSA for $\lang$ by Theorem~\ref{th:nlstar-rfsa}. We prove that a table can be made RFSA-closed and RFSA-consistent in finite time. This is similar to the proof of Theorem~2 and is inspired by the proof \cite[Theorem 3]{NLTR}.

	If the table is not RFSA-closed, we find a row $s \in S \Lext A$ such that $ row(s) \in PR(S,E) \setminus PR^{\top}(S,E)$. The algorithm then adds $\orb(s)$ to $S$. Since $s$ was nonequivalent to all upper prime rows, and thus from all the rows indexed by $S$, we find that $S \cup \orb(t) / {\sim}$ has strictly more orbits than $S / {\sim}$ (recall that $s \sim t \iff row(s) = row(t)$). This addition can only be done finitely-many times, because the number of orbits of $S / {\sim}$ is bounded, by Lemma~\ref{lem:rows-min-dfa}.

Now, the case of RFSA-consistency needs some additional notions.
Let $R$ be the (orbit-finite) nominal set of all rows, and let
$
	I = \{ (r,r') \in R \times R \mid r \rowincls r'\}
$
be the set of all inclusion relations among rows. The set $I$ is orbit-finite. In fact, consider
\[
	J = \{ (s,t) \in ( S \cup S\Lext A) \times ( S \cup S\Lext A) \mid row(s) \rowincls row(t) \}
\]
This set is an equivariant, thus orbit-finite, subset of $( S \cup S\Lext A) \times ( S \cup S\Lext A)$. The set $I$ is the image of $J$ via $row \times row$, which is equivariant, so it preserves orbit-finiteness.

Now, suppose the algorithm finds two elements $s_1, s_2 \in S$ with $row(s_1) \rowincl row(s_2)$ but $row(s_1 a)(e) = 1$ and $row(s_2 a)(e) = 0$ for some $a \in A$ and $e \in E$. Adding a column to fix RFSA-consistency may: {\bf C1)} increase orbits of $(S \cup S \Lext A) / {\sim}$, or; {\bf C2)} decrease orbits of $I$, or; {\bf C3)} decrease local symmetries/increase dimension of one orbit of rows. In fact, if no new rows are added ({\bf C1}), we have two cases.
\begin{itemize}
	\item 
	If $row(s_1) \rowincls row(s_2)$, i.e., $(row(s_1),row(s_2)) \in I$, then $row'(s_1) \not \rowincls row'(s_2)$, where $row'$ is the new table. Therefore the orbit of $(row'(s_1),row'(s_2))$ is not in $I$. Moreover, $row'(s) \rowincls row'(t)$ implies $row(s) \rowincls row(t)$ (as no new rows are added), so no new pairs are added to $I$. Overall, $I$ has less orbits ({\bf C2}).
	\item 
	If $row(s_1) = row(s_2)$, then we must have $row(s_1) = \pi \cdot row(s_1)$, for some $\pi$, because line~\ref{line:begin-nfa-closed} forbids equal rows in different orbits. In this case $row'(s_1) \neq \pi \cdot row'(s_1)$ and we can use part of the proof of Theorem~\ref{thm:termination} to see that the orbit of $row'(s_1)$ has bigger dimension or less local symmetries than that of $row(s_1)$ ({\bf C3}).
\end{itemize}
Orbits of $(S \cup S \Lext A)/{\sim}$ and of $I$ are finitely-many, by Lemma~\ref{lem:rows-min-dfa} and what we proved above. Moreover, local symmetries can decrease finitely-many times, and the dimension of each orbit of rows is bounded by the dimension of the minimal DFA state-space. Therefore all the above changes can happen finitely-many times.

We have proved that the table eventually becomes RFSA-closed and RFSA-consistent. Now we prove that a finite number of equivalence queries is needed to reach the final hypothesis automaton. To do this, we cannot use a suitable version of Lemma~\ref{lem:minimal_wrt_table}, because this relies on $N(S,E)$ being consistent with $(S,E)$, which in general is not true (see \cite{NLTR} for an example of this). We can, however, use an argument similar to that for RFSA-consistency, because the algorithm adds columns in response to counterexamples. Let $w$ the counterexample provided by the teacher. When $12'$ is executed, the table must change. In fact, by \cite[Lemma~2]{NLTR}, if it does not, then $w$ is already correctly classified by $N(S,E)$, which is absurd. We have the following cases: 
\begin{description}
	\item[E1:] orbits of $(S \cup S \Lext A)/{\sim}$ increase ({\bf C1}), or;
	\item[E2:] either: orbits in $PR(S,E)$ increase, or any of the following happens: orbits in $I$ decrease ({\bf C2}), local symmetries/ dimension of an orbit of rows change ({\bf C3}).
\end{description}
In fact, if {\bf E1} does not happen and $PR(S,E)$, $I$ and local symmetries/dimension of orbits of rows do not change, the automaton $\autom$ for the new table coincides with $N(S,E)$. But $N(S,E) = \autom$ is a contradiction, because $\autom$ correctly classifies $w$ (by \cite[Lemma~2]{NLTR}, as $w$ now belongs to columns), whereas $N(S,E)$ does not. Both {\bf E1} and {\bf E2} can only happen finitely-many times.
\end{proof}

\subsection{Query complexity}
We now give bounds for the number of equivalence and membership queries needed by \nNLStar. Let $n$ be the number of orbits of the minimal DFA $M$ for the target language and let $k$ be the dimension (i.e., the size of the maximum support) of its nominal set of states.
\begin{lemma}
The number of equivalence queries $E'_{n,k}$ is $O(n^2 f_\EAlph(k,k) + nk \log k)$.
\label{lem:nfa-eq-queries}
\end{lemma}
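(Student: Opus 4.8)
The goal is to bound the number of equivalence queries $E'_{n,k}$ made by \nNLStar. The plan is to mirror the counting argument used for the deterministic case in Lemma~\ref{lem:bound_eq}, but adapted to the monotone quantities that govern RFSA-closedness and RFSA-consistency as identified in the termination proof of Theorem~\ref{th:nom-nl-term}. Recall that there we showed every execution of line $12'$ (adding a counterexample as columns) forces the table to change in one of two ways: either {\bf E1}, the number of orbits of $(S \cup S \Lext A)/{\sim}$ strictly increases, or {\bf E2}, orbits in $PR(S,E)$ increase, or orbits in $I$ decrease, or a local symmetry is lost / a dimension grows for some orbit of rows. Since each equivalence query that receives answer {\bf no} triggers exactly one such change, it suffices to bound the total number of times each of these monotone events can occur.

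First I would treat the events already bounded in the DFA analysis. The number of orbits of rows is at most $n$ by Lemma~\ref{lem:rows-min-dfa}, so {\bf E1} and the ``$PR(S,E)$ orbit increase'' part of {\bf E2} together contribute $O(n)$. The dimension-growth and local-symmetry-loss events are bounded exactly as in Lemma~\ref{lem:bound_eq}: each orbit of rows can grow in dimension at most $k$ times and can lose local symmetries at most $k \log k$ times (via the subgroup-series bound on $S_k$), and there are at most $n$ orbits, giving $O(nk\log k)$. The genuinely new ingredient is the decrease of orbits of the inclusion set $I$ (event {\bf C2}/{\bf E2}). Here I would argue that $I \subseteq R \times R$ where $R$ is the orbit-finite set of all rows, which has at most $n$ orbits by Lemma~\ref{lem:rows-min-dfa}. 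Using the product-orbit counting from the preliminaries, the number of orbits of $R \times R$ is bounded by $n^2 f_\EAlph(k,k)$, since each orbit of $R$ has dimension at most $k$ and $f_\EAlph(k,k)$ counts the orbits of $\EAlph^{(k)} \times \EAlph^{(k)}$. As $I$ is an equivariant subset whose orbit count only ever strictly decreases when this event fires, it can decrease at most $n^2 f_\EAlph(k,k)$ times.

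Summing the contributions gives $E'_{n,k} = O(n^2 f_\EAlph(k,k) + nk\log k)$, matching the claimed bound: the first term comes from shrinking the inclusion poset $I$, and the second from the deterministic-style events on orbits, dimensions, and local symmetries.

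The main obstacle I anticipate is justifying rigorously that every \emph{negative} equivalence query forces one of the monotone events {\bf E1}/{\bf E2}, rather than merely some arbitrary change to the table. The termination proof establishes this by appeal to \cite[Lemma~2]{NLTR}: when $12'$ runs, if none of {\bf E1}/{\bf E2} occurred, the automaton for the new table would coincide with $N(S,E)$, yet the new table correctly classifies the counterexample $w$ while $N(S,E)$ does not---a contradiction. I would reuse exactly this argument to certify that each query is ``productive'' for one of the counted quantities, so that the three independent bounds can simply be added. A secondary subtlety is that, unlike the DFA case, the correct bound on $I$ must count orbits of an \emph{equivariant subset} of a product of orbit-finite sets; here I would lean on the surjection argument $\EAlph^{(k)} \times \EAlph^{(k)} \twoheadrightarrow O \times O'$ recalled in the preliminaries to cap orbit counts by $f_\EAlph(k,k)$ per pair of row-orbits, giving the quadratic-in-$n$ leading term.
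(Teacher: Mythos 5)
Your overall decomposition matches the paper's proof: you invoke the productivity of each failed equivalence query via \cite[Lemma~2]{NLTR} exactly as in Theorem~\ref{th:nom-nl-term}, you bound the events ``more orbits in $(S\cup S\Lext A)/{\sim}$'' and ``more orbits in $PR(S,E)$'' by $O(n)$ using Lemma~\ref{lem:rows-min-dfa}, and you bound dimension growth and loss of local symmetries by $O(nk\log k)$ via the subgroup-series argument of Lemma~\ref{lem:bound_eq}. However, there is a genuine gap in the one genuinely new part of the lemma: the count of how many times $I$ can lose orbits. You bound the number of decrease events by the maximum instantaneous number of orbits of $R\times R$, namely $n^2 f_\EAlph(k,k)$, on the grounds that $I$ ``only ever strictly decreases when this event fires.'' That inference is invalid, because $I$ is not monotone along a run: it \emph{grows} every time new orbits of rows enter the table (RFSA-closedness fixes, line $6'$, add orbits to $S$ between equivalence queries, and each fresh orbit of rows can create fresh inclusion pairs $r \rowincls r'$). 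For a quantity that both increases and decreases, the number of strict-decrease events is bounded by the initial size plus the \emph{total} amount of increase accumulated over the whole run, not by the largest size attained at any single moment; with interleaved growth and shrinkage the decrease count can exceed the latter.

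The paper closes exactly this hole with an amortized count of the total growth of $I$: whenever $h$ fresh orbits of rows are added to a table that already has $d$ orbits of rows, the number of orbits added to $I$ is at most $\bigl(dh + h(h-1)/2 + h\bigr) f_\EAlph(k,k)$ (pairs mixing a new and an old orbit, pairs of distinct new orbits, and pairs inside a single new orbit), it also notes that column additions which do not create new row orbits cannot add pairs to $I$, and since by Lemma~\ref{lem:rows-min-dfa} at most $n$ orbits of rows are ever created, the total number of orbits ever added to $I$ is at most $\bigl(n^2 + n(n-1)/2 + n\bigr) f_\EAlph(k,k)$. Each decrease event removes at least one orbit, so this quantity also bounds the number of decreases, and it is $O(n^2 f_\EAlph(k,k))$, recovering the stated bound. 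So your final asymptotics are right, but to make the argument sound you must replace the ``maximum size'' bound on $I$ by this total-increase (potential-function) accounting.
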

\begin{proof}
In the proof of Theorem~\ref{th:nom-nl-term}, we saw that equivalence queries lead to more orbits in $(S \cup S\Lext A)/{\sim}$, in $PR(S,E)$, less orbits in $I$ or less local symmetries/bigger dimension for an orbit. Clearly the first two ones can happen at most $n$ times. We now estimate how many times $I$ can decrease. Suppose $(S \cup S\Lext A)/{\sim}$ has $d$ orbits and $h$ orbits are added to it. Recall that, given an orbit $O$ of rows of dimension at most $m$, $f_\EAlph(m,m)$ is an upper bound for the number of orbits in the product $O \times O$. Since  the support of rows is bounded by $k$, we can give a bound for the number of orbits added to $I$: $dhf_\EAlph(k,k)$, for new pairs $r \rowincls r'$ with $r$ in a new orbit of rows and $r'$ in an old one (or viceversa); plus $(h(h-1)/2)f_\EAlph(k,k)$, for $r$ and $r'$ both in (distinct) new orbits; plus $hf_\EAlph(k,k)$, for $r$ and $r'$ in the same new orbit. 
Notice that, if $PR(S,E)$ grows but $(S \cup S\Lext A)/{\sim}$ does not, $I$ does not increase. By Lemma~\ref{lem:rows-min-dfa}, $h,d \leq n$, so $I$ cannot decrease more than $(n^2 + n(n-1)/2 + n)f_\EAlph(k,k)$ times. 

Local symmetries of an orbit of rows can decrease at most $k \log k$ times (see proof of Lemma~\ref{lem:bound_eq}), and its dimension can increase at most $k$ times. Therefore $n(k + \log k)$ is a bound for all the orbits of rows, which are at most $n$, by Lemma~\ref{lem:rows-min-dfa}. Summing up, we get the main result.  
\end{proof}
\begin{lemma}
Let $m$ be the length of the longest counterexample given by the teacher. Then the table has:
\begin{itemize}
	\item at most $n$ orbits in $S$, with words of length at most $n$;
	\item at most $mE'_{n,k}$ orbits in $E$, with words of length at most $mE'_{n,k}$.
\end{itemize}
\end{lemma}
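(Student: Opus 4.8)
The plan is to bound the two parts of the table separately, reusing the progress‑measure bookkeeping already set up in the termination proof (Theorem~\ref{th:nom-nl-term}) and in the equivalence‑query bound (Lemma~\ref{lem:nfa-eq-queries}). The only genuinely new ingredient is that, in \nNLStar, counterexamples are fed into the \emph{columns} rather than the rows, so the two halves of the table are driven by different mechanisms.

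First I would bound $S$. In \nNLStar\ rows are added only to repair RFSA-closedness (line $6'$), never in response to counterexamples. As recorded in the proof of Theorem~\ref{th:nom-nl-term}, each such addition of $\orb(s)$ with $row(s) \in PR(S,E)\setminus PR^\top(S,E)$ strictly increases the number of orbits of $S/{\sim}$. By Lemma~\ref{lem:rows-min-dfa} this number never exceeds $n$, so at most $n$ orbit-additions occur and $S$ has at most $n$ orbits. Since every added word has the form $sa$ with $s$ already in $S$, each addition lengthens the longest word by at most one, and starting from $\epsilon$ this yields words of length at most $n$.

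Next I would bound $E$. Columns are created for two reasons: to repair RFSA-consistency (line $9'$) and to absorb the suffixes of a counterexample (line $12'$). For the latter, each counterexample has length at most $m$, hence contributes at most $m$ orbits $\orb(\cdot)$ of suffixes; since the number of counterexamples is at most the number of equivalence queries, i.e.\ $E'_{n,k}$, these contribute at most $m E'_{n,k}$ orbits. For the former, I would reuse the case analysis {\bf C1}--{\bf C3} from the termination proof: every consistency repair consumes one unit of the same progress budget (more row-orbits, fewer orbits of $I$, or fewer local symmetries / larger dimension) that already bounds the equivalence queries in Lemma~\ref{lem:nfa-eq-queries}, so the consistency columns number at most $E'_{n,k}$ and are dominated by the counterexample contribution. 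Summing, $E$ has $O(m E'_{n,k})$ orbits. Finally, because $E$ is suffix-closed, a column word of length $\ell$ forces all of its $\ell$ nonempty suffixes---of pairwise distinct lengths, hence in distinct orbits---to lie in $E$; thus the maximal column length is bounded by the number of column orbits, i.e.\ $m E'_{n,k}$.

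The counting itself is routine; the one step requiring care is the claim that the consistency-driven columns are subsumed by $m E'_{n,k}$ rather than being a new source of growth. This is exactly where I would lean on the {\bf C1}--{\bf C3} analysis of Theorem~\ref{th:nom-nl-term}, checking that adding a distinguishing column $ae$ cannot increase any of the progress measures and strictly improves at least one of them, so that consistency columns inherit the same bound as $E'_{n,k}$.
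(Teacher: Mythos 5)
Your proposal follows essentially the paper's route on both halves: for $S$, closedness repairs strictly increase the number of orbits of $S/{\sim}$, capped at $n$ by Lemma~\ref{lem:rows-min-dfa}; for $E$, both column sources are charged against the progress measure behind Lemma~\ref{lem:nfa-eq-queries}. (One small point you leave implicit for $S$: your count needs the number of orbits of $S/{\sim}$ to never decrease between closedness repairs, i.e.\ that adding columns cannot identify rows---the paper states this explicitly.)

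The one place where your accounting does not deliver the stated bound is the final sum for $E$. You bound the counterexample columns by $m E'_{n,k}$ and the consistency columns \emph{separately} by $E'_{n,k}$, then sum to get $O(m E'_{n,k})$; that is $(m+1)E'_{n,k}$, strictly weaker than the claimed $m E'_{n,k}$. The paper closes this by a \emph{joint} count: a consistency repair and a counterexample each consume one unit of the \emph{same} budget, so the total number of column-adding events of either kind is at most $E'_{n,k}$, and each single event adds at most $m$ orbits (a consistency repair adds only one, and $1 \le m$), giving exactly $m E'_{n,k}$. This is precisely the sharpening of the step you flagged as delicate: it is not that consistency columns are ``dominated by'' the counterexample contribution, but that the two kinds of events share one budget and so never both exhaust it. Your suffix-closedness argument for the length bound (words of distinct lengths lie in distinct orbits, so the longest column is bounded by the number of column orbits) is a nice explicit rendering of what the paper leaves implicit here and uses openly in the deterministic case in Section~\ref{ssec:lstar-compl}.
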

\begin{proof}
By Lemma~\ref{lem:rows-min-dfa}, the number of orbits of rows indexed by $S$ is at most $n$. Now, notice that line~\ref{line:nfa-clos-witness} does not add $\orb(sa)$ to $S$ if $sa \in S$, and lines~\ref{line:nfa-addcol-ex}~and~\ref{line:nfa-addcol-cons} cannot identify rows, so $S$ has at most $n$ orbits. The length of the longest word in $S$ must be at most $n$, as $S = \{\epsilon\}$ when the algorithm starts, and line~$6'$ adds words with one additional symbol than those in $S$.

For columns, we note that both fixing RFSA-consistency and adding counterexamples increase the number of columns, but this can happen at most $E'_{n,k}$ times (see proof of Lemma~\ref{lem:nfa-eq-queries}). Each time at most $m$ suffixes are added to $E$.
\end{proof}
We compute the maximum number of cells as in Section~\ref{ssec:lstar-compl}. 
\begin{lemma}
The number of orbits in the lower part of the table, $S \Lext A$, is bounded by $n l f_\EAlph( pn, p)$.
\end{lemma}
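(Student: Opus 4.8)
The statement to prove is a bound on the number of orbits in the lower part $S \Lext A$ of the table, namely $n\,l\,f_\EAlph(pn,p)$. The plan is to mirror exactly the counting argument used for the analogous $\nLStar$ bound earlier in the excerpt (the lemma stating $(n + m E_{n,k})\,l\,f_\EAlph(p(n+m),p)$), but specialized to the sharper facts we have just established for $\nNLStar$: by the preceding lemma, $S$ has at most $n$ orbits and every word in $S$ has length at most $n$. The key simplification relative to the deterministic case is precisely that here counterexamples are added to columns (line $12'$), not to rows, so the length bound on $S$ collapses from $n+m$ down to $n$, and the orbit count from $n + m E_{n,k}$ down to $n$.

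\begin{proof}
By the previous lemma, $S$ has at most $n$ orbits, and every word in $S$ has length at most $n$. Hence any $s \in S$ contains at most $pn$ distinct atoms, where $p$ is the dimension of $A$: each symbol contributes at most $p$ atoms and there are at most $n$ symbols. When forming $S \Lext A$, we append a letter $a \in A$ to such an $s$; this extra letter contributes at most $p$ further distinct atoms, which may or may not already occur among the $pn$ atoms of $s$. Thus every orbit of a pair $(s,a)$, ranging over an orbit $O_S$ of $S$ and an orbit $O_A$ of $A$, is the image under an equivariant map of an orbit of $\EAlph^{(pn)} \times \EAlph^{(p)}$. By the general orbit-counting principle recalled in Section~\ref{sec:prelim}, the number of orbits of $O_S \times O_A$ is therefore bounded by $f_\EAlph(pn,p)$.

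It remains to count the ordered pairs $(O_S, O_A)$. The set $S$ has at most $n$ orbits and the alphabet $A$ has $l$ orbits, so there are at most $nl$ such pairs. Multiplying the bound $f_\EAlph(pn,p)$ for each pair by the number $nl$ of pairs yields that $S \Lext A$ has at most $n\,l\,f_\EAlph(pn,p)$ orbits.
\end{proof}

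The argument is entirely routine and follows the template already applied in Section~\ref{ssec:lstar-compl}; the only point requiring care is correctly tracking the atom budget so that the arguments to $f_\EAlph$ are $pn$ (for the word in $S$) and $p$ (for the single appended letter), rather than, say, overcounting by treating the extended word as fresh. The main obstacle, such as it is, is thus bookkeeping: ensuring that the substitution of the improved length bound $n$ (in place of $n+m$ from the deterministic case) is propagated consistently, which is justified precisely by the preceding lemma's claim that counterexamples no longer lengthen the rows.
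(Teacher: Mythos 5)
Your proof is correct and matches the paper's intended argument: the paper states this lemma without a separate proof, deferring to the computation ``as in Section~\ref{ssec:lstar-compl}'', and your write-up is exactly that argument with the sharper bounds (at most $n$ orbits in $S$, words of length at most $n$) substituted in, yielding the atom budget $pn$ for a row and $p$ for the appended letter. Nothing is missing; the bookkeeping you flag is indeed the only point of care, and you handle it as the paper does.
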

Then $C_{n,k,m}' = n (l f_\EAlph(p n, p) + 1) m E'_{n,k}$ is the maximal number of cells in the table.
This bound is polynomial in $n, m$ and $l$, but not in $k$ and $p$.
\begin{corollary}
	The number of membership queries is bounded by $C_{n,k,m}' f_\EAlph(p n, p m E'_{n,k})$.
\end{corollary}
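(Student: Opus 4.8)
The plan is to count membership queries by counting orbits, exactly following the deterministic analysis of Section~\ref{ssec:lstar-compl}. Since $T$ (equivalently $row$) is equivariant, its value on an entire orbit of a cell $\orb(s) \times \orb(e)$ is fixed once it is known on a single representative; hence the learner needs precisely one membership query per orbit occurring in a cell, and the total number of membership queries is the total number of orbits occurring across all cells of the table. Thus the bound factors as (number of cells) $\times$ (maximum number of orbits inside a single cell), and I would establish each factor separately.

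First I would recover the number of cells, which is the count $C_{n,k,m}'$ already introduced. A cell is indexed by an orbit representative of $S \cup S \Lext A$ together with an orbit representative of $E$. By the preceding lemmas, $S$ contributes at most $n$ orbits and the lower part $S \Lext A$ at most $n\, l\, f_\EAlph(pn,p)$ orbits, so $S \cup S \Lext A$ has at most $n\bigl(l\, f_\EAlph(pn,p)+1\bigr)$ orbits; and $E$ has at most $m E'_{n,k}$ orbits. Multiplying these two counts yields exactly $C_{n,k,m}'$, the number of cells.

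Next I would bound the number of orbits inside one cell. A cell is a product $\orb(s) \times \orb(e)$ of two single-orbit nominal sets, and as established in Section~\ref{sec:prelim} the surjection $f_{\orb(s)} \times f_{\orb(e)} \colon \EAlph^{(\dim s)} \times \EAlph^{(\dim e)} \twoheadrightarrow \orb(s) \times \orb(e)$ shows that the cell has at most $f_\EAlph(\dim s, \dim e)$ orbits. It then remains to bound the two dimensions uniformly: a row index has length at most $n$ (those in $S \Lext A$ carry one extra letter, which does not change the bound), so its support has at most $pn$ atoms; a column index has length at most $m E'_{n,k}$, so its support has at most $p\, m E'_{n,k}$ atoms. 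Using that $f_\EAlph$ is monotone in each argument, every cell therefore has at most $f_\EAlph(pn,\, p\, m E'_{n,k})$ orbits. Multiplying the cell count by this per-cell bound gives the claimed $C_{n,k,m}'\, f_\EAlph(pn,\, p\, m E'_{n,k})$.

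The only delicate point is the per-cell orbit count: one must pass through the surjection $\EAlph^{(k_1)} \times \EAlph^{(k_2)} \twoheadrightarrow \orb(s) \times \orb(e)$ and invoke monotonicity of $f_\EAlph$ in order to replace the variable dimensions $\dim s$, $\dim e$ of individual representatives by the uniform worst-case dimensions $pn$ and $p\, m E'_{n,k}$ derived from the length bounds on $S$ and $E$. Everything else is the routine product of the cell count $C_{n,k,m}'$ with this single uniform bound, in direct parallel with the deterministic corollary of Section~\ref{ssec:lstar-compl}.
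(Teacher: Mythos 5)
Your proof is correct and is essentially the argument the paper intends: the corollary is stated without explicit proof, but it is the direct analogue of the deterministic corollary in Section~\ref{ssec:lstar-compl}, obtained exactly as you do --- one membership query per orbit of each cell by equivariance of $T$, the cell count $C_{n,k,m}'$ from the preceding lemmas, and a uniform per-cell orbit bound via the surjection $\EAlph^{(k_1)} \times \EAlph^{(k_2)} \twoheadrightarrow \orb(s) \times \orb(e)$ together with monotonicity of $f_\EAlph$. The one point you flag as delicate (rows in $S \Lext A$ carry an extra letter, so their dimension is really $p(n+1)$ rather than $pn$) is a slack already present in the paper's own stated bound, so your treatment is consistent with it.
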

\section{Implementation and preliminary experiments}
\label{sec:impl}

Our algorithms for learning nominal automata operate on infinite sets of rows and columns, and hence it is  not immediately clear how to actually implement them on a computer. We have used NLambda~\cite{nlambda-paper}, our recently developed Haskell library designed to allow direct manipulation of infinite (but orbit-finite) nominal sets, within the functional programming paradigm. The semantics of NLambda is based on~\cite{BBKL12}, and the library itself is inspired by Fresh O'Caml~\cite{freshml}, a language for functional programming over nominal data structures with binding.

\subsection{NLambda}

NLambda extends Haskell with a new type {\tt Atoms}. Values of this type are atomic values that can be compared for equality and have no other discernible structure. They correspond to the elements of the infinite alphabet $\EAlph$ described in Section~\ref{sec:prelim}.

Furthermore, NLambda provides a unary type constructor {\tt Set}. This appears similar to the the {\tt Data.Set} type constructor from the standard Haskell library, but its semantics is markedly different: whereas the latter is used to construct finite sets, the former has {\em orbit-finite} sets as values. The new constructor {\tt Set} can be applied to a range of equality types that include {\tt Atoms}, but also the tuple type ${\tt (Atoms,Atoms)}$, the list type ${\tt [Atoms]}$, the set type ${\tt Set\ Atoms}$, and other types that provide basic infrastructure necessary to speak of supports and orbits. All these are instances of a type class {\tt NominalType} specified in NLambda for this purpose.

NLambda, in addition to all the standard machinery of Haskell, offers primitives to manipulate values of any nominal types $\tau,\sigma$:
\begin{itemize}
\item {\tt empty:Set\,$\tau$}, returns the empty set of any type;
\item {\tt atoms:Set\,Atoms}, returns the (infinite but single-orbit) set of all atoms;
\item ${\tt insert:\tau\to Set\ \tau\to Set\ \tau}$, adds an element to a set;
\item ${\tt map:(\tau\to\sigma)\to(Set\ \tau\to Set\ \sigma)}$, applies a function to every element of a set;
\item ${\tt sum:Set\ Set\ \tau\to Set\ \tau}$, computes the union of a family of sets;
\item ${\tt isEmpty:Set\ \tau\to Formula}$, checks whether a set is empty.
\end{itemize}
The type {\tt Formula} takes the role of a Boolean type. For technical reasons it is distinct from the standard Haskell type {\tt Bool}, but it provides standard logical operations such as 
\begin{align*}
\mathtt{ not}&:\mathtt{Formula\to Formula} \\
\mathtt{or}&:\mathtt{Formula\to Formula\to Formula},	
\end{align*}
as well as a conditional operator ${\tt ite:Formula\to \tau\to\tau\to\tau}$ that mimics the standard {\tt if} construction. It is also the result type of a built-in equality test on atoms, ${\tt eq:Atoms\to Atoms\to Formula}$.

Using these primitives, one builds more functions to operate on orbit-finite sets, such as a function to build singleton sets:
\begin{align*}
    &{\tt singleton}:\tau\to {\tt Set}\ \tau \\
    &{\tt singleton\ x}={\tt insert\ x\ empty}
\end{align*}
or a filtering function to select elements that satisfy a given predicate:
\begin{align*}
   &{\tt filter}:(\tau\to {\tt Formula})\to {\tt Set}\ \tau\to{\tt Set}\ \tau \\
   &{\tt filter\ p\ s} = {\tt sum}\ ({\tt map} (\lambda{\tt x}.{\tt ite}\ ({\tt p\ x})\ ({\tt singleton\ x})\ {\tt empty})\ {\tt s})
\end{align*}
or functions to quantify a predicate over a set:
\begin{align*}
  &{\tt exists},{\tt forall} : (\tau\to {\tt Formula})\to {\tt Set}\ \tau \to {\tt Formula} \\
  &{\tt exists\ p\ s} = {\tt not}\ ({\tt isEmpty}\ ({\tt filter\ p\ s})) \\
  &{\tt forall\ p\ s} = {\tt isEmpty}\ ({\tt filter}\ (\lambda x.{\tt not}\ ({\tt p\ x}))\ {\tt s})
\end{align*}
and so on. Note that these functions are written in exactly the same way as they would be for finite sets and the standard ${\tt Data.Set}$ type. This is not an accident, and indeed the programmer can use the convenient set-theoretic intuition of NLambda primitives.
For example, one could conveniently construct various orbit-finite sets such as the set of all pairs of atoms:
\[
	{\tt atomPairs} = {\tt sum}\ ({\tt map}\ (\lambda{\tt x}.{\tt map}\ (\lambda{\tt y}.({\tt x},{\tt y}))\ {\tt atoms})\ {\tt atoms}),
\]
the set of all pairs of {\em distinct atoms}:
\[
	{\tt distPairs} = {\tt filter}\ (\lambda({\tt x},{\tt y}).{\tt not}({\tt eq}\ {\tt x\ y}))\ {\tt atomPairs}
\]
and so on.

It should be stressed that all these constructions terminate in finite time, even though they formally involve infinite sets. To achieve this, values of orbit-finite set types ${\tt Set}\ \tau$ are internally not represented as lists or trees of elements of type $\tau$. Instead, they are stored and manipulated symbolically, using first-order formulas over variables that range over atom values. For example, the value of {\tt distPairs} above is stored as the formal expression:
\[
	\{(a,b) \mid a,b\in\mathbb{A},\ a\neq b\}
\]
or, more specifically, as a triple:
\begin{itemize}
\item a pair $(a,b)$ of ``atom variables'',
\item a list $[a,b]$ of those atom variables that are bound in the expression (in this case, the expression contains no free variables),
\item a formula $a\neq b$ over atom variables.
\end{itemize}
All the primitives listed above, such as {\tt isEmpty}, {\tt map} and {\tt sum}, are implemented on this internal representation. In some cases, this involves checking the satisfiability of certain formulas over atoms. In the current implementation of NLambda, an external SMT solver Z3~\cite{z3} is used for that purpose. For example, to evaluate the expression
$
	{\tt isEmpty\ distPairs},
$
NLambda makes a system call to the SMT solver to check whether the formula $a\neq b$ is satisfiable in the first-order theory of equality and, after receiving the affirmative answer, returns the value {\tt False}.

For more details about the semantics and implementation of NLambda, see~\cite{nlambda-paper}. The library itself can be downloaded from~\cite{nlambda-code}.

\subsection{Implementation of \nLStar\ and \nNLStar}

Using NLambda we implemented the algorithms from Sections~\ref{sec:nangluin}~and~\ref{sec:nondet}.
We note that the internal representation is slightly different than the one discussed in Section~\ref{sec:nangluin}.
Instead of representing the table $(S,E)$ with actual representatives of orbits, the sets are represented logically as described above.
Furthermore the control flow of the algorithm is adapted to fit in the functional programming paradigm.
In particular, recursion is used instead of a while loop.
In addition to the nominal adaptation of Angluin's algorithm $\nLStar$, we implemented a variant, $\nLStar_{col}$ which adds counterexamples to the columns instead of rows.

Target automata are defined using NLambda as well, using the automaton data type provided by the library.
Membership queries are already implemented by the library.
Equivalence queries are implemented by constructing a bisimulation (recall that bisimulation implies language equivalence),
where a counterexample is obtained when two DFAs are not bisimilar.
For nominal NFAs, however, we cannot implement a complete equivalence query as their language equivalence is undecidable.
We approximated the equivalence by bounding the depth of the bisimulation for nominal NFAs.
As an optimization, we use bisimulation up to congruence \cite{BonchiPous15}. Having an approximate teacher is a minor issue since in many applications no complete teacher can be implemented and one relies on testing \cite{Tomte15,BolligHLM13}.
For the experiments listed here the bound was chosen large enough for the learner to terminate with the correct automaton.

We remark that our algorithms seamlessly merge with teachers written in NLambda, but the current version of the library does not allow generating concrete membership queries for external teachers. We are currently working on a new version of the library in which this will be possible.

\subsection{Test cases}\label{sec:tests}

To provide a benchmark for future improvements, we tested our algorithms on a few simple automata described below. We report results in Table~\ref{tab:results}.
The experiments were performed on a machine with an Intel Core i5 (Skylake, 2.4 GHz) and 8 GB RAM.
\begin{table}
	\centering
	\small
	\begin{tabular}{l|rlS[table-format=3.2]S[table-format=3.2]rlS[table-format=3.2]}
		& \multicolumn{2}{c}{DFA} & {$\nLStar$ (s)} & {$\nLStar_{col}$ (s)} & \multicolumn{2}{c}{RFSA} & {$\nNLStar$ (s)} \\
		\hline
		$FIFO_0$   &   2 & 0 & 1.9  & 1.9  & 2  & 0 & 2.4 \\
		$FIFO_1$   &   3 & 1 & 12.9 & 7.4  & 3  & 1 & 17.3 \\
		$FIFO_2$   &   5 & 2 & 45.6 & 22.6 & 5  & 2 & 70.3 \\
		$FIFO_3$   &  10 & 3 & 189  & 107  & 10 & 3 & 476 \\
		$FIFO_4$   &  25 & 4 & 370  & 267  & 25 & 4 & 1230 \\
		$FIFO_5$   &  77 & 5 & 1337 & 697  & $\infty$  & $\infty$  & $\infty$ \\
		\hline
		$\lang_0$  &   2 & 0 & 1.3  & 1.4  & 2 & 0 & 1.4 \\
		$\lang_1$  &   4 & 1 & 29.6 & 4.7  & 4 & 1 & 8.9 \\
		$\lang_2$  &   7 & 2 & 229  & 23.1 & 7 & 2 & 84.7 \\
		\hline
		$\lang'_0$ &   3 & 1 & 4.4  & 4.9  & 3 & 1 & 11.3 \\
		$\lang'_1$ &   5 & 1 & 15.4 & 15.4 & 4 & 1 & 66.4 \\
		$\lang'_2$ &   9 & 1 & 46.3 & 40.5 & 5 & 1 & 210 \\
		$\lang'_3$ &  17 & 1 & 89.0 & 66.8 & 6 & 1 & 566 \\
		\hline
		$\lang_{eq}$ & n/a & n/a & n/a & n/a & 3 & 1 & 16.3 
	\end{tabular}
	\caption{Results of experiments.
		The column DFA (resp.\ RFSA) shows the number of orbits (left sub-column) and dimension (right sub-column) of the learnt minimal DFA (resp.\ canonical RFSA). We use $\infty$ when the running time is too high.}
	\label{tab:results}
\end{table}
\paragraph{Queue data structure.}
A queue is a data structure to store elements which can later be retrieved in a first-in, first-out order.
It has two operations: \texttt{push} and \texttt{pop}.
We define the alphabet $\Sigma_{FIFO} = \{ push(a), pop(a) \mid a \in \EAlph \}$.
The language $FIFO_n$ contains all valid traces of push and pop using a bounded queue of size $n$. The minimal nominal DFA for $FIFO_2$ is
\begin{center}
	\begin{automaton}
		\node[initial,accepting,state] (q0) {$q_0$};
		\node[state,accepting,right=55pt of q0] (q1) {$q_{1,x}$};
		\node[state,accepting,right=55pt of q1] (q2) {$q_{2, x, y}$};
		\node[state,below of=q0] (bot) {$\bot$};
		\path
		(q0) edge[bend left=15] node[trlab,above] {$push(x)$} (q1)
		(q1) edge[bend left=15] node[trlab,below] {$pop(x)$} (q0)
		(q1) edge[bend left=15] node[trlab,above] {$push(y)$} (q2)
		(q2) edge[bend left=15] node[trlab,below,xshift=-5pt] {$pop(x)$ to $q_{1,y}$} (q1)
		(q0) edge node[trlab,left] {$pop(\EAlph)$} (bot)
		(q1) edge[bend left] node[trlab,right] {$pop(\neq x)$} (bot)
		(q2) edge[bend left] node[trlab,right] {$pop(\neq x) / push(\EAlph)$} (bot)
		(bot) edge[loop left] node[trlab] {$\star$} (bot);
	\end{automaton}
\end{center}
The state reached from $q_{1,x}$ via $\xrightarrow{push(x)}$ is omitted: its outgoing transitions are those of $q_{2,x,y}$, where $y$ is replaced by $x$. Similar benchmarks appear in \cite{Tomte15,IsbernerHS14}.
\paragraph{Double word.}
$\lang_n = \{ ww \mid w \in \EAlph^n \}$ from Section~\ref{sec:overview}.
\paragraph{NFA.}
Consider the language $\lang_{eq} = \bigcup_{a \in \EAlph} \EAlph^\star a \EAlph^\star a \EAlph^\star$ of words where some letter appears twice.
This is accepted by an NFA which guesses the position of the first occurrence of a repeated letter $a$ and then waits for the second $a$ to appear.
The language is not accepted by a DFA \cite{BojanczykKL14}.
Despite this $\nNLStar$ is able to learn the automaton:
\begin{center}
	\begin{automaton}
		\node[initial,state] (q0) {$q_0'$};
		\node[state,right=55pt of q0] (q1) {$q_{1,x}'$};
		\node[state,accepting,right=55pt of q1] (q2) {$q_{2}'$};
		\path
		(q0) edge[bend left=15] node[trlab,above] {$x$} (q1)
		(q1) edge[bend left=15] node[trlab,below] {$\EAlph$} (q0)
		(q1) edge[bend left=15] node[trlab,above] {$x$} (q2)
		(q2) edge[bend left=15] node[trlab,below,xshift=-5pt] {$\EAlph$ to any $q_{2,x}'$} (q1)
		(q2) edge[bend left=45] node[trlab,below] {$\EAlph$} (q0)
		(q0) edge[loop above] node[trlab,above] {$\EAlph$} (q0)
		(q1) edge[loop above] node[trlab,above] {$\EAlph$} (q1)
		(q1) edge[loop below] node[trlab,left] {$y$ to $q_{2,y}'$} (q1)
		(q2) edge[loop above] node[trlab,above] {$\EAlph$} (q2);
	\end{automaton}
\end{center}
where the transition from $q_2'$ to $q_{1,x}'$ is defined as $\delta(q_2', a) = \{ q_{1,b}' \mid b \in \EAlph \}$.
\paragraph{$n$-last position.}
A prototypical example of regular languages which are accepted by very small NFAs is the set of words where a distinguished symbol $a$ appears on the $n$-last position \cite{BolligHKL09}.
We define a similar nominal language $\lang'_n = \bigcup_{a \in \EAlph} a \EAlph^\star a \EAlph^n$.
To accept such words non-deterministically, one simply guesses the $n$-last position.
This language is also accepted by a much larger deterministic automaton.
\section{Related work}\label{sec:related}

This section compares \nLStar\ with other algorithms from the literature. We stress that no comparison is possible for \nNLStar, as it is the first learning algorithm for non-deterministic automata over infinite alphabets.

The first one to consider learning automata over infinite alphabets was Sakamoto \cite{Sakamoto97}.
In his work the problem is reduced to $\LStar$ with some finite sub-alphabet.
The sub-alphabet grows in stages and $\LStar$ is rerun at every stage, until the alphabet is big enough to capture the whole language. In Sakamoto's approach, any learning algorithm can be used as a back-end.
This, however, comes at a cost: it has to be rerun at every stage, and each symbol is treated in isolation, which might require more queries.
Our algorithm \nLStar, instead, works with the whole alphabet from the very start, and it exploits its symmetry. An example is in Sections~\ref{sec:execution_example_original}~and~\ref{ssec:nom-learning}: the ordinary learner uses four equivalence queries, whereas the nominal one, using the symmetry, only needs three. Moreover, our algorithm is easier to generalize to other alphabets and computational models, such as non-determinism.

More recently papers appeared on learning register automata \cite{HowarSJC12,Cassel16}. Their register automata are as expressive as our deterministic nominal automata. The state-space is similar to our orbit-wise representation: it is formed by finitely many locations with registers. Transitions are defined symbolically using propositional logic. We remark that the most recent paper~\cite{Cassel16} generalizes the algorithm to alphabets with different structures (which correspond to different atom symmetries in our work), but at the cost of changing Angluin's framework. Instead of membership queries the algorithm requires more sophisticated tree queries. In our approach, using a different symmetry does not affect neither the algorithm nor its correctness proof. Tree queries can be reduced to membership queries by enumerating all $n$-types for some $n$ ($n$-types in logic correspond to orbits in the set of $n$-tuples). Keeping that in mind, their complexity results are roughly the same as ours, although this is hard to verify, as they do not give bounds on the length of individual tree queries. Finally, our approach lends itself better to be extended to other variations on $\LStar$ (of which many exist), as it is closer to Angluin's original work.

Another class of learning algorithms for systems with large alphabets is based on abstraction and refinement, which is orthogonal to the approach in the present paper but connections and possible transference of techniques are worth exploring in the future. In \cite{Tomte15}, the alphabet is reduced to a finite alphabet of abstractions, and $\LStar$ for ordinary DFAs over such finite alphabet is used. Abstractions are refined by counterexamples. Other similar approaches are \cite{HowarSM11,IsbernerHS13}, where global and local per-state abstractions of the alphabet are used, and \cite{MalerM14,MensM15}, where the alphabet can also have additional structure (e.g., an ordering relation). We can also mention \cite{BotincanB13}, a framework for learning symbolic models of software behavior.

In \cite{BergJR06,BergJR08}, authors cope with an infinite alphabet by running $\LStar$ (adapted to Mealy machines) using a finite approximation of the alphabet, which may be augmented when equivalence queries are answered. A smaller symbolic model is derived subsequently. Their approach, unlike ours, does not exploit the symmetry over the full alphabet. The symmetry allows our algorithm to reduce queries and to produce the smallest possible automaton at every step.

Finally we compare with results on session automata \cite{BolligHLM13}.
Session automata are defined over finite alphabets just like the work by Sakamoto.
However, session automata are more restrictive than deterministic nominal automata. For example, the model cannot capture an acceptor for the language of words where consecutive data values are distinct. This language can be accepted by a three orbit nominal DFA, which can be learned by our algorithm.

We implemented our algorithms in the nominal library NLambda as sketched before. Other implementation options include Fresh O'Caml~\cite{freshml}, a functional programming language designed for programming over nominal data structures with binding, and LOIS~\cite{lois,lois-popl}, a C++ library for imperative nominal programming. We chose NLambda for its convenient set-theoretic primitives, but the other options remain to be explored, in particular the low-level LOIS could be expected to provide more efficient implementations.
\section{Discussion and future work}

In this paper we defined and implemented extensions of several versions of \LStar and of \NLStar\ for nominal automata. 

We highlight two features of our approach:
\begin{itemize}
	\item it has strong theoretical foundations: the \emph{theory of nominal languages}, covering different alphabets and symmetries (see Section~\ref{sec:other-symms}); \emph{category theory}, where nominal automata have been characterized as \emph{coalgebras}~\cite{KozenMP015,CianciaM10} and many properties and algorithms (e.g., minimization) have been studied at this abstract level.
	\item it follows a generic pattern for transporting computation models and algorithms from finite sets to nominal sets, which leads to simple correctness proofs.
\end{itemize}
These features pave the way to several extensions and improvements.

Future work includes a general version of \nNLStar, parametric in the notion of side-effect (an example is non-determinism). Different notions will yield models with different degree of succinctness w.r.t.\ deterministic automata. The key observation here is that many forms of non-determinism and other side effects can be captured via the categorical notion of \emph{monad}, i.e., an algebraic structure, on the state-space. Monads allow generalizing the notion of composed and prime state: a state is composed whenever it is obtained from other states via an algebraic operation. Our algorithm \nNLStar\ is based on the \emph{powerset} monad, representing classical non-determinism. We are currently investigating a \emph{substitution} monad, where the operation is ``applying a (possibly non-injective) substitution of atoms in the support''. A minimal automaton over this monad, akin to a RFSA, will have states that can generate all the states of the associated minimal DFA via a substitution, but cannot be generated by other states (they are prime). For instance, we can give an automaton over the substitution monad that recognizes $\lang_2$ from Section~\ref{sec:overview}:
\begin{center}
\begin{automaton}
\node[initial,state] (q0) {$q_0$};	
\node[state,right of=q0] (qxp) {$q_x$};	
\node[state,right of=qxp] (qxyp) {$q_{xy}$};	
\node[state,right of=qxyp] (qys) {$q_{y}$};
\node[state,accepting,right of=qys] (q1) {$q_1$};
\node[state,right of=q1] (q2) {$q_2$};	
\path 
(q0) edge node[trlab,above] {$x$} (qxp)
(qxp) edge[bend right] node[trlab,above] {$y$} (qxyp)
(qxp) edge[bend left] node[trlab,above] {$x,[y \mapsto x]$} (qxyp)
(qxyp) edge node[trlab,above] {$x$} (qys)
(qxyp) edge[bend left,out=45,in=135] node[trlab,above] {$\neq x$} (q2)
(qys) edge node[trlab,above] {$y$} (q1)
(qys) edge[bend left] node[trlab,above] {$\neq y$} (q2)
(q1) edge node[trlab,above] {$A$} (q2)
(q2) edge[loop right] node[trlab,right] {$A$} (q2)
;
\end{automaton}	
\end{center}
Here $[y \mapsto x]$ means that, if that transition is taken, $q_{xy}$ (hence its language) is subject to $y \mapsto x$. In general, the size of the minimal DFA for $\lang_n$ grows more than exponentially with $n$, but an automaton with substitutions on transitions, like the one above, only needs $O(n)$ states.

In principle, thanks to the generic approach we have taken, all our algorithms should work for various kinds of atoms with more structure than just equality, as advocated in~\cite{BojanczykKL14}. Details, such as precise assumptions on the underlying structure of atoms necessary for proofs to go through, remain to be checked. For an implementation of automata learning over other kinds of atoms without compromising the generic approach, an extension of NLambda to those atoms will be needed, as the current version of the library only supports equality and totally ordered atoms.

The efficiency of our current implementation, as measured in Section~\ref{sec:tests}, leaves much to be desired. There is plenty of potential for running time optimization, ranging from improvements in the learning algorithms itself, to optimizations in the NLambda library (such as replacing the external and general-purpose SMT solver with a purpose-built, internal one, or a tighter integration of nominal mechanisms with the underlying Haskell language as it was done in~\cite{freshml}), to giving up the functional programming paradigm for an imperative language such as LOIS~\cite{lois,lois-popl}.

\paragraph{Acknowledgements.} We thank Frits Vaandrager and Gerco van Heerdt for useful comments and discussions. We also thank the anonymous reviewers.

\bibliographystyle{plainnat}
\bibliography{biblio}

\appendix
\section{Omitted proofs}

\subsection{Section \ref{sec:nangluin}}
\begin{lemma*}[{\bf \ref{lem:row_equiv}}]
	The relation $\sim$ is an equivariant equivalence relation.
\end{lemma*}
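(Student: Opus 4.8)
The plan is to observe that $\sim$ is by definition the pullback of equality along the map $row$, so the three equivalence-relation axioms come essentially for free, and then to reduce equivariance to the equivariance of $row$ itself, which is exactly property \textbf{(P2)} rephrased. Concretely, since $u \sim v$ means $row(u) = row(v)$ and equality on $2^E$ is reflexive, symmetric and transitive, the relation $\sim$ inherits reflexivity, symmetry and transitivity immediately; no calculation is required for this part.

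The substantive point is equivariance. First I would record the group action on the codomain $2^E$: for $f \in 2^E$ set $(\pi \cdot f)(e) = f(\pi^{-1} \cdot e)$, which is well-defined precisely because $E$ is equivariant, so $\pi^{-1} \cdot e \in E$. Property \textbf{(P2)}, namely $row(\pi \cdot s)(\pi \cdot e) = row(s)(e)$ — which itself holds because $row(s)(e) = \lang(se)$ and $\lang$ is equivariant, hence $\lang(\pi \cdot (se)) = \lang(se)$ — is then exactly the statement that $row(\pi \cdot s) = \pi \cdot row(s)$, i.e.\ that $row$ is an equivariant function. Given this, if $u \sim v$, so $row(u) = row(v)$, applying the action $\pi \cdot (-)$ to both sides yields $row(\pi \cdot u) = \pi \cdot row(u) = \pi \cdot row(v) = row(\pi \cdot v)$, whence $\pi \cdot u \sim \pi \cdot v$. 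This establishes that $\sim$ is equivariant.

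I do not expect a genuine obstacle here; the proof is routine. The only step that calls for a little care is the bookkeeping of the action on the function space $2^E$, and in particular the use of equivariance of $E$ to guarantee that reindexing a column by a permutation stays within $E$. Made explicit, for an arbitrary column $e' \in E$ one writes $e' = \pi \cdot e$ with $e = \pi^{-1} \cdot e' \in E$ and computes $row(\pi \cdot u)(e') = row(\pi \cdot u)(\pi \cdot e) = row(u)(e) = row(v)(e) = row(\pi \cdot v)(\pi \cdot e) = row(\pi \cdot v)(e')$ by \textbf{(P2)}, so that $row(\pi \cdot u) = row(\pi \cdot v)$ column-by-column. This reindexing is the whole content of the argument.
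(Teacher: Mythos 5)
Your proof is correct and is essentially the paper's own argument: the paper likewise observes that equivariance of $\lang$ makes $row$ equivariant and that $\sim$, being the kernel of $row$, is therefore an equivariant equivalence relation. Your write-up merely makes explicit the action on $2^E$ and the column-reindexing bookkeeping that the paper leaves implicit.
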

\begin{proof}
	Since the language $\lang$ is equivariant, so is $row$.
	The relation $\sim$ is defined as kernel of $row$ and is hence equivariant and an equivalence relation.
\end{proof}

\begin{lemma*}[{\bf \ref{lem:row_equiv}}]
	For all $u, v \in S$ we have that $u \equiv_\lang v$ implies $u \sim v$.
\end{lemma*}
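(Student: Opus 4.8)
The plan is to unfold both relations and observe that $\equiv_\lang$ is simply a stronger condition than $\sim$, quantifying over all of $A^\star$ rather than only over the columns $E$. Concretely, I would start from the hypothesis $u \equiv_\lang v$, which by definition of the nominal Myhill--Nerode congruence means
\[
	uw \in \lang \iff vw \in \lang \qquad \text{for all } w \in A^\star.
\]
Since $E \subseteq A^\star$, I can instantiate this universally quantified statement at each $w = e$ with $e \in E$, obtaining $ue \in \lang \iff ve \in \lang$ for every column $e$.

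Next I would connect this to $row$. Recall that $T$ is the equivariant function with $T(se) = \lang(se)$, i.e.\ $T(se) = 1$ exactly when $se \in \lang$, and that $row$ is its curried form, so $row(u)(e) = T(ue)$. The instantiated biconditional above says precisely that $T(ue)$ and $T(ve)$ take the same value (both $1$ or both $0$) for each $e \in E$. Hence $row(u)(e) = row(v)(e)$ for all $e \in E$, which is the definition of $row(u) = row(v)$, that is, $u \sim v$.

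There is essentially no obstacle here: the result is immediate once the definitions are spelled out. The only point worth flagging is the direction of the implication. The congruence $\equiv_\lang$ ranges over the full set of suffixes $A^\star$, whereas $\sim$ only compares behaviour on the finite (orbit-finite) set $E$ of columns currently in the table; thus $\equiv_\lang$ is the finer relation and restricting its defining quantifier to $E \subseteq A^\star$ yields $\sim$, giving exactly $u \equiv_\lang v \Rightarrow u \sim v$ and not the converse. No appeal to equivariance or to orbit-finiteness is needed for this particular statement, since it is a pointwise consequence of the definitions.
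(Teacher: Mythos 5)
Your proof is correct and follows exactly the same route as the paper's: unfold the Myhill--Nerode congruence $\equiv_\lang$, instantiate the quantifier over $A^\star$ at the columns $e \in E \subseteq A^\star$, and conclude $row(u)(e) = \lang(ue) = \lang(ve) = row(v)(e)$ for all $e \in E$, hence $u \sim v$. Your additional remarks on the direction of the implication and on equivariance being unnecessary are accurate but not needed.
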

\begin{proof}
	If $u \equiv_\lang$ then $\lang(uw) = \lang(vw)$ for all $w \in A^\star$.
	In particular $row(u)(e) = \lang(ue) = \lang(ve) = row(v)(e)$ for all $e \in E$.
	Hence $row(u) = row(v)$.
\end{proof}

\begin{lemma*}[{\bf \ref{lem:nerode_support}}]
	We have $\supp([u]_{\sim}) \subseteq \supp([u]_{\equiv_\lang}) \subseteq \supp(u)$ for all $u \in S$.
\end{lemma*}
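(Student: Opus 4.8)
The plan is to derive both inclusions from a single standard principle of nominal set theory: an equivariant function never enlarges support, i.e.\ if $f \colon X \to Y$ is equivariant between nominal sets then $\supp(f(x)) \subseteq \supp(x)$ for every $x \in X$. This is immediate from the definition of support: if $\pi$ fixes $\supp(x)$ pointwise then $\pi \cdot x = x$, hence $\pi \cdot f(x) = f(\pi \cdot x) = f(x)$, so $\supp(x)$ supports $f(x)$, giving $\supp(f(x)) \subseteq \supp(x)$.

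First I would set up the two quotient nominal sets $A^\star/{\equiv_\lang}$ and $A^\star/{\sim}$, with permutations acting by $\pi \cdot [u] = [\pi \cdot u]$. These are well-defined nominal sets because both $\equiv_\lang$ (from the preliminaries) and $\sim$ (Lemma~\ref{lem:row_equiv}) are equivariant, and a quotient of a nominal set by an equivariant equivalence relation is again a nominal set with \emph{equivariant} quotient map. Applying the support principle to $q_{\equiv} \colon A^\star \to A^\star/{\equiv_\lang}$, $u \mapsto [u]_{\equiv_\lang}$, immediately yields the right-hand inclusion $\supp([u]_{\equiv_\lang}) \subseteq \supp(u)$.

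For the left-hand inclusion, the key observation is that Lemma~\ref{lem:row_equiv} states precisely that $\equiv_\lang$ refines $\sim$: since $u \equiv_\lang v$ implies $u \sim v$, we get $[u]_{\equiv_\lang} \subseteq [u]_{\sim}$, so the map
\[
  \phi \colon A^\star/{\equiv_\lang} \to A^\star/{\sim}, \qquad [u]_{\equiv_\lang} \mapsto [u]_{\sim}
\]
is well-defined. It is equivariant because both induced actions are computed on representatives: $\phi(\pi \cdot [u]_{\equiv_\lang}) = [\pi \cdot u]_{\sim} = \pi \cdot \phi([u]_{\equiv_\lang})$. Applying the support principle to $\phi$ at the point $[u]_{\equiv_\lang}$ then gives $\supp([u]_{\sim}) = \supp(\phi([u]_{\equiv_\lang})) \subseteq \supp([u]_{\equiv_\lang})$, the remaining inclusion.

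The only genuinely delicate point is getting the direction of refinement, and hence of the canonical map $\phi$, right: because $\sim$ is the \emph{coarser} relation, its classes have the \emph{smaller} support, which at first glance looks backwards. This is reconciled exactly by the fact that $\phi$ runs from the fine quotient to the coarse one, so a $\sim$-class is the image of an $\equiv_\lang$-class under an equivariant map and therefore cannot have larger support. Beyond this conceptual check, nothing further is needed; everything reduces to support-monotonicity of equivariant maps.
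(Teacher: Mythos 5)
Your proof is correct and takes essentially the same route as the paper: the paper likewise obtains the left-hand inclusion from the equivariant map $[u]_{\equiv_\lang} \mapsto [u]_{\sim}$ (well-defined by Lemma~\ref{lem:row_equiv}) together with the fact that equivariant functions cannot enlarge supports (Lemma~4.8 in \cite{BojanczykKL14}, which you reprove inline), leaving the right-hand inclusion to the equivariant quotient map. One small correction: $\sim$ is only defined on $S \cup S \Lext A$, so the quotients should be taken of $S$ (as the paper does) rather than of all of $A^\star$.
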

\begin{proof}
	By Lemma~\ref{lem:nerode_approx} we have an equivariant function $S / {\equiv_\lang} \to S / {\sim}$ sending $[u]_{\equiv_\lang}$ to $[u]_{\sim}$.
	Since equivariant functions preserve supports (Lemma~4.8 in \cite{BojanczykKL14}) we have $\supp([u]_{\sim}) \subseteq \supp([u]_{\equiv_\lang})$.
\end{proof}

\subsection{Section \ref{sec:nondet}}
\begin{lemma*}[{\bf \ref{lem:can-rfsa}}]
	The canonical nominal RFSA is a proper nominal NFA.
\end{lemma*}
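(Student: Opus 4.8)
The plan is to establish two things: that the tuple $(Q,Q_0,F,\delta)$ is a genuine nominal NFA (orbit-finite state space, equivariant $Q_0$, $F$, $\delta$), and that it accepts exactly $\lang$ with every state recognizing a residual, so that it is in fact a (canonical) nominal RFSA. Orbit-finiteness of $Q = PR(\lang)$ was already recorded above, so I would first dispatch the equivariance requirements, which are routine. For $Q_0$ and $F$ this is immediate from equivariance of $\lang$ and from $\pi\cdot\epsilon=\epsilon$. For $\delta$ the only point to verify is the identity $\lder{(\pi\cdot a)}(\pi\cdot\lang_1)=\pi\cdot(\lder{a}\lang_1)$, which holds because the permutation action on $A^\star$ commutes with taking left derivatives; equivariance of $\delta$ then follows since $Q$ is equivariant and $\subseteq$ is preserved by permutations.

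The substance of the lemma is the semantic claim, and everything reduces to one decomposition fact, which I will call \textbf{(Key)}: every residual $r\in R(\lang)$ equals the union of the prime residuals it contains, $r=\Union\{p\in PR(\lang)\mid p\subseteq r\}$. Granting \textbf{(Key)}, I would prove by induction on $|w|$ that $\lang(p)=p$ for every state $p$ (where $\lang(p)$ is the language accepted from $p$, as in Section~\ref{subsec:nrfsa}): the base case is exactly the definition of $F$, and for $w=aw'$ the run-based NFA semantics together with the induction hypothesis give $aw'\in\lang(p)\iff w'\in\Union\{q\in PR(\lang)\mid q\subseteq\lder{a}p\}$, which by \textbf{(Key)} applied to the residual $\lder{a}p$ equals $\lder{a}p$, i.e.\ $aw'\in p$. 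Summing over $Q_0$ and applying \textbf{(Key)} once more to $r=\lang$ yields $\lang(\autom)=\Union\{p\in PR(\lang)\mid p\subseteq\lang\}=\lang$; since each state recognizes a (prime) residual, the automaton is a nominal RFSA.

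It remains to prove \textbf{(Key)}, and this is where the nominal setting genuinely departs from the classical one: a composed residual may have infinitely many components, so the classical ``keep splitting a composed residual until a prime is reached'' argument needs a well-foundedness guarantee that finiteness used to supply for free. Fixing $r$ and $w\in r$, I would consider $\mathcal{D}_w=\{s\in R(\lang)\mid w\in s\subseteq r\}$, which is nonempty. If $\mathcal{D}_w$ has an inclusion-minimal element $p$, then $p$ cannot be composed, since otherwise $w$ would lie in some residual $s\subsetneq p$ and $s\in\mathcal{D}_w$ would contradict minimality; hence $p$ is prime and $w\in p\subseteq r$, proving \textbf{(Key)}. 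So the hard part is establishing that strict inclusion $\subsetneq$ on $R(\lang)$ is well-founded, so that minimal elements exist.

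I would derive this well-foundedness from orbit-finiteness of $R(\lang)$ alone. Suppose, toward a contradiction, there were an infinite descending chain $r_0\supsetneq r_1\supsetneq\cdots$. Since $R(\lang)$ has finitely many orbits, two distinct members $r_i\supsetneq r_j$ lie in the same orbit, say $r_j=\sigma\cdot r_i$. Because $\supp(r_i)$ is finite, I can replace $\sigma$ by a finite-support, hence finite-order, permutation $\pi$ agreeing with $\sigma$ on $\supp(r_i)$, so that $\pi\cdot r_i=\sigma\cdot r_i=r_j$. Equivariance of $\subsetneq$ then produces the cycle $r_i\supsetneq\pi\cdot r_i\supsetneq\pi^2\cdot r_i\supsetneq\cdots\supsetneq\pi^{\mathrm{ord}(\pi)}\cdot r_i=r_i$, which is impossible for a strict order. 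Thus no infinite descending chain exists, minimal elements are guaranteed, and the argument goes through. This well-foundedness observation is really the only nominal ingredient; with it in hand the remainder is a verbatim nominal rendering of the classical RFSA proof.
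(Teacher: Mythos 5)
Your proposal is correct, but it covers substantially different ground from the paper's own proof, and the overlap is smaller than you might expect. The paper's proof of this lemma verifies \emph{only} the structural claims: that $Q = PR(\lang)$ is an orbit-finite nominal set (via a short conjugation argument: if $\pi\cdot\lang'$ were composed, applying $\pi^{-1}$ to its decomposition would exhibit $\lang'$ itself as composed, a contradiction), and that $Q_0$, $F$ and $\delta$ are equivariant --- exactly the part you dismiss as routine. It never proves the acceptance claim $\lang(\autom)=\lang$; that is asserted in the main text but left unproved in the appendix. Your proof inverts this emphasis: you cite orbit-finiteness of $PR(\lang)$ from the surrounding text rather than proving it (for a self-contained proof you should include the paper's equivariance argument, since establishing that fact is precisely part of what this lemma is responsible for), but you then supply the missing semantic content. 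Your decomposition fact \textbf{(Key)} --- every residual is the union of the prime residuals it contains --- together with the induction showing $\lang(p)=p$ for every state is the right way to prove acceptance, and your well-foundedness argument is the genuinely nominal ingredient: in an orbit-finite set, an infinite strictly descending chain would place two comparable elements in one orbit, and replacing the connecting permutation by a finite-support (hence finite-order) permutation agreeing with it on the finite support of the larger element yields a strict cycle $r_i\supsetneq\pi\cdot r_i\supsetneq\cdots\supsetneq\pi^{\mathrm{ord}(\pi)}\cdot r_i=r_i$, which is impossible. This is sound (finite partial injections of atoms extend to finite-support permutations, and such permutations have finite order), and it repairs the hole that the classical ``split a composed residual until a prime is reached'' argument acquires in the nominal setting, where a composed residual may have infinitely many components. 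In short, what you prove is a superset of what the paper proves, modulo the one structural sub-claim you borrowed instead of proving; your extra material would belong to a proof of the stronger main-text statement of the lemma.
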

\begin{proof}
	\hfill
	\begin{itemize}
		\item $Q$ is orbit-finite, as $PR(\lang)$ is an equivariant subset of $R(\lang)$, which is orbit-finite. We now show equivariance of $PR(\lang)$. Suppose $\lang' \in PR(\lang)$ but $\pi \cdot \lang' \notin PR(\lang)$. We have $\pi \cdot \lang' \in R(\lang)$, by equivariance of $R(\lang)$, so $\pi \cdot \lang'$ is the union of the finite residuals it strictly contains. By equivariance of $R(\lang)$, $\lang'$ is composed: its components are those of $\pi \cdot \lang'$ permuted via $\pi^{-1}$. This is a contradiction.
		\item $Q_0$ is an equivariant, orbit-finite subset of $PR(\lang)$. We have $\supp(\lang) = \emptyset$ and $\lang = \Union Q_0$ so $\supp(Q_0) = \emptyset$, i.e., $Q_0$ is equivariant. Being an an equivariant subset of $Q$, which is orbit-fine, it is itself orbit-finite.
		\item $F$ is equivariant. In fact, take $\lang' \in F$. By equivariance of $Q$, $\pi \cdot \lang' \in Q$, and we have $\epsilon = \pi \cdot \epsilon \in \pi \cdot \lang'$, so $\pi \cdot \lang' \in F$. Orbit-finiteness follows from the usual argument.
		\item $\delta$ is an equivariant relation, i.e. $(\lang_1,a,\lang_2) \in \delta$ implies $(\pi \cdot \lang_1, \pi \cdot a , \pi \cdot \lang_2) \in \delta$. Suppose $\lang_2 \subseteq \lder{a}\lang_1$, i.e., $v \in \lang_2$ implies $av \in \lang_1$. Take $w \in \pi \cdot \lang_2$. Then $\pi^{-1} \cdot w \in \lang_2$, so $a (\pi^{-1} w) \in \lang_1$, and then $(\pi \cdot a)w \in \pi \cdot \lang_1$, as required.
	\end{itemize}
\end{proof}

\begin{lemma*}[{\bf \ref{lem:pr-ns}}]
	$PR(S,E)$, $PR^\top(S,E)$ and $PR(S,E) \setminus PR^\top(S,E)$ are orbit-finite nominal sets.
\end{lemma*}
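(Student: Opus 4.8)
The plan is to realize all three sets as \emph{equivariant} subsets of a single fixed orbit-finite nominal set --- the set of all rows --- and then to invoke the elementary fact that an equivariant subset of an orbit-finite nominal set is itself orbit-finite (it is a union of some of the finitely many orbits). First I would fix $R = \{ row(t) \mid t \in S \cup S \Lext A\}$, the set of all rows occurring in the table. Since $S$ is orbit-finite and $A$ is orbit-finite, $S \Lext A$ and hence $S \cup S \Lext A$ are orbit-finite; as $row$ is equivariant and equivariant images of orbit-finite sets are orbit-finite, $R$ is orbit-finite. Likewise $row(S) = \{ row(s) \mid s \in S\}$ is an equivariant, orbit-finite subset of $R$.

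The crux is to show that \emph{primeness is an equivariant property}, so that $PR(S,E)$ is an equivariant subset of $R$. Recall that $\rowunion$ and $\rowincl$ are equivariant; the same argument gives equivariance of the strict inclusion $\rowincls$, and the (possibly infinite) join $\Rowunion$ is equivariant as well, since permutations act pointwise on rows and $(\Rowunion X)(e)=1$ holds iff $r(e)=1$ for some $r \in X$, whence $\Rowunion(\pi \cdot X) = \pi \cdot \Rowunion X$. Given this, for any $r = row(t) \in R$ and any $\pi$, equivariance of $R$ and of $\rowincls$ maps the set of strict lower bounds $\{ r' \in R \mid r' \rowincls r\}$ exactly onto $\{ r' \in R \mid r' \rowincls \pi \cdot r\}$, and $\pi$ commutes with $\Rowunion$; hence the compositeness equation $r = \Rowunion \{ r' \in R \mid r' \rowincls r\}$ holds iff the corresponding equation holds for $\pi \cdot r$. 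Thus $r$ is prime iff $\pi \cdot r$ is prime, so $PR(S,E)$ is equivariant, and being an equivariant subset of the orbit-finite $R$ it is orbit-finite.

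The remaining two sets follow by closure properties of equivariant sets. By definition $PR^\top(S,E)$ consists of the prime rows that also lie in the top part, i.e. $PR^\top(S,E) = PR(S,E) \cap row(S)$; an intersection of equivariant sets is equivariant, and it is a subset of the orbit-finite $R$, hence orbit-finite. For $PR(S,E) \setminus PR^\top(S,E)$, a set-theoretic difference of equivariant sets is again equivariant, and it is a subset of the orbit-finite $PR(S,E)$, hence orbit-finite.

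The main obstacle is the middle step: one must check that the defining union for compositeness, which in the nominal setting ranges over a possibly \emph{infinite} (though orbit-finite) family of component rows, genuinely transports along permutations. Once the equivariance of $\Rowunion$ and $\rowincls$ is nailed down, the transfer of primeness under $\pi$ is immediate, and the two closure arguments are routine.
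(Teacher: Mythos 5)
Your proof is correct and follows essentially the same route as the paper's: both show that primeness is an equivariant property (via equivariance of $\rowincls$ and the fact that permutations commute with $\Rowunion$), and then conclude orbit-finiteness of all three sets as equivariant subsets of the orbit-finite set of rows. The only differences are cosmetic: the paper argues equivariance of $PR(S,E)$ by contradiction where you argue it directly, and it dispatches $PR^\top(S,E)$ with ``a similar argument'' where you spell out the intersection $PR(S,E) \cap row(S)$.
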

\begin{proof}
	We prove that $PR(S,E)$ is equivariant. We know that the set $R(S,E)$ of rows of $(S,E)$ is an orbit-finite nominal set. Suppose by contradiction that $row(t) \in PR(S,E)$ but $\pi \cdot row(t) \in R(S,E) \setminus PR(S,E)$, so
	\[
	\pi \cdot row(t) = \Rowunion \{ row(s) \mid row(s) \rowincls \pi \cdot row(t) \} \enspace .
	\]
	Then, by equivariance of $\rowincls$, $row(t)$ is the union of $\pi^{-1} \cdot row(s)$, with $row(s)$ in the equation above, so it is composed, a contradiction. Since $PR(S,E)$ is an equivariant subset of an orbit-finite nominal set, it is orbit-finite.
	A similar argument gives $PR^\top(S,E)$ orbit-finite. Finally, $PR(S,E) \setminus PR^\top(S,E)$ is obtained by removing some orbits from $PR(S,E)$, so it is orbit-finite.
\end{proof}

\begin{lemma*}[{\bf \ref{lem:comp-prop}}]
The set $C(row(t))$ has the following properties:
\begin{enumerate}
	\item $\supp(C(row(t))) \subseteq \supp(row(t))$. 
	\item it is orbit-finite under the action of the group
	\[
		G_t = \{ \pi \in Perm(\EAlph) \mid \restr{\pi}{\supp(row(t))} = id \}
	\]
	of permutations fixing $\supp(row(t))$.
\end{enumerate}
\end{lemma*}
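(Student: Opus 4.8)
The plan is to first make explicit what $C(row(t))$ denotes: following the definition of RFSA-closedness, it is the set
\[
C(row(t)) = \{ row(s) \in PR^\top(S,E) \mid row(s) \rowincl row(t) \}.
\]
The two claims are tightly linked, so I would organise the argument so that showing $G_t$ preserves $C(row(t))$ yields both the support bound of part~(1) and the equivariance that part~(2) rests on.

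First I would establish part~(1) by proving that $\supp(row(t))$ supports $C(row(t))$. Take any $\pi \in G_t$, i.e.\ $\restr{\pi}{\supp(row(t))} = id$. Since $\supp(row(t))$ supports $row(t)$, we have $\pi \cdot row(t) = row(t)$. Now if $row(s) \in C(row(t))$, then $row(s) \in PR^\top(S,E)$ and $row(s) \rowincl row(t)$; applying $\pi$ and using equivariance of $PR^\top(S,E)$ (Lemma~\ref{lem:pr-ns}) and of $\rowincl$ gives $\pi \cdot row(s) \in PR^\top(S,E)$ and $\pi \cdot row(s) \rowincl \pi \cdot row(t) = row(t)$, hence $\pi \cdot row(s) \in C(row(t))$. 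This shows $\pi \cdot C(row(t)) \subseteq C(row(t))$, and applying the same reasoning to $\pi^{-1} \in G_t$ upgrades this to equality. Thus $\supp(row(t))$ supports $C(row(t))$, which is precisely $\supp(C(row(t))) \subseteq \supp(row(t))$, and as a by-product we have shown $C(row(t))$ is $G_t$-equivariant.

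For part~(2), I would note that $C(row(t))$ is a $G_t$-equivariant subset of the set of rows $R(S,E)$, which is orbit-finite under the full group $Perm(\EAlph)$. It therefore suffices to show that $R(S,E)$ remains orbit-finite when the action is restricted to the subgroup $G_t$ fixing the finite set $A = \supp(row(t))$, since a $G_t$-equivariant subset of a $G_t$-orbit-finite set is just a union of some of its $G_t$-orbits. I would prove this orbit by orbit: for a single $Perm(\EAlph)$-orbit $O$ of dimension $k$, the surjection $f_O \colon \EAlph^{(k)} \to O$ is also $G_t$-equivariant (as $G_t \subseteq Perm(\EAlph)$), so $O$ has no more $G_t$-orbits than $\EAlph^{(k)}$. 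Two tuples of $\EAlph^{(k)}$ lie in the same $G_t$-orbit precisely when, coordinatewise, they carry the same element of $A$ wherever that coordinate is in $A$, and both lie outside $A$ otherwise; this yields at most $(|A|+1)^k$ many $G_t$-orbits of $\EAlph^{(k)}$, a finite number. Since $R(S,E)$ is a union of finitely many $Perm(\EAlph)$-orbits, each splitting into finitely many $G_t$-orbits, it is $G_t$-orbit-finite, and hence so is $C(row(t))$.

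The genuinely substantive step is this splitting argument: everything in part~(1) is a direct unwinding of equivariance, whereas part~(2) requires the fact that restricting the action to the subgroup fixing a finite set of atoms preserves orbit-finiteness. I expect the cleanest presentation to route through the universal surjections $f_O$ from $\EAlph^{(k)}$, reducing the finiteness claim to the elementary count of $G_t$-orbits of distinct-atom tuples, which matches the same ``codomain cannot have more orbits than the domain'' principle used earlier in the paper.
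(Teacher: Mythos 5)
Your proof is correct and follows essentially the same route as the paper's: part~(1) by showing every $\pi \in G_t$ fixes $C(row(t))$ via equivariance of $\rowincl$ (and of $PR^\top(S,E)$), and part~(2) by reducing $G_t$-orbit-finiteness to the ambient orbit-finite set of rows, using the surjections $\EAlph^{(k)} \to O$ and a direct count of $G_t$-orbits of $\EAlph^{(k)}$. The only differences are cosmetic: you route through $R(S,E)$ rather than $PR^\top(S,E)$, you make the preservation of $PR^\top$-membership explicit, and your $(|\supp(row(t))|+1)^k$ bound is a slightly cleaner way of stating the same finiteness count.
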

\begin{proof}
\hfill

\begin{enumerate}
	\item 
We prove that permutations fixing $\supp(row(t))$ also fix $C(row(t))$. Take $\pi$ such that $\restr{\pi}{\supp(row(t))} = id$ and $row(s) \in C(row(t))$. We want to prove that $\pi \cdot row(s) \in C(row(t))$. We have $row(s) \rowincl row(t)$ and, by equivariance of $\rowincl$, $\pi \cdot row(s) \rowincl \pi \cdot row(t)$. But $\pi \cdot row(t) = row(t)$, as $\pi$ fixes $\supp(row(t))$, so $\pi \cdot row(s) \rowincl row(t)$, as required.

\item
In the previous point we proved that $C(row(t))$ is equivariant w.r.t.\ permutations in $G_t$. To show that it is orbit-finite, it is enough to show that $PR^{\top}(S,E)$ is orbit-finite w.r.t.\ $G_t$; orbit-finiteness of $C(row(t))$ will follow by equivariance. 

We call $G_t$-orbit (resp.\ $Perm(\EAlph)$-orbit) an orbit w.r.t.\ the group of permutations $G_t$ (resp.\ $Perm(\EAlph)$).
Each $G_t$-orbit of $PR^{\top}(S,E)$ is clearly included in one of its ($Perm(\EAlph)$-)orbits, so we show that each orbit of the latter type contains finitely-many orbits of the former type. Every $Perm(\EAlph)$-orbit of dimension $k$ in $PR^{\top}(S,E)$ is image of $\EAlph^{(k)}$ via an equivariant surjective function. Such functions can only decrease the number of orbits, so we prove that $\EAlph^{(k)}$ is orbit-finite w.r.t.\ $G_t$. Take $(a_1,\dots,a_k) \in \EAlph^{(k)}$. Then $(a_1,\dots,a_k)$ and $(a_1',\dots,a_k')$ are in the same $G_t$-orbit if and only if $a_i \in \supp(row(t))$ implies $a_i = a'_i$. Hence, the number of $G_t$-orbits of $PR^{\top}(S,E)$ contained in one of its $Perm(\EAlph)$-orbit is at most the number of subsets of size $k$ of a set of size $|\supp(row(t))|$.
\end{enumerate}
\end{proof}

\begin{lemma*}[{\bf \ref{lem:nlstarwelldef}}]
	The conjecture automaton $N(S,E)$ (see Definition~\ref{def:nfa-conj}) is a nominal NFA.
\end{lemma*}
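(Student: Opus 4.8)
The plan is to check, one by one, the four defining requirements of a nominal NFA for $N(S,E) = (Q,Q_0,F,\delta)$: that $Q$ is an orbit-finite nominal set, that the alphabet $A$ is orbit-finite (which holds by our standing assumption on $A$), and that $Q_0$, $F$ and $\delta$ are equivariant subsets/relations of the appropriate products. Along the way I must also verify that $\delta$ is \emph{well-defined}, since its value at a state $row(s)$ is specified through a representative word $s \in S$, and distinct words may index the same row.

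Orbit-finiteness of $Q = PR^\top(S,E)$ is immediate from Lemma~\ref{lem:pr-ns}, so $Q$ inherits the permutation action and it remains to transport the defining conditions of $Q_0$, $F$ and $\delta$ along an arbitrary permutation $\pi$. For this I would use repeatedly that $row$ is equivariant, i.e.\ $\pi \cdot row(s) = row(\pi \cdot s)$ (property (P2)), and that $\rowincl$ is equivariant, both recorded earlier. For $Q_0 = \{r \in Q \mid r \rowincl row(\epsilon)\}$: if $r \rowincl row(\epsilon)$ then $\pi \cdot r \rowincl \pi \cdot row(\epsilon) = row(\pi \cdot \epsilon) = row(\epsilon)$, using $\pi \cdot \epsilon = \epsilon$; together with $\pi \cdot r \in Q$ this yields $\pi \cdot r \in Q_0$. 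For $F = \{r \in Q \mid r(\epsilon) = 1\}$ (recall $\epsilon \in E$ always): property (P2) gives $(\pi \cdot r)(\epsilon) = (\pi \cdot r)(\pi \cdot \epsilon) = r(\epsilon)$, so $r \in F$ forces $\pi \cdot r \in F$.

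The main obstacle is $\delta$, and specifically its well-definedness, which is exactly where RFSA-consistency enters. Since $\delta(row(s),a) = \{r \in Q \mid r \rowincl row(sa)\}$ depends on $s$ only through $row(sa)$, it suffices to show that $row(s_1) = row(s_2)$ implies $row(s_1 a) = row(s_2 a)$ for every $a \in A$. From $row(s_1) = row(s_2)$ we obtain both $row(s_1) \rowincl row(s_2)$ and $row(s_2) \rowincl row(s_1)$; RFSA-consistency then yields $row(s_1 a) \rowincl row(s_2 a)$ and the reverse inclusion, hence equality. With $\delta$ well-defined, its equivariance follows as above: from $r \rowincl row(sa)$ we get $\pi \cdot r \rowincl row((\pi \cdot s)(\pi \cdot a))$, that is $\pi \cdot r \in \delta(\pi \cdot row(s), \pi \cdot a)$, and $\pi \cdot r \in Q$ by equivariance of $Q$. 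Assembling these facts shows that $(Q,Q_0,F,\delta)$ satisfies every clause of the definition, so $N(S,E)$ is a nominal NFA. The only genuinely nontrivial ingredients are Lemma~\ref{lem:pr-ns} and the appeal to RFSA-consistency for well-definedness; everything else is a routine verification of equivariance.
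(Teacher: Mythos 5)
Your proof is correct and follows the same overall decomposition as the paper's: orbit-finiteness of $Q$ via Lemma~\ref{lem:pr-ns}, then equivariance of $Q_0$, $F$ and $\delta$ using equivariance of $row$ and $\rowincl$. Two local differences are worth noting. For $Q_0$, the paper does not argue pointwise as you do; it shows $\supp(Q_0)=\emptyset$ by combining $\supp(row(\epsilon)) \subseteq \supp(\epsilon) = \emptyset$ (Lemma~\ref{lem:nerode_support}) with Lemma~\ref{lem:comp-prop}(\ref{supp}). Your direct computation ($r \rowincl row(\epsilon)$ implies $\pi \cdot r \rowincl \pi \cdot row(\epsilon) = row(\epsilon)$) is more elementary and equally valid. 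More substantially, you verify that $\delta$ is well defined on rows --- that $row(s_1) = row(s_2)$ forces $row(s_1 a) = row(s_2 a)$, via RFSA-consistency and antisymmetry of $\rowincl$ --- a point the paper's proof passes over entirely: it only establishes equivariance of $\delta$, taking the defining equation of Definition~\ref{def:nfa-conj} at face value. Your extra step is exactly where RFSA-consistency is needed for the conjecture automaton to make sense as stated, so including it makes the lemma self-contained rather than implicitly deferring to the analogous argument for ordinary RFSAs; everything else in your argument matches the paper's proof.
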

\begin{proof}
	Let $N(S,E) = (Q,Q_0,F,\delta)$. Then:
	\begin{itemize}
		\item $Q = PR(S,E)$ is an orbit-finite nominal set (Lemma~\ref{lem:pr-ns});
		\item $Q_0$ is an equivariant, thus orbit-finite, subset of $Q$. In fact $\supp(row(\epsilon)) \subseteq \supp(\epsilon) = \emptyset$ (Lemma~\ref{lem:nerode_support}), so $\supp(Q_0) = \emptyset$ by Lemma~\ref{lem:comp-prop}(\ref{supp}), i.e., it is equivariant.
		\item $F$ is an orbit-finite, thus orbit-finite, subset of $Q$. Suppose $r \in F$, i.e., $r(\epsilon) = 1$. Then
		\[
		(\pi \cdot r)(\epsilon) = \pi \cdot r( \pi^{-1} \cdot \epsilon) = \pi \cdot r(\epsilon) = 1
		\]
		where the second equation uses the action of permutations over  functions $2^E$. Therefore $(\pi \cdot r) \in F$. 
		\item The transition relation $\delta$ is equivariant. To prove this, we need to show that $(row(s),a,r') \in \delta$ implies $(\pi \cdot row(s),\pi(a),\pi \cdot r') \in \delta$. By definition, $(row(s),a,r') \in \delta$ only if $r' \rowincl row(sa)$. Using equivariance of $\rowincl$, $\pi \cdot r' \rowincl \pi \cdot row(sa)$ and, using equivariance of $row$, $\pi \cdot row(sa) = row((\pi \cdot s)\pi(a))$. Therefore $\pi \cdot r' \rowincl row((\pi \cdot s)\pi(a))$, which gives $(\pi \cdot row(s),\pi(a),\pi \cdot r') \in \delta$.
	\end{itemize}
\end{proof}

\end{document}